\providecommand{\U}[1]{\protect\rule{.1in}{.1in}}
\newtheorem{theorem}{Theorem}
\newtheorem{corollary}[theorem]{Corollary}
\newtheorem{definition}[theorem]{Definition}
\newtheorem{lemma}[theorem]{Lemma}
\newtheorem{proposition}[theorem]{Proposition}
\newenvironment{proof}[1][Proof]{\noindent\textbf{#1.} }{\ \rule{0.5em}{0.5em}}
\begin{document}

\title{The massive Dirac equation in the Kerr-Newman-de
Sitter and Kerr-Newman black hole spacetimes}
\author{G. V. Kraniotis \footnote{email: gkraniot@cc.uoi.gr}\\
University of Ioannina, Physics Department \\ Section of
Theoretical Physics, GR- 451 10, Greece \\
}

 \maketitle

\begin{abstract}
Exact solutions of the Dirac general relativistic equation that describe the dynamics of a massive, electrically charged  particle with half-integer spin in the curved spacetime geometry of an electrically charged, rotating Kerr-Newman-(anti) de Sitter black hole are investigated. We first, derive the Dirac equation in the Kerr-Newman-de Sitter (KNdS) black hole background using a generalised Kinnersley null tetrad in the Newman-Penrose formalism. Subsequently in this frame and in the KNdS black hole spacetime, we prove the separation of the Dirac equation into ordinary differential equations for the radial and angular parts. Under specific transformations of the independent and dependent variables we prove that the transformed radial equation for a massive charged spin $\frac{1}{2}$ fermion in the background KNdS black hole constitutes a highly non-trivial generalisation of Heun's equation since it possess five regular finite singular points. Using a Regge-Wheeler-like independent variable we transform the radial equation in the KNdS background into a Schr\"{o}dinger like differential equation and investigate its asymptotic behaviour near the event and cosmological horizons.
For the case of a massive fermion in the background of a Kerr-Newman (KN) black hole we first prove that the radial and angular equations that result from the separation of Dirac's equation reduce to the generalised Heun differential equation (GHE). The local solutions of such GHE are derived and can be described by holomorphic functions whose power series coefficients are determined by a four-term recurrence relation. In addition using asymptotic analysis we derive the solutions for the massive fermion far away from the KN black hole and the solutions near the event horizon . The determination of the separation constant as an eigenvalue problem in the KN background is investigated. Using the aforementioned four-term recursion formula we prove that in the non-extreme KN geometry there are no bound states with $\omega^2<\mu^2$, where $\omega$ and $\mu$ are the energy and mass of the fermion respectively.

\end{abstract}

\section{Introduction}

The problem of massive perturbations in the strong gravity background of a black hole is fascinating and fundamentally significant problem, as has been demonstrated recently for the case of scalar perturbations \cite{Kraniotis1},\cite{Bezerra},\cite{WU}  which has acquired extra impetus after the discovery of gravitational waves in Nature \cite{GW150914},\cite{GW151226}.

In particular in \cite{Kraniotis1} the Klein-Gordon-Fock equation for a massive charged particle in the background of the Kerr-Newman-(anti) de Sitter black hole was separated. The resulting radial and angular equations, for particular values of the inverse Compton wavelength in terms of the cosmological constant $\Lambda$, were reduced to a Heun form. Then both Heun equations were solved in terms of an infinite series of hypergeometric functions using the idea of augmented convergence .
In this setup, the solution converges inside the ellipse with foci at two of the finite regular singularities and passing through the third finite regular singular point with the possible exception of the line connecting the two foci. This method of constructing a Heun function \footnote{The idea of augmented convergence was first introduced in \cite{Erdelyi}-\cite{Svartholm}.} offers a perspective on the notorious connection problem that we will discuss in more detail in this paper. In particular, the solution obtained for the massive radial Heun equation in KNdS spacetime converges in the ellipse with foci at the event and Cauchy horizons of the black hole (BH)\cite{Kraniotis1}.
We remind the reader that Heun's differential equation (HE)  belongs to the class of Fuchsian differential equations (FDE), since it is the most general linear differential equation with \textit{four regular singular points} \cite{KARLHEUNmunich},\cite{RONVEAUX}.
In \cite{Kraniotis1} the concept of false singularity was discussed, which reduces the solution to hypergeometric functions for certain values of the physical parameters in the equation.
The general case, however, is that the solution of the KGF equation with the method of separation of variables, for a rotating charged cosmological black hole, results in FDE for the radial and angular parts which for most of the parameter space contain more than three finite singularities and thereby generalise the Heun differential equations.
In the absence of the cosmological constant the solutions obtained were given in terms of confluent Heun functions \cite{Kraniotis1}.

It is of importance to extend our knowledge of the issue of massive perturbations in black hole curved backgrounds and go beyond the scalar degrees of freedom by  examining massive particle perturbations with non-zero spin in the strong gravity space-time of a black hole.

More specifically it is the purpose of this paper to investigate the general relativistic Dirac equation, in particular its separation and its analytic solutions, in the important Kerr-Newman-(anti) de Sitter and Kerr-Newman black hole backgrounds. These spacetime backgrounds constitute the most general exact solutions of the theory of general relativity that describe rotating, massive and electrically charged black holes.

The geometrisation of the dynamics of relativistic spin-$\frac{1}{2}$ fermions \footnote{Dirac had presented a special relativistic generalisation of the Schr\"{o}dinger equation in \cite{PMDiracI}.} using the concept of parallel displacement in general relativity was initiated in the work of \cite{V. Fock}. It acquired a new boost after the fundamental discovery of Chandrasekhar that the massive Dirac equation can be separated in Kerr geometry into time and azimuthal angle modes and rewritten in terms of radial wave equations and coupled angular ordinary differential equations \cite{Chandrasekhar}\footnote{See \cite{Saul} for a nice recent account.} . In this framework, the Newman-Penrose formalism was indeed proved to be a fundamental tool for the study of spin fields in curved space-time \cite{NPformalism},\cite{spinors}. The massive Dirac equation in the Kerr-Newman (KN) background was separated by D.Page \cite{page} using the Kinnersley tetrad \cite{WKinnersley}, while the separated radial part of the massive Dirac equation in the KN and Reisser-Nordstr\"om metrics using a Carter tetrad was studied in \cite{Baticr}. Further investigations include the separation of the Dirac equation in the Kerr-de Sitter space-time \cite{Khanal} as well as the study of gravitational perturbations in the KN metric \cite{Bose}. The massive Dirac equation has been investigated in the Hamiltonian formulation in various axisymmetric geometries in \cite{Finster}-\cite{Belgiorno}. Despite all these efforts though, as we mentioned earlier, little is known about the theory of explicit exact solutions of the separated ODEs for the massive fermion in such black hole backgrounds.

Remarkably enough as we shall see in the present paper, the separation of the massive Dirac equation in the Kerr-Newman-(anti) de Sitter and Kerr-Newman black hole backgrounds results in radial and angular differential equations which also generalise the HE. In particular we first prove  that the transformed radial equation for a massive charged spin $\frac{1}{2}$ fermion in the background KNdS black hole possess five regular finite singular points and it thus constitute a highly non-trivial generalisation of Heun's equation.
In the KN black hole background both radial and angular equations lead to the mathematical structure of the \textit{generalised Heun equation} (GHE) in which the singularity at infinity constitutes an irregular singular point. The GHE was introduced and discussed for the first time in the mathematical literature in \cite{RSCHAFKEDSCHMIDT}.
The theory of the solutions of the GHE is of even richer mathematical structure and it is of paramount importance for solving the radial and angular equations for a massive fermion in the curved KN black hole background.

In this respect we note that in the theory of complex differential equations a study of the \textit{global} behaviour of solutions is one of the most interesting and difficult problems. Specifically, such a global problem for linear ordinary differential equations (ODEs)  consists in finding explicit connection relations between the local solutions at two different (regular or singular) points $z_0$ and $z_1$ \cite{OLVER}. This is referred to as a two point connection problem. Interestingly enough, such two point connection problems i) for local solutions at two regular singularities ii) between  a regular singularity and an irregular singularity have been investigated for the important case of the GHE differential equation in the works \cite{RSCHAFKEDSCHMIDT},\cite{RegIrregSIAMJ}.
This is of physical interest as we shall see: for example for the GHE radial ODE that results from separation of the massive Dirac equation in the Kerr-Newman background a two point connection problem between the local solution around the regular singularity at the event horizon and the solution in the vicinity of the irregular singularity at $\infty$ can be set up and solved.

The material of this paper is organised as follows. In section $2$ using a generalised Kinnersley null tetrad in the Newman-Penrose formalism in the KN(a)dS spacetime we calculate the corresponding Ricci-rotation coeffients (\ref{cosmoconsricci}) and the Dirac equations (\ref{PaulDKNdS}), (\ref{DiracKNdSspacetime}). Subsequently, using the separation ansatz (\ref{ansatzseperation}), we prove the separability of the Dirac equation in the KNdS spacetime and the corresponding radial and angular ODEs (\ref{ALGMEINEDIR1})-(\ref{ALGEMEINEDIR4}).
In \textsection \ref{fundamentalKNdS} we prove by applying appropriate transformations of the dependent and independent variables that the separated massive radial equation for the spin half charged fermion in KNdS spacetime is a highly non-trivial generalisation of the Heun equation since it possess five regular finite singularities-see equation (\ref{MasDiracGHrKNdS}).
Furthermore in \ref{AsymptotsLambdaRegge} we derive the asymptotic forms of the radial equation in the KNdS black hole background. Using a Regge-Wheeler-like coordinate, eqns. (\ref{ReggeW}),(\ref{LambdaRWcoord}) we first transform the massive radial equation in the presence of the cosmological constant into eqn.(\ref{tortoiseDELambda}). We then derive the asymptotics of (\ref{tortoiseDELambda}) near the event horizon and near the cosmological horizon, equations (\ref{eventhorizonKNdS}) and (\ref{cosmohorizonLambda}) respectively.
In section \ref{KNDiracSection} we derive in the zero cosmological constant limit the separated radial and angular ODEs. In sections \ref{anguGHE} and \ref{AKTINAGHE} we transform both radial and angular ODEs into the generalised Heun form, eqns.(\ref{GHeunMunchenPDIRACmas}),(\ref{GONGHE}) and (\ref{GHERADIALKN}) respectively.
In section \ref{SolvingGHE} we discuss the analytic local solutions of the GHE. As is shown there, the coefficients in the series expansion around the simple singularities obey a four-term recursion relation. From the various local solutions one can derive various important identities.
In section \ref{AsymptoticsGHEr} making use of the fact that the radial GHE in KN spacetime has an irregular singularity at $\infty$ we derive the asymptotic solutions near infinity i.e. far away from the black hole. In section \ref{CONNEIRRREGGLOBAL} following the work in \cite{RSCHAFKEDSCHMIDT} we investigate the global solution associated with the connection problem between the regular singularity at $1$ and the irregular one at $\infty$ for the GHE which after appropriate transformations of the dependent and the independent variables transforms to a connection relation in which the connection coefficients can be computed. Such connection coefficients involve besides the index parameters and the parameter $\alpha$ of the GHE the expansion coefficients in the transformed local solution. Again these coefficients are proved to obey a four-term recursion relation. This is of importance for constructing a global solution relating the event horizon with $\infty$ in the case of the KN black hole. In section \ref{Gargantua} we derive the near horizon limit solution for the KN black hole using the local solution around the regular singularity associated with the event horizon.
The theme of \textsection \ref{statheradiaxlambda} is the determination of the separation constant $\lambda$ in KN-spacetime that also appears nontrivially in the radial equation (\ref{GHERADIALKN}). Following the approach in \cite{HSchmidt} we write the angular equation as an eigenvalue matrix equation. Using functional analysis techniques and linear operator perturbation theory \cite{KatoLPT} we prove that the $\lambda$-eigenvalues  depend holomorphically on the two physical parameters $\nu,\xi$ defined in \textsection \ref{statheradiaxlambda} and their partial derivatives with respect to them are bounded by trigonometric functions and therefore by $1$ (see eqn.(\ref{estimatesgonia}))\footnote{An essential ingredient in obtaining these estimates is the computation of the operator norm of a matrix. }.
The eigenvalues obey a particular partial differential equation (\ref{eigenvalueeqn}). In  \textsection \ref{pdeeigenl} we prove using Charpit's method that this differential equation reduces to the third Painlev\'{e} transcendent $\rm{P_{III}}$, a  nonlinear ordinary differential equation-see Eqn.(\ref{Painlevetria})\footnote{The mathematical importance of the six Painlev\'{e} transcendents stems from the fact that they belong to the class of ordinary differential equations of the form $\frac{{\rm d}^2 y}{{\rm d}x^2}=F\left(\frac{{\rm d}y}{{\rm d}x},y,x\right)$ where $F$ is a rational function of ${\rm d}y/{\rm d}x$ and $y$, and an analytic function of $x$, which have the property that their solutions are free from $movable\; critical\; points$ \cite{PPainleve},\cite{Garnier}.}.
In \textsection \ref{RicattiPIIIRational} we derive special closed form solutions for the $\rm{P_{III}}$ the angular eigenvalues obey, for particular values of the parameters in terms of Bessel functions. In these special cases, the eigenvalues of the angular KN differential equation are expressible in closed analytic form in terms of Bessel functions.
In \textsection \ref{PIIIasymptotic} we investigate a novel approach in which we derive the asymptotic solutions of Painlev\'{e} $\rm{P_{III}}$ in terms of Jacobian elliptic functions. Therefore, the third Painlev\'{e} transcendent is asymptotically related to the Jacobian elliptic function. This is analogous to the scattering theory of ordinary quantum mechanics in which the Bessel functions have an asymptotic expansion in terms of trigonometric functions.
The angular eigenvalues themselves \footnote{An additional physical significance of the knowledge of the  angular eigenvalues for a rotating charged black hole is that they can be used in the calculations of emission from such curved backgrounds. An analysis for the emission rates from the Reissner-Nordstr\"{o}m black hole can be found in \cite{Donzwei}.   }  in this asymptotic elliptic limit of the transcendent   $\rm{P_{III}}$ are expressed in terms of a reduced form of the Jacobian elliptic functions ${\rm sn},{\rm dn},$ and ${\rm cn}$.
In \textsection \ref{boundfermionicstates} we prove that no fermionic bound states with $\omega^2<\mu^2$, where $\omega$ is the energy of the fermion and $\mu$ its mass, exist in the nonextreme Kerr-Newman geometry. We achieve that by using the fundamental four-term recursion relation the coefficients in the power series expansion of a closed form analytic solution of the radial GHE in the KN geometry satisfy.

In Appendix A we discuss the connection problem for the regular singularities of the GHE.
In Appendix B we explore the representation of the Painlev\'{e} $\rm{P_{III}}$ nonlinear ordinary differential equation (ODE) as the compatibility condition for a Lax pair of first order linear systems as was introduced in the works \cite{Flaschka} and \cite{Jimbo1} \footnote{In this approach, the nonlinear Painlev\'{e} $\rm{P_{III}}$ equation is written as an integrability condition of a linear system. This framework stems from the discovery of Lax of a general principle for associating nonlinear equations with linear operators so that the eigenvalues of the linear operator are integrals of the nonlinear equations \cite{LaxPeter}.}. This isomonodromic deformation method in the theory of Painlev\'{e} appears suitable for both integrating the Painlev\'{e} transcendents as well as for studying their asymptotic behaviour.

Taking into account the contribution from the cosmological
constant $\Lambda,$ the generalisation of the Kerr-Newman solution
is described by the Kerr-Newman de Sitter $($KNdS$)$ metric
element which in Boyer-Lindquist (BL) coordinates is given by
\cite{Stuchlik1},\cite{BCAR},\cite{GrifPod},\cite{ZdeStu} (in units where $G=1$ and $c=1$):
\begin{align}
\mathrm{d}s^{2}  & =\frac{\Delta_{r}^{KN}}{\Xi^{2}\rho^{2}}(\mathrm{d}%
t-a\sin^{2}\theta\mathrm{d}\phi)^{2}-\frac{\rho^{2}}{\Delta_{r}^{KN}%
}\mathrm{d}r^{2}-\frac{\rho^{2}}{\Delta_{\theta}}\mathrm{d}\theta
^{2}\nonumber \\ &-\frac{\Delta_{\theta}\sin^{2}\theta}{\Xi^{2}\rho^{2}}(a\mathrm{d}%
t-(r^{2}+a^{2})\mathrm{d}\phi)^{2}%
\label{KNADSelement}
\end{align}%
\begin{equation}
\Delta_{\theta}:=1+\frac{a^{2}\Lambda}{3}\cos^{2}\theta,
\;\Xi:=1+\frac {a^{2}\Lambda}{3},
\end{equation}

\begin{equation}
\Delta_{r}^{KN}:=\left(  1-\frac{\Lambda}{3}r^{2}\right)  \left(  r^{2}
+a^{2}\right)  -2Mr+e^{2},
\label{DiscrimiL}
\end{equation}

\begin{equation}
\rho^{2}=r^{2}+a^{2}\cos^{2}\theta,
\end{equation}
where $a,M,e,$ denote the Kerr parameter, mass and electric charge
of the black hole, respectively.
The KN(a)dS metric is the most general exact stationary black hole solution of the Einstein-Maxwell system of differential equations.
This
is accompanied by a non-zero electromagnetic field
$F=\mathrm{d}A,$ where the vector potential is
\cite{ZST},\cite{GrifPod}:
\begin{equation}
A=-\frac{er}{\Xi(r^{2}+a^{2}\cos^{2}\theta)}(\mathrm{d}t-a\sin^{2}\theta
\mathrm{d}\phi).
\end{equation}

\section{Null tetrad and the Dirac equation in Kerr-Newman-de Sitter black hole spacetime }\label{LambdaKNTetrad}

The Kerr-Newman-de Sitter geometry (as was the case with the Kerr and Kerr-Newman geometries) can be described in terms of a local Newman-Penrose null tetrad frame that is adapted to the principal null geodesics, i.e. the tetrad coincides with the two principal null directions of the Weyl tensor $C_{\mu\nu\rho\varepsilon}$. In this generalised Kinnersley frame, the null tetrad is constructed directly from the tangent vectors of the principal null geodesics:
\begin{equation}
\dot{t}:=\frac{{\rm d}t}{{\rm d}\lambda}=\frac{\Xi^2 (r^2+a^2)}{\Delta_r^{KN}}E,\;\;\dot{r}=\pm \Xi E,\;\;
\dot{\theta}=0,\;\;\dot{\phi}=\frac{a\Xi^2}{\Delta_r^{KN}}E,
\end{equation}
where the dot denotes differentiation with respect to the affine parameter $\lambda$ and $E$ denotes a constant.

Thus the  generalised Kinnersley null tetrad in the Kerr-Newman-de Sitter spacetime is given by
\begin{align}
l^{\mu}&=\left[\frac{(r^2+a^2)\Xi}{\Delta_r^{KN}},1,0,\frac{a\Xi}{\Delta_r^{KN}}\right],\;
n^{\mu}=\left[\frac{\Xi(r^2+a^2)}{2\rho^2},-\frac{\Delta_r^{KN}}{2\rho^2},0,\frac{a\Xi}{2\rho^2}\right] \nonumber \\
m^{\mu}&=\frac{1}{(r+ia\cos\theta)\sqrt{2\Delta_{\theta}}}\left[ia\Xi\sin\theta,0,\Delta_{\theta},\frac{i\Xi}{\sin\theta}\right]\nonumber \\
\overline{m}^{\mu}&=\frac{-1}{(r-ia\cos\theta)\sqrt{2\Delta_{\theta}}}\left[ia\Xi\sin\theta,0,-\Delta_{\theta},\frac{i\Xi}{\sin\theta}\right]
\label{GLAMBDAKNNULLTETRAD}
\end{align}

\subsubsection{Calculation of the Ricci-rotation coefficients in the Kerr-Newman-de Sitter spacetime}
Using the generalised Kinnersley tetrad we derived in section \ref{LambdaKNTetrad}, we computed the Ricci-rotation coefficients via  the formula for the $\lambda$-symbols given by Chandrasekhar \cite{Chandrasekhar}:
\begin{equation}
\lambda_{(a)(b)(c)}=e_{(b)i,j}[e_{(a)}^{\;i}e_{(c)}^{\;j}-e_{(a)}^{\;j}e_{(c)}^{\;i}].
\end{equation}
This formula has the advantage that one has to calculate ordinary derivatives of the dual co-tetrad.
Computation of the dual co-tetrad in the Kerr-Newman-de Sitter spacetime yields:
\begin{align*}
{\bf l}&=\frac{1}{\Xi}{\rm d}t-\frac{\rho^2}{\Delta_{r}^{KN}}{\rm d}r-\frac{a\sin^2\theta}{\Xi}{\rm d}\phi \\
{\bf n}&=\frac{\Delta_r^{KN}}{2\rho^2\Xi}{\rm d}t+\frac{1}{2}{\rm d}r-\frac{a\sin^2\theta\Delta_r^{KN}}{2\rho^2\Xi}{\rm d}\phi, \\
{\bf m}&=\frac{\Delta{\theta}a\sin\theta i}{\Xi \sqrt{2\Delta_{\theta}}(r+ia\cos\theta)}{\rm d}t-
\frac{\rho^2}{(r+ia\cos\theta)\sqrt{2\Delta_{\theta}}}{\rm d}\theta-\frac{\Delta_{\theta}i\sin\theta(r^2+a^2)}{\Xi(r+ia\cos\theta)\sqrt{2\Delta_{\theta}}}{\rm d}\phi,\\
{\bf \overline{m}}&=\frac{-ia\sin\theta\Delta_{\theta}}{\Xi(r-ia\cos\theta)\sqrt{2\Delta_{\theta}}}{\rm d}t
-\frac{\rho^2}{(r-ia\cos\theta)\sqrt{2\Delta_{\theta}}}{\rm d}\theta+\frac{i\Delta_{\theta}\sin\theta(r^2+a^2)}{\Xi(r-ia\cos\theta)\sqrt{2\Delta_{\theta}}}{\rm d}\phi.
\end{align*}

The non-vanishing $\lambda$-symbols are computed to be:
\begin{align}
\lambda_{213}&=-\sqrt{2\Delta_{\theta}}\frac{a^2\sin\theta\cos\theta}{\rho^2\overline{\rho}},\;\lambda_{324}=
-\frac{ia\cos\theta\Delta_r^{KN}}{\rho^4},\\
\lambda_{243}&=\frac{-\Delta_r^{KN}}{2\rho^2\overline{\rho}},\;\lambda_{234}=-\frac{\Delta_r^{KN}}{2\rho^2(r-ia\cos\theta)},\\
\lambda_{134}&=\frac{1}{r-ia\cos\theta}=\frac{1}{\overline{\rho}^*},\;\lambda_{314}=\frac{-2ia\cos\theta}{\rho^2},\\
\lambda_{122}&=-\frac{1}{2}\frac{{\rm d}\Delta_r^{KN}}{{\rm d}r}\frac{1}{\rho^2}+r\frac{\Delta_r^{KN}}{\rho^4},
\;\lambda_{132}=\frac{\sqrt{2\Delta_{\theta}}ira\sin\theta}{\overline{\rho}\rho^2},\\
\lambda_{334}&=\frac{1}{\sin\theta\sqrt{2}\overline{\rho}}\frac{{\rm d}(\sqrt{\Delta_{\theta}}\sin\theta)}{{\rm d}\theta}+\frac{i\sqrt{\Delta_{\theta}}a\sin\theta}{\sqrt{2}\overline{\rho}^2},\;\lambda_{241}=\frac{ira\sqrt{2\Delta_{\theta}}\sin\theta}{\rho^2\overline{\rho}^*}
,\\
\lambda_{412}&=\frac{\sqrt{2\Delta_{\theta}}a^2 \sin\theta\cos\theta}{\rho^2 (r-ia\cos\theta)},\;
\lambda_{443}=\frac{{\rm d}}{{\rm d}\theta}(\sqrt{\Delta_{\theta}}\sin\theta)\frac{1}{\sqrt{2}\overline{\rho}^*\sin\theta}-\frac{ia\sin\theta\sqrt{\Delta_{\theta}}}{\sqrt{2}\overline{\rho}^{*2}},
\label{lambdasymbolsL}
\end{align}
where $\overline{\rho}=r+ia\cos\theta,\;\rho^2=\overline{\rho}\;\overline{\rho}^*$.
The Ricci rotation coefficients $\gamma_{(a)(b)(c)}$ are expressed through the $\lambda$-coefficients as follows:
\begin{equation}
\gamma_{(a)(b)(c)}=\frac{1}{2}[\lambda_{(a)(b)(c)}+\lambda_{(c)(a)(b)}-\lambda_{(b)(c)(a)}]
\end{equation}
Thus a calculation through the $\lambda$-symbols yields the following non-vanishing Ricci coefficients for the Kerr-Newman-de Sitter spacetime:
\begin{align}
\pi &=\gamma_{241}=\frac{1}{2}\frac{ia\sin\theta\sqrt{2\Delta_{\theta}}}{(\overline{\rho}^*)^2},\;\beta=
\frac{1}{2}(\gamma_{213}+\gamma_{343})=\frac{1}{2\sin\theta\sqrt{2}\overline{\rho}}\frac{\rm d}{{\rm d}\theta}(\sqrt{\Delta_{\theta}}\sin\theta)\label{lambdaricci}\\
\gamma &=\frac{1}{2}(\gamma_{212}+\gamma_{342})=\frac{1}{4\rho^2}\frac{{\rm d}\Delta_r^{KN}}{{\rm d}r}-
\frac{1}{2\rho^2 \overline{\rho}^*}\Delta_r^{KN},\; \varrho=\gamma_{314}=-\frac{1}{\overline{\rho}^*},\\
\mu&=\gamma_{243}=-\frac{\Delta_r^{KN}}{2\rho^2\overline{\rho}^*},\;\tau=\gamma_{312}=\frac{-i\sqrt{2\Delta_{\theta}}a\sin\theta}{2\rho^2},\;\alpha:=\frac{1}{2}(\gamma_{214}+\gamma_{344})=\pi-\beta^{*}.
\label{cosmoconsricci}
\end{align}
we obtain the following 2-spinor form of the Dirac equation
\begin{align}
(\nabla_{A\dot{B}}+iqA_{A\dot{B}})P^A+i\mu_*\overline{Q}_{\dot{B}}&=0\label{1stDIRAC},\\
(\nabla_{A\dot{B}}-iqA_{A\dot{B}})Q^A+i\mu_*\overline{P}_{\dot{B}}&=0,
\end{align}
where $\nabla_{A\dot{B}}=\sigma^{\mu}_{\;A\dot{B}}\nabla_{\mu}$, $q$ is the charge or the coupling constant of the massive Dirac particle to the vector field and $\mu_*$ is the particle mass. Thus equivalently the 2-spinor form of general relativistic Dirac's equation is the following:
\begin{align}
\sigma^{\mu}_{\;A\dot{B}}P^A_{;\mu}+i\mu_*\overline{Q}^{\dot{C}}\varepsilon_{\dot{C}\dot{B}}&=0,\\
\sigma^{\mu}_{\;A\dot{B}}Q^A_{;\mu}+i\mu_*\overline{P}^{\dot{C}}\varepsilon_{\dot{C}\dot{B}}&=0.
\end{align}
The components for $\dot{B}=0,1$ of eqn.(\ref{1stDIRAC}) lead to the following general relativistic Dirac equations in the Newman-Penrose formalism for the Kerr-Newman de Sitter black hole spacetime:
\begin{align}
(D^{\prime}-\gamma+\mu+iqn^{\mu}A_{\mu})P^{(1)}+(\delta-\tau+\beta+iq m^{\mu}A_{\mu})P^{(0)}&=-i\mu_{*}\overline{Q}^{(\dot{0})}\label{PaulDKNdS},\\
(-D+\varrho-\varepsilon-iql^{\mu}A_{\mu})P^{(0)}+(-\delta^{\prime}+\alpha-\pi-iq\overline{m}^{\mu}A_{\mu})P^{(1)}&=-i\mu_{*}\overline{Q}^{(\dot{1})}
\label{DiracKNdSspacetime}
\end{align}
where assuming that the azimuthal and time-dependence of the fields will be of the form $e^{i(m\phi-\omega t)}$ we calculate the directional derivatives to be
\begin{align}
D&=l^{\mu}\partial_{\mu}=\frac{\partial }{\partial r}+\frac{i\Xi}{\Delta_r^{KN}}K\equiv {\mathcal D}_0,\\
D^{\prime}\equiv\Delta&=n^{\mu}\partial_{\mu}=-\frac{\Delta_r^{KN}}{2\rho^2}\left(\frac{\partial}{\partial r}-\frac{i\Xi K}{\Delta_r^{KN}}\right)\equiv-\frac{\Delta_r^{KN}}{2\rho^2}{\mathcal D}_0^{\dagger}\\
\delta&=m^{\mu}\partial_{\mu}=\frac{\sqrt{\Delta_{\theta}}}{\sqrt{2}\overline{\rho}}\left[\frac{\partial }{\partial \theta}+\frac{\Xi H}{\Delta_{\theta}}\right]\equiv \frac{\sqrt{\Delta_{\theta}}}{\sqrt{2}\overline{\rho}}{\mathcal L}_0 ,\\
\delta^{\prime}\equiv\delta^{*}&=\overline{m}^{\mu}\partial_{\mu}=\frac{\sqrt{\Delta_{\theta}}}{\sqrt{2}\overline{\rho}^*}\left[\frac{\partial}{\partial \theta}-\frac{\Xi H}{\Delta_{\theta}}\right]\equiv\frac{\sqrt{\Delta_{\theta}}}{\sqrt{2}\overline{\rho}^*}{\mathcal L}^{\dagger}_0,
\end{align}
where
\begin{equation}
K:=ma-\omega(r^2+a^2),\;\;H:=\omega a \sin\theta-\frac{m}{\sin\theta}
\end{equation}

\subsection{Separation of the Dirac equation in the Kerr-Newman-de Sitter spacetime}
Applying the separation ansatz:
\begin{align}
\overline{Q}^{(\dot{0})}&=-\frac{e^{-i\omega t}e^{im\phi}S^{(+)}(\theta)R^{(-)}(r)}{\sqrt{2}(r+ia\cos\theta)},\;\;\;\overline{Q}^{(\dot{1})}=
\frac{e^{-i\omega t}e^{im\phi}S^{(-)}(\theta)R^{(+)}(r)}{\sqrt{\Delta_r^{KN}}}\\
P^{(0)}&=\frac{e^{-i\omega t}e^{i m\phi}S^{(-)}(\theta)R^{(-)}(r)}{\sqrt{2}(r-ia\cos\theta)},\;\;\;
P^{(1)}=\frac{e^{-i\omega t}e^{im\phi}S^{(+)}(\theta)R^{(+)}(r)}{\sqrt{\Delta_r^{KN}}}.
\label{ansatzseperation}
\end{align}
we obtain the following ordinary differential equations for the radial and angular polar parts:
\begin{align}
&\sqrt{\Delta_r^{KN}}\left\{\frac{{\rm d}R^{(-)}(r)}{{\rm d}r}+\left[\frac{\Xi i(ma-\omega(r^2+a^2))}{\Delta_r^{KN}}\right]R^{(-)}(r)-\frac{iqer R^{(-)}(r)}{\Delta_r^{KN}}\right\}=(\lambda+i\mu r)R^{(+)}(r)\label{ALGMEINEDIR1},\\
&\sqrt{\Delta_r^{KN}}\frac{{\rm d}R^{(+)}(r)}{{\rm d}r}-\frac{i\Xi(ma-\omega(r^2+a^2))R^{(+)}(r)}{\sqrt{\Delta_r^{KN}}}+\frac{ieqr R^{(+)}(r)}{\sqrt{\Delta_r^{KN}}}=(\lambda-i\mu r)R^{(-)}(r)\label{ALgemeinDIR2},\\
&\frac{\Delta_{\theta}}{\sqrt{\Delta_{\theta}}}\frac{{\rm d}S^{(+)}(\theta)}{{\rm d}\theta}+
\frac{\Xi}{\sqrt{\Delta_{\theta}}}\left[-\omega a \sin\theta+\frac{m}{\sin\theta}\right]S^{(+)}(\theta)+
\frac{1}{2\sin\theta}\frac{{\rm d}}{{\rm d}\theta}(\sqrt{\Delta_{\theta}}\sin\theta)S^{(+)}(\theta)\nonumber \\
&=(-\lambda+\mu a \cos\theta)S^{(-)}(\theta),\\
&\frac{\Delta_{\theta}}{\sqrt{\Delta_{\theta}}}\frac{{\rm d}S^{(-)}(\theta)}{{\rm d}\theta}+
\left(\omega a \sin\theta-\frac{m}{\sin\theta}\right)\frac{\Xi}{\sqrt{\Delta_{\theta}}}S^{(-)}(\theta)
+\frac{1}{2\sin\theta}\frac{{\rm d}}{{\rm d}\sin\theta}(\sqrt{\Delta_{\theta}}\sin\theta)S^{(-)}(\theta)\nonumber \\
&=(\lambda+\mu a \cos\theta)S^{(+)}(\theta),\label{ALGEMEINEDIR4}
\end{align}
where $\lambda$ is a separation constant and $\mu_*=\mu/\sqrt{2}$.
The angular equations can be combined into the compact form:
\begin{equation}
\sqrt{\Delta_{\theta}}\mathcal{L}^{\dagger}_{1/2}\sqrt{\Delta_{\theta}}\mathcal{L}_{1/2}S^{(-)}(\theta)+
\frac{\Delta_{\theta}\mu a \sin\theta}{\lambda+\mu a \cos\theta}\mathcal{L}_{1/2}S^{(-)}(\theta)=
(\mu^2a^2\cos^2 \theta+\lambda^2)S^{(-)}(\theta).
\label{gwniaKNdS}
\end{equation}
The operators $\mathcal{L}^{\dagger}_{1/2},\mathcal{L}_{1/2}$ are defined as follows:
\begin{equation}
\mathcal{L}^{\dagger}_{1/2}=\frac{{\rm d}}{{\rm d}\theta}-\frac{\Xi H}{\Delta_{\theta}}+
\frac{1}{2\sin\theta}\frac{1}{\sqrt{\Delta_{\theta}}}\frac{\rm d}{{\rm d}\theta}(\sqrt{\Delta_{\theta}}\sin\theta),
\end{equation}
\begin{equation}
\mathcal{L}_{1/2}=\frac{{\rm d}}{{\rm d}\theta}+\frac{\Xi H}{\Delta_{\theta}}+
\frac{1}{2\sin\theta}\frac{1}{\sqrt{\Delta_{\theta}}}\frac{\rm d}{{\rm d}\theta}(\sqrt{\Delta_{\theta}}\sin\theta).
\end{equation}
Equivalently (\ref{gwniaKNdS}) can be written explicitly as follows:
\begin{align}
&\frac{\Delta_{\theta}}{\sqrt{\Delta_{\theta}}}\frac{[\mu a \omega a \sin^2\theta \Xi/\sqrt{\Delta_{\theta}}S^{(-)}(\theta)-\mu a m\frac{\Xi}{\sqrt{\Delta_{\theta}}}S^{(-)}(\theta)]}{\lambda+\mu a \cos\theta} \nonumber \\
&+\frac{\Delta_{\theta}}{\sqrt{\Delta_{\theta}}}\frac{\frac{\mu a}{2}\frac{\rm d}{{\rm d}\theta}(\sqrt{\Delta_{\theta}}\sin\theta)S^{(-)}}{\lambda+\mu a \cos\theta}+
\frac{\mu a \sin\theta\Delta_{\theta}}{\lambda+\mu a \cos\theta}\frac{{\rm d}S^{(-)}(\theta)}{{\rm d}\theta} \nonumber \\
&+\frac{\Delta_{\theta}}{\sqrt{\Delta_{\theta}}}\Biggl\{\sqrt{\Delta_{\theta}}\frac{{\rm d}^2 S^{(-)}(\theta)}{{\rm d}\theta^2}+\frac{{\rm d}}{{\rm d}\theta}\sqrt{\Delta_{\theta}}\frac{{\rm d}S^{(-)}(\theta)}{{\rm d}\theta}+\left(\omega a \cos\theta+\frac{m\cos\theta}{\sin^2\theta}\right)\frac{\Xi}{\sqrt{\Delta_{\theta}}}S^{(-)}(\theta)\nonumber \\
&+\Xi\left(\omega a \sin\theta-\frac{m}{\sin\theta}\right)\Delta_{\theta}^{-3/2}\frac{a^2\Lambda}{3}\cos\theta\sin\theta S^{(-)}(\theta)\nonumber \\
&+\frac{1}{2\sin\theta}\frac{\rm d}{{\rm d}\theta}(\sqrt{\Delta_{\theta}}\sin\theta)\frac{{\rm d}S^{(-)}(\theta)}{{\rm d}\theta}+\frac{1}{2\sin\theta}\frac{{\rm d}^2}{{\rm d}\theta^2}(\sqrt{\Delta_{\theta}}\sin\theta)S^{(-)}(\theta)\nonumber \\
&-\frac{1}{2}\frac{\cos\theta}{\sin^2\theta}\frac{\rm d}{{\rm d}\theta}(\sqrt{\Delta_{\theta}}\sin\theta)S^{(-)}(\theta)\Biggr\}\nonumber \\
&+\frac{\Xi}{\sqrt{\Delta_{\theta}}}\left[-\omega a \sin\theta+\frac{m}{\sin\theta}\right]\left[\omega a \sin\theta-\frac{m}{\sin\theta}\right]
\frac{\Xi}{\sqrt{\Delta_{\theta}}}S^{(-)}(\theta)\nonumber \\
&+\frac{1}{2\sin\theta}\frac{\rm d}{{\rm d}\theta}(\sqrt{\Delta_{\theta}}\sin\theta)\frac{\Delta_{\theta}}{\sqrt{\Delta_{\theta}}}\frac{{\rm d}S^{(-)}(\theta)}{{\rm d}\theta}+
\left[\frac{1}{2\sin\theta}\frac{\rm d}{{\rm d}\theta}(\sqrt{\Delta_{\theta}}\sin\theta)\right]^2 S^{(-)}(\theta)\nonumber \\
&=(-\lambda^2+\mu^2 a^2\cos^2\theta)S^{(-)}(\theta).
\label{gwniaKNdSGHEUN}
\end{align}
Taking the $\Lambda=0$ limit of (\ref{gwniaKNdSGHEUN}) yields the differential equation:
\begin{align}
&\left[\frac{\mu a^2 \omega \sin^2\theta S^{(-)}(\theta)-\mu a m S^{(-)}(\theta) }{\lambda+\mu a \cos\theta}\right]+\frac{\frac{\mu a}{2}\cos\theta S^{(-)}(\theta)}{\lambda +\mu a \cos\theta}+
\frac{\mu a \sin\theta}{\lambda+\mu a \cos\theta}\frac{{\rm d}S^{(-)}(\theta)}{{\rm d}\theta} \nonumber \\
&\Biggl\{\frac{{\rm d}^2 S^{(-)}(\theta)}{{\rm d}\theta^2}+\left(\omega a \cos\theta+
\frac{m\cos\theta}{\sin^2\theta}\right)S^{(-)}(\theta)\nonumber \\
&+\frac{\cos\theta}{2\sin\theta}\frac{{\rm d}S^{(-)}(\theta)}{{\rm d} \theta}-\frac{1}{2}S^{(-)}(\theta)\nonumber \\
&-\frac{1}{2}\frac{\cos^2\theta}{\sin^2\theta}S^{(-)}(\theta)\Biggr\}\nonumber \\
&+\left[-\omega a \sin\theta +\frac{m}{\sin\theta}\right]\left[\omega a \sin\theta-\frac{m}{\sin\theta}\right]S^{(-)}(\theta)\nonumber \\
&+\frac{\cos\theta}{2\sin\theta}\frac{{\rm d}S^{(-)}(\theta)}{{\rm d}\theta}+\left[\frac{1}{2\sin\theta}\frac{\rm d}{{\rm d}\theta}\sin\theta\right]^2 S^{(-)}(\theta)\nonumber \\
&=(-\lambda^2+\mu^2 a^2 \cos^2 \theta)S^{(-)}(\theta),
\label{LEEPAGE}
\end{align}
which agrees with the results in \cite{page}.

\section{Transforming the radial equation in the KNdS spacetime into a generalisation of Heun's equation}\label{fundamentalKNdS}
Combining the radial equations (\ref{ALGMEINEDIR1})-(\ref{ALgemeinDIR2}) we obtain:
\begin{align}
\Biggl[&\Delta_r^{KN}\frac{{\rm d}^2}{{\rm d}r^2}+\left(\frac{1}{2}\frac{{\rm d}\Delta_r^{KN}}{{\rm d}r}-i\mu\frac{\Delta_r^{KN}}{\lambda+i\mu r}\right)\frac{{\rm d}}{{\rm d}r}\nonumber \\
&+\left(K^2+\frac{iK}{2}\frac{{\rm d}\Delta_r^{KN}}{{\rm d}r}\right)\frac{1}{\Delta_r^{KN}}-\frac{\mu K}{\lambda+i\mu r}\nonumber \\
&-2i\omega r \Xi-ieq-\mu^2 r^2-\lambda^2\Biggr]R^{(-)}(r)=0\label{IntervenKNdSrad},
\end{align}
where $K(r):=\Xi[(r^2+a^2)\omega-ma]+eqr\equiv \mathcal{K}(r)+eqr$.
Applying  to (\ref{IntervenKNdSrad}) the independent variable transformation:
\begin{equation}
r=r_1+z(r_2-r_1),
\end{equation}
so that the quartic polynomial that determines the horizons of the Kerr-Newman-de Sitter black hole factorises as follows \footnote{The quantity $\Delta_r^{KN}$ in terms of the radii of the event and Cauchy horizons $r_+,r_{-}$ and the cosmological horizon $r_{\Lambda}^+$ for positive cosmological constant is written as:
$\Delta_r^{KN}=-\frac{\Lambda}{3}(r-r_+)(r-r_{-})(r-r_{\Lambda}^{+})(r-r_{\Lambda}^{-})$.}:
\begin{equation}
\Delta_r^{KN}=-\frac{\Lambda}{3}(r_2-r_1)^4z(z-1)(z-z_3)(z-z_4),
\end{equation}
with the notation:
\begin{align}
r_1&\equiv r_{+},\;\;r_2\equiv r_{-},\;\;r_3\equiv r_{\Lambda}^+,\;\;r_4 \equiv r_{\Lambda}^-,\;\;r_5 \equiv i\lambda/\mu,\\
z_3&\equiv \frac{r_3-r_{+}}{r_{-}-r_{+}},\;\;z_4\equiv \frac{r_4-r_{+}}{r_{-}-r_{+}},\;\;z_5\equiv \frac{r_5-r_{+}}{r_{-}-r_{+}},
\end{align}
yields the equation:
\begin{align}
&\frac{{\rm d}^2R_{-}(z)}{{\rm d}z^2}+\left[\frac{1}{2}\left(\frac{1}{z}+\frac{1}{z-1}+\frac{1}{z-z_3}+\frac{1}{z-z_4}\right)-\frac{1}{z-z_5}\right]
\frac{{\rm d}R_{-}(z)}{{\rm d}z}\nonumber \\
&+\left[\frac{Q_1}{z^2}+\frac{Q_2}{(z-1)^2}+\frac{Q_3}{(z-z_3)^2}+\frac{Q_4}{(z-z_4)^2}+\frac{L_1}{z}+
\frac{L_2}{z-1}+\frac{L_3}{z-z_3}+\frac{L_4}{z-z_4}+\frac{L_5}{z-z_5}\right]R_{-}(z)\nonumber \\
&=0,
\label{PreGeneHeunKNdS}
\end{align}
with
\begin{align}
Q_1&=\frac{3^2 K^2(r_{+})}{\Lambda^2(r_{-}-r_{+})^6 z_3^2 z_4^2}+\frac{3i}{2\Lambda}\frac{(\mathcal{K}(r_{+})+eqr_{+})}{(r_{-}-r_{+})(r_3-r_{+})(r_4-r_{+})}\nonumber \\
&=\frac{3^2 (\mathcal{K}(r_{+})+eqr_{+})^2}{\Lambda^2(r_{-}-r_{+})^6 z_3^2 z_4^2}+\frac{3i}{2\Lambda}\frac{(\mathcal{K}(r_{+})+eqr_{+})}{(r_{-}-r_{+})(r_3-r_{+})(r_4-r_{+})},\\
Q_2&=\frac{3^2 K^2(r_{-})}{\Lambda^2(r_{-}-r_{+})^6 (z_3-1)^2 (z_4-1)^2}+\left(\frac{-3i}{2\Lambda}\right)\frac{\mathcal{K}(r_{-})+eqr_{-}}{(r_{-}-r_{+})(r_3-r_{-})(r_4-r_{-})}\nonumber \\
&=\frac{3^2 (\mathcal{K}(r_{-})+eqr_{-})^2}{\Lambda^2(r_{-}-r_{+})^6 (z_3-1)^2 (z_4-1)^2}+\left(\frac{-3i}{2\Lambda}\right)\frac{\mathcal{K}(r_{-})+eqr_{-}}{(r_{-}-r_{+})(r_3-r_{-})(r_4-r_{-})},\\
Q_3&=\frac{9\{\mathcal{K}(r_{+})-\omega\Xi(r_3^2-r_{+}^2)\}^2+18eqr_3\mathcal{K}(r_3)+9e^2q^2r_3^2}{
\Lambda^2(r_{-}-r_{+})^6z_3^2(z_3-1)^2(z_3-z_4)^2}+\left(\frac{-3i}{2\Lambda}\right)\frac{\mathcal{K}(r_3)+eqr_3}{(r_3-r_{+})(r_3-r_{-})(r_3-r_4)},\\
Q_4&=\frac{9\{\mathcal{K}(r_{+})-\omega\Xi(r_4^2-r_{+}^2)\}^2+18eqr_4\mathcal{K}(r_4)+9e^2q^2r_4^2}{
\Lambda^2(r_{-}-r_{+})^6z_4^2(z_4-1)^2(z_3-z_4)^2}+\left(\frac{3i}{2\Lambda}\right)\frac{\mathcal{K}(r_4)+eqr_4}{(r_4-r_{+})(r_4-r_{-})(r_3-r_4)},
\end{align}
and
\begin{align}
L_1&=\left(\frac{3i}{2\Lambda}\right)\left(\frac{2\Xi\omega r_{+}+eq}{(r_4-r_{+})(r_3-r_{+})}\right)
+\left(-\frac{3\lambda^2}{\Lambda(r_4-r_{+})(r_3-r_{+})}\right)\nonumber \\
&-\frac{3i}{\Lambda}\frac{\mathcal{K}(r_{+})+eqr_{+}}{(r_4-r_{+})(r_3-r_{+})(r_5-r_{+})}
+\left(\frac{-3\mu^2 r_{+}^2}{\Lambda(r_4-r_{+})(r_3-r_{+})}\right)\nonumber \\
&+\left(\frac{-3ieq}{\Lambda (r_4-r_{+})(r_3-r_{+}) }\right)+\left(\frac{-6i\Xi\omega r_{+}}{\Lambda (r_4-r_{+})(r_3-r_{+})}\right)\nonumber\\
&+\Biggl\{18(z_3+z_4)[\mathcal{K}(r_{+})+eqr_{+}]^2+36\Xi \omega r_{-}r_{+}z_3z_4\mathcal{K}(r_{+})+18\Xi^2\omega^2z_3z_4(a^4-r_{+}^4)\nonumber \\
&+18 \Xi^2 z_3z_4(a^2m^2-2am \omega a^2)+18\Xi eqz_3z_4(r_{+}+r_{-})[\omega a^2-am]\nonumber \\
& -18\Xi e q \omega r_{-}r_{+}(r_{+}^2-3r_{+})z_3z_4+18e^2 q^2 r_{-}r_{+}z_3z_4\Biggr\}\frac{1}{\Lambda^2(r_{-}-r_{+})^6z_3^3z_4^3}\nonumber \\
&+\left(\frac{-3i}{2\Lambda}\right)\frac{\mathcal{K}(r_{+})+eqr_{+}}{(r_3-r_{+})^2 (r_4-r_{+})}+
\left(\frac{-3i}{2\Lambda}\right)\frac{\mathcal{K}(r_{+})+eqr_{+}}{(r_4-r_{+})^2(r_3-r_{+})}\nonumber\\
&+\frac{3i}{2\Lambda}\frac{(r_{-}-r_{+})}{(r_3-r_{+})^2(r_4-r_{+})^2}\left\{(\mathcal{K}(r_{-})+eqr_{-})z_3z_4-
\Xi\omega z_3 z_4 (r_{-}-r_{+})^2+[\mathcal{K}(r_{+})+eqr_{+}](z_3+z_4)\right\}\nonumber \\
&+\left(\frac{-3i}{2\Lambda}\right)\frac{\mathcal{K}(r_{+})+eqr_{+}}{(r_{-}-r_{+})(r_3-r_{+})(r_4-r_{+})},\\
L_2&=\left(\frac{-3i}{2\Lambda}\right)\left(\frac{2\Xi\omega r_{-}+eq}{(r_3-r_{-})(r_4-r_{-})}\right)+\frac{3\lambda^2}{\Lambda(r_3-r_{-})(r_4-r_{-})}\nonumber \\
&+\frac{3i}{\Lambda}\frac{\mathcal{K}(r_{-})+eqr_{-}}{(r_3-r_{-})(r_4-r_{-})(r_5-r_{-})}+\frac{3\mu^2r_{-}^2}{\Lambda(r_3-r_{-})(r_4-r_{-})}\nonumber \\
&+\frac{3ieq}{\Lambda(r_3-r_{-})(r_4-r_{-})}+\frac{6i\Xi\omega r_{-}}{\Lambda(r_3-r_{-})(r_4-r_{-})}\nonumber \\
&+\frac{1}{\Lambda^2 (r_3-r_{-})^3 (r_4-r_{-})^3}\Biggl[(36\mathcal{K}(r_{-}r_{+})+18eq(r_{-}+r_{+}))(\mathcal{K}(r_{-})+eqr_{-})(z_3+z_4-1)\nonumber \\
&-18(\mathcal{K}(r_{-})+eqr_{-})^2+z_3z_4(\mathcal{K}(r_{-})+eqr_{-})(-36\Xi r_{-}r_{+}\omega-18eqr_{+})\nonumber \\
&+18\Xi\omega r_{-}^2 z_3z_4(eqr_{-}+\Xi\omega r_{-}^2)-18\Xi a^2\omega z_3z_4(\Xi\omega a^2+eqr_{-}-2am\Xi)\nonumber \\
&+18\Xi amz_3z_4(eqr_{-}-\Xi am)\Biggr]\nonumber \\
&+\left(\frac{3i}{2\Lambda}\frac{(r_{-}-r_{+})}{(r_3-r_{-})^2(r_4-r_{-})^2}\right)\Biggl\{
z_3z_4(\mathcal{K}(r_{+})+eqr_{+})-\Xi\omega z_3z_4(r_{-}-r_{+})^2+3(\mathcal{K}(r_{-})+eqr_{-})-\nonumber \\
&2\mathcal{K}(r_{-}r_{+})(z_3+z_4)+2\Xi\omega r_{-}(r_{+}-r_{-})-eq(r_{+}+r_{-})(z_3+z_4)+eq(r_{+}-r_{-})\Biggr\}\nonumber \\
&-\left(\frac{3i}{2\Lambda}\right)\frac{\mathcal{K}(r_{-})+eqr_{-}}{(r_{-}-r_{+})(r_3-r_{-})(r_4-r_{-})}+
\left(\frac{3i}{2\Lambda}\right)\frac{
\mathcal{K}(r_{-})+eqr_{-}}{(r_3-r_{-})^2(r_4-r_{-})}\nonumber \\
&+\left(\frac{3i}{2\Lambda}\right)\frac{\mathcal{K}(r_{-})+eqr_{-}}{(r_4-r_{-})^2(r_3-r_{-})},
\end{align}

\begin{align}
L_3&=\frac{1}{\Lambda^2 (r_{-}-r_{+})^6 z_3^3 (z_3-1)^3(z_3-z_4)^3}\Biggl[-72\Xi(r_{-}-r_{+})^2z_3^4\omega[\mathcal{K}(r_{+})+eqr_{+}]\nonumber \\&-4\;36\Xi \omega z_3^4(r_{-}-r_{+})^2[\omega \Xi r_{+}^2+eqr_{+}]
-18\Xi eq \omega z_3^4(r_{-}-r_{+})^2(r_{+}-r_{-})\nonumber \\
&-36\Xi^2\omega^2z_3^4r_{+}(r_{-}-r_{+})^2(r_{+}-r_{-})+108\Xi\omega z_3^2r_{+}(r_{-}-r_{+})[\mathcal{K}(r_{+})+eqr_{+}]\nonumber \\
&-108eqr_{+}z_3^2[\mathcal{K}(r_{+})+eqr_{+}]-54eqz_3^2(r_{+}-r_{-})\mathcal{K}(r_{+})-54z_3^2\mathcal{K}^2(r_{+})+54e^2q^2r_{-}r_{+}z_3^2
\nonumber \\
&-180z_3^3\Xi\omega r_{+}(r_{-}-r_{+})\{\mathcal{K}(r_{+})+eqr_{+}\}+
36z_3^3\Xi\omega(r_{-}-r_{+})^2\{\mathcal{K}(r_{+})+eqr_{+}\}\nonumber \\
&+72\Xi\omega z_3^3(r_{-}-r_{+})^2[eqr_{+}+\omega\Xi r_{+}^2]+90eqz_3^3(r_{+}-r_{-})[\mathcal{K}(r_{+})+eqr_{+}]\nonumber \\
&+18e^2q^2z_3^3(r_{-}-r_{+})^2-36z_3^4e^2q^2(r_{-}-r_{+})^2\nonumber \\
&+36 \Xi \omega z_3^3z_4(r_{-}-r_{+})^2[\mathcal{K}(r_{+})+eqr_{+}]+
36\Xi^2\omega^2z_3^3z_4(r_{-}-r_{+})^2 r_{+}(r_{+}+r_{-})\nonumber \\
&+18e^2q^2z_3^3z_4(r_{-}-r_{+})^2+18\Xi eq\omega(r_{-}-r_{+})^2[r_{-}+3r_{+}]z^3_3z_4\nonumber \\
&-18\Xi^2\omega^2z_3^6(r_{-}-r_{+})^4-108\Xi^2\omega^2z_3^5r_{+}(r_{-}-r_{+})^3-54\Xi e q\omega z_3^5 (r_{-}-r_{+})^3\nonumber \\
&+18\Xi eq \omega z_3^4 z_4(r_{-}-r_{+})^3+18\Xi^2\omega^2z_3^4 z_4(r_{-}-r_{+})^3 (r_{-}+r_{+})\nonumber \\
&+108\Xi\omega z_3^2 z_4 r_{+}(r_{-}-r_{+})[\mathcal{K}(r_{+})+eqr_{+}]+
54 eq z_3^2 z_4(r_{-}-r_{+})[\mathcal{K}(r_{+})+eqr_{+}]\nonumber \\
&+36 z_3 z_4[\mathcal{K}(r_{+})+eqr_{+}]^2-18eqz_3z_4(r_{-}-r_{+})[\mathcal{K}(r_{+})+eqr_{+}]\nonumber \\
&-36\Xi\omega r_{+}z_3z_4(r_{-}-r_{+})[\mathcal{K}(r_{+})+eqr_{+}]\Biggr]\nonumber \\
&\left(\frac{-3 i}{2\Lambda}\right)\left(\frac{2\Xi\omega r_3+eq}{(r_{3}-r_{-})(r_3-r_4)(r_3-r_{+})}\right)(r_{-}-r_{+})\nonumber \\
&\left(\frac{-3 i}{\Lambda}\right)\left[\frac{\Xi\omega (r_3-r_{+})(r_3+r_{+})+eqr_{3}+\mathcal{K}(r_{+})}{(r_3-r_{+})(r_3-r_{-})(r_3-r_4)(r_3-r_5)}\right](r_{-}-r_{+})\nonumber \\
&+\frac{3\lambda^2 (r_{-}-r_{+})}{\Lambda(r_3-r_{+})(r_3-r_{-})(r_3-r_4)}+\frac{3\mu^2}{\Lambda}\frac{(r_{-}-r_{+})r_3^2}{(r_3-r_{+})(r_3-r_{-})(r_3-r_4)}\nonumber \\
&+\frac{6i\omega \Xi r_3(r_{-}-r_{+})}{\Lambda (r_{3}-r_{+})(r_3-r_{-})(r_3-r_4)}+\frac{3ieq (r_{-}-r_{+})}{(r_3-r_{-})(r_3-r_{+})(r_3-r_4)\Lambda}\nonumber \\
&\left(\frac{-3i}{2\Lambda}\right)\frac{(r_{-}-r_{+})[\mathcal{K}(r_3)+eqr_3]}{(r_3-r_{+})(r_3-r_4)^2(r_3-r_{-})}+
\left(\frac{-3i}{2\Lambda}\right)\frac{(r_{-}-r_{+})[\mathcal{K}(r_3)+eqr_3]}{(r_3-r_{+})(r_3-r_{-})^2(r_3-r_4)}\nonumber \\
&+\left(\frac{-3i}{2\Lambda}\right)\frac{(r_{-}-r_{+})[\mathcal{K}(r_3)+eqr_3]}{(r_3-r_{+})^2(r_3-r_{-})(r_3-r_4)}\nonumber \\
&+\left(\frac{3i}{2\Lambda}\right)\frac{(r_{-}-r_{+})^3}{(r_3-r_{+})^2(r_3-r_{-})^2(r_3-r_4)^2}\Biggl\{\Xi\omega z_3^4(r_{-}-r_{+})^2+4\Xi\omega r_{+}z_3^3 (r_{-}-r_{+})\nonumber\\
&+2 eq z_3^3(r_{-}-r_{+})+3z_3^2(\mathcal{K}(r_+)+eqr_+)-2\Xi\omega r_{+}z_3^2(r_{-}-r_{+})-eq(r_{-}-r_{+})z_3^2\nonumber \\
&-2z_3z_4(\mathcal{K}(r_+)+eqr_+)-\Xi\omega z_3^2z_4(r_{-}-r_{+})(r_{-}+r_{+})-eq z_3^2 z_4(r_{-}-r_{+})\nonumber\\
&-2 z_3(\mathcal{K}(r_+)+eqr_+)+z_4(\mathcal{K}(r_+)+eqr_+)\Biggr\},
\end{align}

\begin{align}
L_4&=\frac{1}{\Lambda^2 (r_{-}-r_{+})^6 z_4^3(z_4-1)^3(z_3-z_4)^3}\Biggl[
72\Xi(r_{-}-r_{+})^2 z_4^4\omega[\mathcal{K}(r_{+})+eqr_{+}]\nonumber \\
&+36\Xi^2\omega^2z_4^4(r_{-}-r_{+})^2r_{+}(r_{+}-r_{-})+18\Xi\omega eq z_4^4(r_{-}-r_{+})^2(r_{+}-r_{-})\nonumber \\
&+4\; 36 \Xi\omega z_4^4(r_{-}-r_{+})^2[\Xi\omega r_{+}^2+eqr_{+}]+36e^2 q^2 z_4^4(r_{-}-r_{+})^2\nonumber \\
&+108 \Xi\omega r_{+}(r_{+}-r_{-})z_4^2[\mathcal{K}(r_{+})+eqr_{+}]+108eqr_{+}z_4^2[\mathcal{K}(r_{+})+eqr_{+}]\nonumber \\
&+54 e q z_4^2 (r_{+}-r_{-})\mathcal{K}(r_{+})-54e^2 q^2 r_{-}r_{+}z_4^2+54z_4^2\mathcal{K}^2(r_{+})\nonumber \\
&+180 z_4^3 \Xi\omega r_{+}(r_{-}-r_{+})[\mathcal{K}(r_{+})+eqr_{+}]-36z_4^3\Xi\omega (r_{-}-r_{+})^2[\mathcal{K}(r_{+})+eqr_{+}]\nonumber \\
&-72\Xi\omega z_4^3(r_{-}-r_{+})^2 [eqr_{+}+\omega\Xi r_{+}^2]+90eqz_4^3(r_{-}-r_{+})[\mathcal{K}(r_{+})+eqr_{+}]\nonumber \\
&-18e^2q^2z_4^3(r_{-}-r_{+})^2\nonumber \\
&+18\Xi^2 \omega^2 z_4^6(r_{-}-r_{+})^4+108\Xi^2\omega^2 z_4^5 r_{+}(r_{-}-r_{+})^3+54\Xi e q\omega z_4^5(r_{-}-r_{+})^3\nonumber \\
&-18\Xi\omega z_4^4 z_3(r_{-}-r_{+})^3[eq+\omega\Xi(r_{+}+r_{-})]\nonumber \\
&-36\Xi\omega z_3 z_4^3 (r_{-}-r_{+})^2[\mathcal{K}(r_{+})+eqr_{+}]-36\Xi^2\omega^2
z_3 z_4^3 (r_{-}-r_{+})^2(r_{+}+r_{-})r_{+}\nonumber \\
&-18 \Xi e q \omega z_3 z_4^3(r_{-}-r_{+})^2(r_{-}+3r_{+})-18e^2 q^2 z_3 z_4^3(r_{-}-r_{+})^2\nonumber \\
&+18 eq z_3z_4(r_{-}-r_{+})[\mathcal{K}(r_{+})+eqr_{+}]+36\Xi\omega r_{+}z_3z_4(r_{-}-r_{+})[\mathcal{K}(r_{+})+eqr_{+}]\nonumber \\
&+36z_3z_4[\mathcal{K}(r_{+})+eqr_{+}]^2\nonumber \\
&-108\Xi\omega r_{+}(r_{-}-r_{+})z_3z_4^2[\mathcal{K}(r_{+})+eqr_{+}]
-54eqz_4^2 z_3 (r_{-}-r_{+})[\mathcal{K}(r_{+})+eqr_{+}]\Biggr]\nonumber \\
&-\frac{3\mu^2 r_4^2 (r_{-}-r_{+})}{\Lambda (r_4-r_{+})(r_4-r_{-})(r_3-r_4)}-\frac{3\lambda^2 (r_{-}-r_{+})}{\Lambda (r_4-r_{+})(r_4-r_{-})(r_3-r_4)}\nonumber \\
&-\frac{3ieq (r_{-}-r_{+})}{\Lambda (r_3-r_4)(r_{4}-r_{-})(r_4-r_{+})}-\frac{6i\omega \Xi r_4(r_{-}-r_{+})}{\Lambda (r_4-r_{+})(r_4-r_{-})(r_3-r_4)}\nonumber \\
&+\frac{3i}{\Lambda}\frac{r_{-}-r_{+}}{(r_4-r_{+})(r_4-r_{-})(r_3-r_4)(r_4-r_5)}[\Xi \omega (r_4-r_{+})(r_4+r_{+})+\mathcal{K}(r_{+})+eqr_4]\nonumber \\
&+\left(\frac{3i}{2\Lambda}\right)\left(\frac{2\Xi\omega r_4+eq}{(r_4-r_{-})(r_3-r_4)(r_4-r_{+})}\right)(r_{-}-r_{+})\nonumber \\
&+\left(\frac{-3i}{2\Lambda}\right)\frac{(r_{-}-r_{+})[\mathcal{K}(r_4)+eqr_{4}]}{(r_4-r_{+})(r_3-r_4)^2(r_4-r_{-})}
+\left(\frac{3i}{2\Lambda}\right)\frac{(r_{-}-r_{+})[\mathcal{K}(r_4)+eqr_{4}]}{(r_4-r_{+})^2(r_4-r_{-})(r_3-r_4)}\nonumber\\
&\left(\frac{3i}{2\Lambda}\right)\frac{(r_{-}-r_{+})[\mathcal{K}(r_4)+eqr_{4}]}{(r_4-r_{+})(r_4-r_{-})^2(r_3-r_4)}\nonumber \\
&+\left(\frac{-3i}{2\Lambda}\right)\frac{(r_{-}-r_{+})^3}{(r_4-r_{+})^2(r_4-r_{-})^2(r_3-r_4)^2}\Biggl\{
-\Xi\omega z_4^4 (r_{-}-r_{+})^2-4\Xi\omega r_{+}z_4^3(r_{-}-r_{+})\nonumber \\
&-2eq z_4^3 (r_{-}-r_{+})-3z_4^2(\mathcal{K}(r_+)+eqr_{+})+2\Xi\omega r_+ z_4^2(r_{-}-r_{+})+eqz_4^2 (r_{-}-r_{+})\nonumber \\
&+ \Xi\omega z_3z_4^2(r_{-}-r_{+})(r_{-}+r_{+})+eqz_3z_4^2(r_{-}-r_{+})+2z_3z_4(\mathcal{K}(r_+)+eqr_{+}) \nonumber \\
&-z_3 (\mathcal{K}(r_+)+eqr_{+})+2z_4(\mathcal{K}(r_+)+eqr_{+})\Biggr\},
\end{align}

\begin{align}
L_5=-\frac{3i}{\Lambda}\frac{r_{-}-r_{+}}{(r_5-r_{+})(r_5-r_{-})(r_3-r_5)(r_4-r_5)}[\Xi\omega (r_5-r_{+})(r_5+r_{+})+\mathcal{K}(r_{+})+eqr_5].
\end{align}
In eqn.(\ref{PreGeneHeunKNdS}), we use the notation $R_{-}(z)\equiv R^{(-)}(r_1+z(r_2-r_1))$.

Let us calculate the exponents of the five singular points in (\ref{PreGeneHeunKNdS}). For the singularity at $z=0$ the indicial equation becomes
\begin{equation}
F(s)=s(s-1)+\frac{1}{2}s+Q_1=s^2-\frac{1}{2}s+Q_1,
\end{equation}
with roots
\begin{equation}
s_{1,2}^{z=0}=\frac{\frac{1}{2}\pm \sqrt{(-\frac{1}{2})^2-4Q_1}}{2}=:\mu_1.
\end{equation}
Likewise the exponents at the singular points $z=1,z=z_3,z=z_4$ read as follows:
\begin{align}
s_{1,2}^{z=1}&=\frac{\frac{1}{2}\pm \sqrt{(-\frac{1}{2})^2-4Q_2}}{2}=:\mu_2,\\
s_{1,2}^{z=z_3}&=\frac{\frac{1}{2}\pm \sqrt{(-\frac{1}{2})^2-4Q_3}}{2}=:\mu_3,\\
s_{1,2}^{z=z_4}&=\frac{\frac{1}{2}\pm \sqrt{(-\frac{1}{2})^2-4Q_4}}{2}=:\mu_4.
\end{align}
On the other hand the exponents at the fifth singular point $z=z_5$ are computed to be $\{0,2\}$, i.e. they are both integers.
Now by applying the $F-$ homotopic transformation of the dependent variable:
\begin{equation}
R_{-}(z)=z^{\mu_1}(z-1)^{\mu_2}(z-z_3)^{\mu_3}(z-z_4)^{\mu_4}\overline{R}_{-}(z),
\end{equation}
equation (\ref{PreGeneHeunKNdS}) transforms into:
\begin{align}
&\Biggl\{ \frac{{\rm d}^2}{{\rm d}z^2}+\left(\frac{1+4\mu_1}{2z}+
\frac{1+4\mu_2}{2(z-1)}+\frac{1+4\mu_3}{2(z-z_3)}+\frac{1+4\mu_4}{2(z-z_4)}-\frac{1}{z-z_5}\right)\frac{{\rm d}}{{\rm d}z}\nonumber \\
&+\frac{L_1^{\prime}}{z}+\frac{L_2^{\prime}}{z-1}+\frac{L_3^{\prime}}{z-z_3}+
\frac{L_4^{\prime}}{z-z_4}+\frac{L_5^{\prime}}{z-z_5}\Biggr\}\overline{R}_{-}(z)=0,
\label{MasDiracGHrKNdS}
\end{align}
where
\begin{align}
L_1^{\prime}&=L_1+\left[\frac{-4\mu_1\mu_2-\mu_2-\mu_1}{2}\right]+\left[\frac{-4\mu_1\mu_3-\mu_3-\mu_1}{2z_3}\right]
+\left[\frac{-4\mu_1\mu_4-\mu_4-\mu_1}{2z_4}\right]+\frac{\mu_1}{z_5},\nonumber\\
L_2^{\prime}&=L_2+\left[\frac{4\mu_1\mu_2+\mu_1+\mu_2}{2}+\frac{4\mu_2\mu_3+\mu_2+\mu_3}{2(1-z_3)}+
\frac{4\mu_2\mu_4+\mu_2+\mu_4}{2(1-z_4)}+\frac{\mu_2}{z_5-1}\right],\nonumber \\
L_3^{\prime}&=L_3+\frac{4\mu_1\mu_3+\mu_3+\mu_1}{2z_3}+\frac{4\mu_2\mu_3+\mu_3+\mu_2}{2(z_3-1)}+
\frac{-4\mu_3\mu_4-\mu_4-\mu_3}{2(z_4-z_3)}+\frac{\mu_3}{z_5-z_3},\nonumber \\
L_4^{\prime}&=L_4+\frac{4\mu_2\mu_4+\mu_2+\mu_4}{2(z_4-1)}+\frac{4\mu_3\mu_4+\mu_3+\mu_4}{2(z_4-z_3)}+
\frac{4\mu_1\mu_4+\mu_4+\mu_1}{2z_4}-\frac{\mu_4}{z_4-z_5},\nonumber \\
L_5^{\prime}&=L_5-\frac{\mu_1}{z_5}-\frac{\mu_2}{z_5-1}-\frac{\mu_3}{z_5-z_3}-\frac{\mu_4}{z_5-z_4}.
\end{align}
We observe that Eqn.(\ref{MasDiracGHrKNdS}) possess five regular finite singularities and therefore constitutes a highly non-trivial  generalisation of the Heun differential equation-the latter has three finite regular singular points.

As we shall see in section \ref{KNDiracSection} for zero cosmological constant the corresponding  radial equation-see eqn.(\ref{GHERADIALKN})-also leads to a generalisation of Heun's equation, in particular it has the specific mathematical structure of a GHE, however with fewer finite singularities than eqn (\ref{MasDiracGHrKNdS}).

\subsection{Asymptotic forms of the radial equation in KNdS spacetime}\label{AsymptotsLambdaRegge}

The investigation of the asymptotic forms of the radial equation (\ref{IntervenKNdSrad}) can be facilitated if (\ref{IntervenKNdSrad}) is transformed by writing
\begin{equation}
R^{(-)}=(\Delta_r^{KN})^{1/4}(r^2+a^2)^{-1/2}(\lambda+i\mu r)^{1/2}\mathcal{R}_{\Lambda},
\end{equation}
and using the Regge-Wheeler-like (or ''tortoise'') coordinate
\begin{equation}
{\rm d}r^*=\frac{r^2+a^2}{\Delta_r^{KN}}{\rm d}r.
\label{ReggeW}
\end{equation}
The radial equation becomes then
\begin{align}
\frac{{\rm d}^2\mathcal{R}_{\Lambda}}{{\rm d}r^{*2}}&+\Biggl\{\frac{\left(K^2+\frac{iK}{2}\frac{{\rm d}\Delta_r^{KN}}{{\rm d}r}+\Delta_r^{KN}\left[\frac{-\mu K}{\lambda+i\mu r}-2i\omega r\Xi-ieq-\mu^2 r^2-\lambda^2\right]\right)}{(r^2+a^2)^2}\nonumber \\
&-\mathcal{G}_{\Lambda}^2-\frac{{\rm d}}{{\rm d}r^*}\mathcal{G}_{\Lambda}+\frac{1}{4}\frac{i\mu\Delta_r^{KN}}{(\lambda+i\mu r)(r^2+a^2)^2}\left[\frac{{\rm d}\Delta_r^{KN}}{{\rm d}r}-\frac{3i\mu\Delta_r^{KN}}{\lambda+i\mu r}\right]\Biggr\}\mathcal{R}_{\Lambda}=0,
\label{tortoiseDELambda}
\end{align}
where we define:
\begin{equation}
\mathcal{G}_{\Lambda}:=-\frac{1}{4}\frac{{\rm d}\Delta_r^{KN}/{\rm d}r}{r^2+a^2}+\frac{r\Delta_r^{KN}}{(r^2+a^2)^2}.
\end{equation}
Integrating Eqn.(\ref{ReggeW}) we obtain the $r^*(r)$ relation:
\begin{align}
r^*&=\int\frac{r^2+a^2}{\Delta_r^{KN}}{\rm d}r=\frac{3(a^2+r_+^2)}{(r_{\Lambda^-}-r_+)(r_+-r_{\Lambda^+})(r_+-r_-)\Lambda}\log|r-r_+|\nonumber\\
&+\frac{3(a^2+r_-^2)}{(r_{\Lambda^-}-r_)(r_{-}-r_{\Lambda^+})(r_{-}-r_+)\Lambda}\log|r-r_-|
+\frac{3(a^2+r_{\Lambda^+}^2)}{(r_{\Lambda^-}-r_{\Lambda^+})(r_{\Lambda^+}-r_-)(r_{\Lambda^+}-r_+)\Lambda}\log|r-r_{\Lambda^+}|\nonumber\\
&+\frac{(-3(a^2+r_{\Lambda^-})^2)}{(r_{\Lambda^-}-r_{\Lambda^+})(r_{\Lambda^-}-r_-)(r_{\Lambda^-}-r_+)\Lambda}
\log|r-r_{\Lambda^-}|+C.
\label{LambdaRWcoord}
\end{align}
\subsubsection{The near event horizon limit $r\rightarrow r_+$}

In the near event horizon limit $r\rightarrow r_+$ ($r^*\rightarrow -\infty$), equation (\ref{tortoiseDELambda}) takes the form:
\begin{equation}
\frac{{\rm d}^2 \mathcal{R}_{\Lambda}}{{\rm d}r^{*2}}+\left((\Xi \omega-\Xi m \omega_{+}+\frac{eqr_+}{r_+^2+a^2})+\frac{i}{4}\frac{1}{r_+^2+a^2}\frac{{\rm d}\Delta_r^{KN}}{{\rm d}r}\Big|_{r=r_{+}}\right)^2\mathcal{R}_{\Lambda}\sim 0,
\label{eventhorizonKNdS}
\end{equation}
where we define $\omega_+:=\frac{a}{r_+^2+a^2}$.

\subsubsection{The near cosmological horizon limit $r\rightarrow r_{\Lambda}^+$}
In the near cosmological horizon limit $r\rightarrow r_{\Lambda}^+$ ($r^*\rightarrow \infty$) equation (\ref{tortoiseDELambda}) takes the form:
\begin{equation}
\frac{{\rm d}^2\mathcal{R}_{\Lambda}}{{\rm d}r^{*2}}+\left((\Xi \omega-\Xi m\omega_{\Lambda^+}+\frac{eqr_{\Lambda^+}}{r_{\Lambda^+}^2+a^2})+\frac{i}{4}\frac{1}{r_{\Lambda^+}^2+a^2}
\frac{{\rm d}\Delta_r^{KN}}{{\rm d}r}\Big|_{r=r_{\Lambda^+}}\right)^2\mathcal{R}_{\Lambda}\sim0,
\label{cosmohorizonLambda}
\end{equation}
where we define $\omega_{\Lambda^+}=\frac{a}{r_{\Lambda^+}^2+a^2}$.
\section{The general relativistic Dirac equation in the Kerr-Newman spacetime}\label{KNDiracSection}

In the case of the Kerr-Newman spacetime (KN) the Kinnersley null tetrad is a special case of (\ref{GLAMBDAKNNULLTETRAD}) for $\Lambda=0$ and takes the form:
\begin{align*}
l^{\mu}&=\left(\frac{r^2+a^2}{\Delta^{KN}},1,0,\frac{a}{\Delta^{KN}}\right),\;
n^{\mu}=\left(\frac{r^2+a^2}{2\rho^2},-\frac{\Delta^{KN}}{2\rho^2},0,\frac{a}{2\rho^2}\right),\\
m^{\mu}&=\frac{1}{\sqrt{2}(r+ia\cos\theta)}\left(i a\sin\theta,0,1,\frac{i}{\sin\theta}\right).
\end{align*}
The quantity $\Delta^{KN}$ is given by (\ref{DiscrimiL}) by setting $\Lambda=0$, i.e. $\Delta^{KN}:=r^2+a^2+e^2-2 M r$.
In the KN spacetime the non-vanishing $\lambda$-symbols are computed by setting $\Lambda=0$ in the expressions (\ref{lambdasymbolsL}), a procedure that yields:
\begin{align}
\lambda_{213}&=-\sqrt{2}\frac{a^2\sin\theta\cos\theta}{\rho^2\overline{\rho}},\;\lambda_{324}=
-\frac{ia\cos\theta\Delta^{KN}}{\rho^4},\\
\lambda_{243}&=\frac{-\Delta^{KN}}{2\rho^2\overline{\rho}},\;\lambda_{234}=-\frac{\Delta^{KN}}{2\rho^2(r-ia\cos\theta)},\\
\lambda_{134}&=\frac{1}{r-ia\cos\theta}=\frac{1}{\overline{\rho}^*},\;\lambda_{314}=\frac{-2ia\cos\theta}{\rho^2},\\
\lambda_{122}&=-\frac{1}{2}\frac{{\rm d}\Delta^{KN}}{{\rm d}r}\frac{1}{\rho^2}+r\frac{\Delta^{KN}}{\rho^4},
\;\lambda_{132}=\frac{\sqrt{2}ira\sin\theta}{\overline{\rho}\rho^2},\\
\lambda_{334}&=\frac{1}{\sin\theta\sqrt{2}\overline{\rho}}\frac{{\rm d}(\sin\theta)}{{\rm d}\theta}+\frac{ia\sin\theta}{\sqrt{2}\overline{\rho}^2}=\frac{r\cos\theta+ia}{\sqrt{2}\overline{\rho}^2\sin\theta},\;\lambda_{241}=\frac{ira\sqrt{2}\sin\theta}{\rho^2\overline{\rho}^*}
,\\
\lambda_{412}&=\frac{\sqrt{2}a^2 \sin\theta\cos\theta}{\rho^2 (r-ia\cos\theta)},\;
\lambda_{443}=\frac{{\rm d}}{{\rm d}\theta}(\sin\theta)\frac{1}{\sqrt{2}\overline{\rho}^*\sin\theta}-\frac{ia\sin\theta}{\sqrt{2}\overline{\rho}^{*2}},
\label{lambdasymbolsL=0}
\end{align}
while the Dirac equations in the curved background of the KN black hole spacetime have the same general form as in the more general KNdS  case:
\begin{align}
(D^{\prime}-\gamma+\mu+iqn^{\mu}A_{\mu})P^{(1)}+(\delta-\tau+\beta+iq m^{\mu}A_{\mu})P^{(0)}&=-i\mu_{*}\overline{Q}^{(\dot{0})},\label{KNBHDIR1}\\
(-D+\varrho-\varepsilon-iql^{\mu}A_{\mu})P^{(0)}+(-\delta^{\prime}+\alpha-\pi-iq\overline{m}^{\mu}A_{\mu})P^{(1)}&=-i\mu_{*}\overline{Q}^{(\dot{1})}
\label{KNBHDIRAC2}
\end{align}
however with different Ricci coefficients and differential operators for the directional derivatives calculated from those of the KNdS case by setting $\Lambda=0$.
In equations (\ref{KNBHDIR1}),(\ref{KNBHDIRAC2}):
\begin{align}
D&=\nabla_{0\dot{0}}=o^A o^{\dot{A}}\nabla_{A\dot{A}}=l^{\mu}\partial_{\mu}\\
\delta&=\nabla_{0\dot{1}}=o^A \iota^{\dot{A}}\nabla_{A\dot{A}}=m^{\mu}\partial_{\mu}\\
\delta^{\prime}\equiv\overline{\delta}&=\nabla_{1\dot{0}}=\iota^{A}o^{\dot{A}}\nabla_{A\dot{A}}=\overline{m}^{\mu}\partial_{\mu}\\
D^{\prime}&=\nabla_{1\dot{1}}=\iota^{A}\iota^{\dot{A}}\nabla_{A\dot{A}}=n^{\mu}\partial_{\mu}
\end{align}
where $o^A,\iota^A$ are basis spinors with $o_A \iota^A=1$. The quantities $\varrho,\varepsilon,$ etc are the spin coefficients (\ref{lambdaricci})-(\ref{cosmoconsricci}), after setting $\Lambda=0$.
Also $A_{B\dot{B}}=\sigma^{\mu}_{B\dot{B}}A_{\mu}$ and
\begin{align}
\sigma_{0\dot{0}}^{\mu}&=\iota^1\iota^{\dot{1}}l^{\mu}-o^{\dot{1}}\iota^1 m^{\mu}-o^1 \iota^{\dot{1}}\overline{m}^{\mu}+o^1 o^{\dot{1}}n^{\mu}, \\
\sigma_{1\dot{0}}^{\mu}&=-\iota^0 \iota^{\dot{1}}l^{\mu}+\iota^0 o^{\dot{1}}m^{\mu}+o^0\iota^{\dot{1}}\overline{m}^{\mu}-o^0o^{\dot{1}}n^{\mu},\\
\sigma^{\mu}_{0\dot{1}}&=-\iota^1\iota^{\dot{0}}l^{\mu}+o^{\dot{0}}\iota^1 m^{\mu}+o^1\iota^{\dot{0}}\overline{m}^{\mu}-o^1o^{\dot{0}}n^{\mu},\\
\sigma^{\mu}_{1\dot{1}}&=\iota^{\dot{0}}\iota^{0}l^{\mu}-\iota^0 o^{\dot{0}}m^{\mu}-o^0\iota^{\dot{0}}\overline{m}^{\mu}+o^0o^{\dot{0}}n^{\mu}.
\end{align}
We have also made use of the tetrad formalism to define associated local spinor components
\begin{equation}
y^{(k)}=\zeta^{(k)}_{\;A}y^A,\;\;y_{(k)}=\zeta_{(k)}^A y_A,\;\;\;k\in\{0,1\},
\end{equation}
for the original 2-spinor components $y^A,y_A\in \mathbb{C}^2$ through the spinor dyad a la Penrose
\begin{equation}
\zeta_{(k)}^A=(o^A,\iota^A).
\end{equation}
The spin coefficients are defined generically as
\begin{equation}
\Gamma_{\mathbf{A}\mathbf{\dot{A}}(c)}^{(b)}:=\zeta_A^{(b)}\nabla_{\mathbf{A\dot{A}}}\zeta_{(c)}^A.
\end{equation}

\subsection{Separation of the general relativistic Dirac equation in the Kerr-Newman space-time}
Applying the separation ansatz:
\begin{align}
P^{(0)}&=\frac{e^{-i\omega t} e^{im\phi}S^{(-)}(\theta)R^{(-)}(r)}{\sqrt{2}(r-ia\cos\theta)},\;
P^{(1)}=e^{-i\omega t}e^{im\phi}\frac{S^{(+)}(\theta)R^{(+)}(r)}{\sqrt{\Delta^{KN}}},\\
\overline{Q}^{(\dot{0})}&=\frac{-e^{-i\omega t}e^{im\phi}S^{(+)}(\theta)R^{(-)}(r)}{\sqrt{2}(r+ia\cos\theta)},\;
\overline{Q}^{(\dot{1})}=\frac{e^{-i\omega t}e^{im\phi}}{\sqrt{\Delta^{KN}}}S^{(-)}(\theta)R^{(+)}(r).
\end{align}
we obtain the following ordinary radial and angular differential equations
\begin{align}
&\sqrt{\Delta^{KN}}\left[\frac{{\rm d}R^{(-)}(r)}{{\rm d}r}+\left\{\frac{ima-i\omega(r^2+a^2)}{\Delta^{KN}}\right\}R^{(-)}(r)-\frac{iqerR^{(-)}(r)}{\Delta^{KN}}\right]=(\lambda+i\mu r)R^{(+)}(r),\\
&\sqrt{\Delta^{KN}}\frac{{\rm d}R^{(+)}(r)}{{\rm d}r}-\frac{i(ma-\omega(r^2+a^2))}{\sqrt{\Delta^{KN}}}R^{(+)}(r)+\frac{ieqr}{\sqrt{\Delta^{KN}}}R^{(+)}(r)=
(\lambda-i\mu r)R^{(-)}(r),\\
&\frac{{\rm d}S^{(+)}(\theta)}{{\rm d}\theta}+\left[\frac{m}{\sin\theta}-\omega a\sin\theta+\frac{\cot\theta}{2}\right]
S^{(+)}(\theta)=(-\lambda+\mu a \cos\theta)S^{(-)}(\theta)
\label{cornerKN1},\\
&\frac{{\rm d}S^{(-)}(\theta)}{{\rm d}\theta}+\left[\omega a\sin\theta-\frac{m}{\sin\theta}+\frac{\cot\theta}{2}\right]S^{(-)}(\theta)=(\lambda+\mu a \cos\theta)S^{(+)}(\theta)\label{corner2KNbh},
\end{align}
where $\mu_{*}=\mu/\sqrt{2}$.

The radial equation for the $R^{(-)}$ mode can be written as follows \cite{page}:
\begin{align}
\Biggl[\Delta^{KN}\frac{{\rm d}^2}{{\rm d}r^2}&+\left(r-M-\frac{i\mu\Delta^{KN}}{\lambda+i\mu r}\right)\frac{\rm d}{{\rm d}r}+\frac{K^2+i(r-M)K}{\Delta^{KN}}\nonumber \\
&-2i\omega r-ieq-\frac{\mu K}{\lambda+i\mu r}-\mu^2 r^2-\lambda^2\Biggr]R_{-\frac{1}{2}}(r)=0,
\label{raddira}
\end{align}
where $K\equiv(r^2+a^2)\omega-ma+eqr$ and $R_{-\frac{1}{2}}(r)\equiv R^{(-)}(r)$.

\subsection{Transforming the angular KN equation into a generalised Heun form}\label{anguGHE}
Using the variable $x:=\cos\theta$ the previous  angular differential equation (\ref{LEEPAGE}) is written:
\begin{align}
&\Biggl\{(1-x^2)\frac{{\rm d}^2}{{\rm d}x^2}-2x\frac{\rm d}{{\rm d}x}-\frac{a\mu(1-x^2)}{\lambda+a\mu x}\frac{\rm d}{{\rm d}x}\nonumber \\
&+\frac{a\mu(\frac{x}{2}+a\omega (1-x^2)-m)}{\lambda+a\mu x}+a^2(\omega^2-\mu^2)x^2+a\omega x-\frac{1}{4} \nonumber \\
&+\lambda^2+2am\omega-a^2\omega^2+\frac{-m^2+mx-\frac{1}{4}}{1-x^2}\Biggr\}S^{-}(x)=0,
\label{DiracMas}
\end{align}
where $S^{-}(x)\equiv S^{(-)}(\arccos x)$.
Equation (\ref{DiracMas}) possess three finite singularities at the points $x=\pm 1, x=-\lambda/a\mu$ which we denote using the triple:$(a_1,a_2,a_3)=(-1,+1,-\lambda/a\mu)$. Applying the transformation of the independent variable:
\begin{equation}
z=\frac{x-a_1}{a_2-a_1}=\frac{x+1}{2},
\end{equation}
maps $(a_1,a_2)$ to $(0,1)$ while the remaining singularity $a_3$ is mapped to $z=z_3$:
\begin{equation}
z_3=\frac{a_3-a_1}{a_2-a_1}=\frac{-\lambda/a\mu+1}{2}.
\end{equation}
In terms of the new variable $z$ the differential equation (\ref{DiracMas}) becomes:
\begin{align}
& \Biggl\{ \frac{\mathrm{d}^2}{\mathrm{d}z^2}+\left[\frac{1}{z}+\frac{1}{z-1}
-\frac{1}{z-z_3}\right]\frac{\mathrm{d}}{\mathrm{d}z}+4a^2(\mu^2-\omega^2)+\frac{a^2(\mu^2-\omega^2)}{z-1}-\frac{a^2(\mu^2-\omega^2)}{z} \nonumber \\
&+\frac{1}{16}\frac{-4m^2-4m-1}{z^2}+\frac{1}{8}\frac{4m^2+1}{z-1}+\frac{1}{16}\frac{-4m^2+4m-1}{(z-1)^2}
+\frac{1}{8}\frac{-4m^2-1}{z} \nonumber \\
&+\frac{1}{4}\frac{8a\omega z_3^2-8a\omega z_3+2m-2z_3+1}{z_3(z_3-1)(z-z_3)}+\frac{-2m+1}{(z-1)(-4+4z_3)}+\frac{1}{4}\frac{2m+1}{z_3 z}\nonumber \\
&+\frac{1}{4}\frac{4a^2\omega^2 -8am\omega-4a\omega-4\lambda^2+1}{z-1}+
\frac{1}{4}\frac{-4a^2\omega^2+8am\omega-4a\omega+4\lambda^2-1}{z}
\Biggr\}S(z)=0,
\label{preGHEUN}
\end{align}
where $S(z)\equiv S^-(2z-1)$.
Equation (\ref{preGHEUN}) is a particular case of the following generic form:
\begin{align}
S^{\prime\prime}(z)&+\left(\sum_{i=1}^2 \frac{A_i}{z-z_i}+\frac{-1}{z-z_3}+E_0\right)S^{\prime}(z)\nonumber \\
&+\left(\sum_{i=1}^3\frac{C_i}{z-z_i}+\sum_{i=1}^2\frac{B_i}{(z-z_i)^2}+D_0\right)S(z)=0.
\end{align}
with $z_1=0$ and $z_2=1$.
Let us calculate the exponents of the singularities. The indicial equation for the $z=0$ singularity is:
\begin{equation}
F(s)=s(s-1)+s-\frac{1}{4}\left(m+\frac{1}{2}\right)^2=0,
\end{equation}
with roots: $s_{1,2}^{z=0}=\pm\frac{1}{2}|m+1/2|$. Likewise the exponents at the singularities $z=1,z=z_3$ are computed to be: $\{\frac{|m-\frac{1}{2}|}{2},-\frac{|m-\frac{1}{2}|}{2}\}, \{0,2\}$ respectively. Applying the index transformation for the dependent variable $S$:
\begin{equation}
S(z)=e^{\nu z}z^{\alpha_1}(z-1)^{\alpha_2}(z-z_3)^{\alpha_3}\bar{S}(z),
\label{fhomotopyDIRAC}
\end{equation}
where $\nu=\pm i2a\sqrt{\mu^2-\omega^2},\alpha_1=\frac{1}{2}|m+1/2|,\alpha_2=\frac{1}{2}|m-1/2|$ and $\alpha_3=0$ yields \footnote{This calculation of ours rectifies our previous erroneous result of subsection $5.2.2$ in \cite{Kraniotis1} that the angular equation for a massive spin-half particle in the KN background reduces to Heun's form.} a generalised Heun equation (GHE)\footnote{See section \ref{SolvingGHE} for the mathematical background on the generalised Heun differential equation (GHE)  and in particular eqn. (\ref{SCHASCHMI}) for its generic mathematical representation.}:
\begin{equation}
\Biggl\{\frac{\mathrm{d}^2}{\mathrm{d}z^2}+\left[\frac{2\alpha_1+1}{z}+
\frac{2\alpha_2+1}{z-1}+\frac{-1}{z-z_3}\pm 4a\sqrt{\omega^2-\mu^2}\right]\frac{\mathrm{d}}{\mathrm{d}z}+
\sum_{i=1}^3\frac{C_i^{\prime}}{z-z_i}\Biggr\}\bar{S}(z)=0\Leftrightarrow,
\label{GHeunMunchenPDIRACmas}
\end{equation}
\begin{align}
\Biggl\{\frac{\mathrm{d}^2}{\mathrm{d}z^2}&+\left[\frac{2\alpha_1+1}{z}+
\frac{2\alpha_2+1}{z-1}+\frac{-1}{z-z_3}\pm 4a\sqrt{\omega^2-\mu^2}\right]\frac{\mathrm{d}}{\mathrm{d}z}\nonumber \\
&+\frac{\beta_0+\beta_1 z+\beta_2 z^2}{z(z-1)(z-z_3)}\Biggr\}\bar{S}(z)=0,
\label{GONGHE}
\end{align}
where
\begin{equation}
\beta_2=C_1^{\prime}+C_2^{\prime}+C_3^{\prime}=2\nu(\alpha_1+\alpha_2)+\nu(A_1+A_2-1)=2\nu(\alpha_1+\alpha_2)+\nu.
\end{equation}

\subsection{Transforming the radial KN equation into a generalised Heun form}\label{AKTINAGHE}
We will apply the following transformation of variables to the radial equation (\ref{raddira})
\begin{equation}
z=\frac{r-r_{-}}{r_{+}-r_{-}},
\end{equation}
which transforms the radii of the event and Cauchy horizons to the points $z=1$ and $z=0$ respectively and the singularity $r_3=\frac{i\lambda}{\mu}$ to the point $z_3=\frac{r_3-r_{-}}{r_{+}-r_{-}}$.
The radial equation (\ref{raddira}) then takes the form:
\begin{align}
&\Biggl\{\frac{{\rm d}^2}{{\rm d}z^2}+\left[\left(\frac{r_{+}-M}{r_{+}-r_{-}}\right)\frac{1}{z-1}-\left(\frac{r_{-}-M}{r_{+}-r_{-}}\right)\frac{1}{z}
-\frac{1}{z-z_3}\right]\frac{\rm d}{{\rm d}z}\nonumber \\
&+(\omega^2-\mu^2)(r_{+}-r_{-})^2-\frac{\mu^2 r_{+}^2}{z-1}+\frac{\mu^2 r_{-}^2}{z}+
\frac{1}{z}(ieq+2i\omega r_{-}+\lambda^2)+\frac{1}{z-1}(-2i\omega r_{+}-ieq-\lambda^2) \nonumber \\
&-\frac{1}{z}\frac{i\mathcal{H}_{-}}{(r_{-}-r_{+})z_3}
+\frac{1}{z-1}\frac{i\mathcal{H}_{+}}{(r_{-}-r_{+})(z_3-1)} \nonumber \\
&-\frac{1}{z-z_3}\frac{1}{(r_{-}-r_{+})z_3(z_3-1)}\left(i(\omega(r_{-}-r_{+})^2z_3^2+eq(r_{+}-r_{-})z_3-
2\omega r_{-}z_3(r_{-}-r_{+})+\mathcal{H}_{-})\right)+\nonumber \\
&\frac{1}{z}\frac{1}{(r_{-}-r_{+})^2}\Biggl[i\Biggl(a^2\omega(r_{+}+r_{-})-2M\omega(a^2+r_{+}r_{-})+\omega r_{-}^2(3r_{+}-r_{-})\nonumber \\
&+M(2 am-eq (r_{+}+r_{-}))-am(r_{+}+r_{-})+2eq r_+ r_-\Biggr)\nonumber\\
&+2eqa(r_{+}+r_{-})(a\omega-m)+2eq\omega r_{-}^2(3r_{+}-r_{-})+2e^2q^2r_{+}r_{-}\nonumber \\
&+4\omega^2r_{+}r_{-}(r_{-}^2+a^2)-4am\omega(a^2+r_{+}r_{-})+2\omega^2(a^4-r_{-}^4)+2a^2m^2\Biggr]\nonumber \\
&+\frac{1}{z-1}\frac{1}{(r_{-}-r_{+})^2}\Biggl[i\Biggl(-a^2\omega(r_{+}+r_{-})+2M\omega(a^2+r_{+}r_{-})+\omega r_{+}^2(r_{+}-3r_{-})\nonumber \\
&+am(r_{+} +r_{-}-2M)+eqM(r_++r_-)-2eqr_- r_+\Biggr)\nonumber \\
&+2eqa(r_{+}+r_{-})(m-a\omega)+2eq\omega r_{+}^2(r_{+}-3r_{-})-2e^2q^2r_{+}r_{-}\nonumber \\
&-4\omega^2r_{-}r_{+}(a^2+r_{+}^2)+4am\omega(a^2+r_{-}r_{+})+2\omega^2(r_{+}^4-a^4)-2m^2a^2\Biggr]\nonumber \\
&+\frac{1}{(z-1)^2}\frac{1}{(r_{-}-r_{+})^2}\Biggl[i\omega(a^2+r_{+}^2)(r_{+}-M)
+i(am-eqr_+)(M-r_+)\nonumber \\
&+eq r_{+}(2a(a\omega-m)+2\omega r_{+}^2)+e^2q^2r_{+}^2\nonumber \\
&+2a\omega r_{+}^2(a\omega -m)+a^2m^2+\omega^2(r_{+}^4+a^4)-2a^3m\omega\Biggr]\nonumber\\
&+\frac{1}{z^2}\frac{1}{(r_{-}-r_{+})^2}\Biggl[i\omega(a^2+r_{-}^2)(r_{-}-M)+i(M-r_-)(am-eqr_-)\nonumber \\
&+eq r_{-}(2a(a\omega-m)+2\omega r_{-}^2)+e^2q^2r_{-}^2\nonumber \\
&+2a\omega r_{-}^2(a\omega -m)+a^2m^2+\omega^2(r_{-}^4+a^4)-2a^3m\omega\Biggr]\Biggr\}R(z)=0,
\label{radialKNbef}
\end{align}
where $R(z)\equiv R_{-\frac{1}{2}}(r(z))$ and
\begin{equation}
\mathcal{H}_{\pm}:=a^2\omega+eqr_{\pm}+\omega r_{\pm}^2-am.
\end{equation}
The indicial equation for the singularity at $z=0$ takes the form:
\begin{equation}
F(s)=s(s-1)-\left(\frac{r_{-}-M}{r_{+}-r_{-}}\right)s+{\mathcal B}_0,
\label{indicialrad0}
\end{equation}
where ${\mathcal B}_0$ is the coefficient of the term $\frac{1}{z^2}$. The roots of equation (\ref{indicialrad0}) are:
\begin{equation}
s_{1,2}^{z=0}=\frac{-\left[\frac{M-r_{+}}{r_{+}-r_{-}}\right]\pm\sqrt{\Delta_0}}{2}=:\boldsymbol{\mu}_1,
\end{equation}
where $\Delta_0=\left(\frac{M-r_{+}}{r_{+}-r_{-}}\right)^2-4{\mathcal B}_0$. Likewise, the indicial equation for the singularity at $z=1$ reads:
\begin{equation}
F(s)=s(s-1)+\left(\frac{r_{+}-M}{r_{+}-r_{-}}\right)s+{\mathcal B}_1,
\label{indicialrad1}
\end{equation}
where ${\mathcal B}_1$ is the coefficient of the term $\frac{1}{(z-1)^2}$ in (\ref{radialKNbef}). The roots of (\ref{indicialrad1}) are:
\begin{equation}
s_{1,2}^{z=1}=\frac{\frac{M-r_{-}}{r_{+}-r_{-}}\pm\sqrt{\Delta_1}}{2}=:\boldsymbol{\mu_2},
\end{equation}
with the discriminant $\Delta_1=\left(\frac{M-r_{-}}{r_{+}-r_{-}}\right)^2-4{\mathcal B}_1$.
Equation (\ref{radialKNbef}) is of the following generic form:
\begin{align}
R^{\prime\prime}(z)&+\left(\sum_{i=1}^2 \frac{A_i}{z-z_i}+\frac{-1}{z-z_3}+E_0\right)R^{\prime}(z)\nonumber \\
&+\left(\sum_{i=1}^3\frac{C_i}{z-z_i}+\sum_{i=1}^2\frac{B_i}{(z-z_i)^2}+D_0\right)R(z)=0.
\end{align}
Applying now the $F-$ homotopic transformation of the dependent variable $R$
\begin{equation}
R(z)=e^{\nu z}z^{\boldsymbol{\mu_1}}(z-1)^{\boldsymbol{\mu_2}}\bar{R}(z)
\end{equation}
where $\nu=\pm i\sqrt{\omega^2-\mu^2}(r_{+}-r_{-})$
transforms the radial part of the Dirac equation in the curved spacetime of the KN black hole into a generalised Heun differential equation:
\begin{align}
\Biggl\{\frac{{\rm d}^2}{{\rm d}z^2}&+\left[\frac{2\boldsymbol{\mu_1}+\frac{M-r_{-}}{r_{+}-r_{-}}}{z}+
\frac{2\boldsymbol{\mu_2}+\frac{r_{+}-M}{r_{+}-r_{-}}}{z-1}-\frac{1}{z-z_3}\pm 2i\sqrt{\omega^2-\mu^2}(r_{+}-r_{-})\right]\frac{{\rm d}}{{\rm d}z}\nonumber \\
&+\sum_{i=1}^3\frac{C_i^{\prime}}{z-z_i}\Biggr\}\bar{R}(z)=0,
\label{GHERADIALKN}
\end{align}
where
\begin{align}
C^{\prime}_1&=C_1+2i\sqrt{\omega^2-\mu^2}(r_{+}-r_{-})\boldsymbol{\mu_1}+A_1 i\sqrt{\omega^2-\mu^2}(r_{+}-r_{-})+\frac{A_1\boldsymbol{\mu_2}}{z_1-z_2}+\frac{A_2\boldsymbol{\mu_1}}{z_1-z_2}\nonumber \\ &+\frac{2\boldsymbol{\mu_1}\boldsymbol{\mu_2}}{z_1-z_2}-
\frac{\boldsymbol{\mu_1}}{z_1-z_3},\\
C^{\prime}_2&=C_2+2i\sqrt{\omega^2-\mu^2}(r_{+}-r_{-})\boldsymbol{\mu_2}+A_2 i\sqrt{\omega^2-\mu^2}(r_{+}-r_{-})+\frac{A_1\boldsymbol{\mu_2}}{z_2-z_1}+
\frac{A_2\boldsymbol{\mu_1}}{z_2-z_1}\nonumber \\
&+\frac{2\boldsymbol{\mu_1}\boldsymbol{\mu_2}}{z_2-z_1}-\frac{\boldsymbol{\mu_2}}{z_2-z_3},\\
C^{\prime}_3&=C_3+\frac{\boldsymbol{\mu_1}}{z_1-z_3}+\frac{\boldsymbol{\mu_2}}{z_2-z_3}-i\sqrt{\omega^2-\mu^2}(r_{+}-r_{-}).
\end{align}
The fact that the radial equation in the KN spacetime has the mathematical structure of a GHE was also noted in \cite{Baticr}, on the basis of a  computation using a Carter tetrad.

\section{Analytic solutions of the generalised Heun equation}\label{SolvingGHE}
In the previous section we managed to reduce the radial and angular parts of the  massive Dirac equation in the background of the Kerr-Newman black hole to the generalised Heun differential equation. In this section we will investigate its solutions.
The general form of the generalised Heun equation in the mathematical literature has the form \cite{RSCHAFKEDSCHMIDT}:
\begin{align}
y^{\prime\prime}(z)&+\left(\frac{1-\mu_0}{z}+\frac{1-\mu_1}{z-1}+\frac{1-\mu_2}{z-a}+\alpha\right)y^{\prime}(z)\nonumber \\
&+\frac{\beta_0+\beta_1 z+\beta_2 z^2}{z(z-1)(z-a)}y(z)=0.
\label{SCHASCHMI}
\end{align}
where $a\in\mathbb{C}\setminus\{0,1\}$ and $\alpha\not=0$ and $\mu_j,\beta_j$ are complex parameters. This equation with three regular singular points and one irregular singular point at $0,1,a$ and $\infty$ has been discussed in \cite{RSCHAFKEDSCHMIDT}.
 In this form the exponents at the singularities $z=0,1,a$ are respectively $\{0,\mu_0\},\;\{0,\mu_1\},\;\{0,\mu_2\}$. Because of the symmetry of (\ref{SCHASCHMI}) in the parameters $\mu:=(\mu_0,\mu_1,\mu_2)$ under certain index or $F-$ homotopic transformations, one allows the coefficient of the $y(z)$ in (\ref{SCHASCHMI}) to have the form \footnote{We note that when  $\alpha=0$ and $\beta_2(=\lambda_0+\lambda_1+\lambda_2)=0$, $\infty$ is also a regular singularity and (\ref{SCHASCHMI}) reduces to Heun's equation.}:
 \begin{equation}
 \frac{\beta_0+\beta_1 z+\beta_2 z^2}{z(z-1)(z-a)}=\sum_{\substack {\sigma,\rho=0\\ \sigma\not =\rho}}^2\frac{1}{2}\left(
 \frac{1-\mu_{\sigma}}{z-z_{\sigma}}\right)\left(\frac{1-\mu_{\rho}}{z-z_{\rho}}\right)+
 \sum_{k=0}^2 \frac{(\alpha/2)(1-\mu_k)+\lambda_k}{z-z_{k}}
 \end{equation}
with $z_0=0,z_1=1,z_2=a\in\mathbb{C}\setminus\{0,1\}$, and arbitrary parameters $\lambda:=(\lambda_0,\lambda_1,\lambda_2)\in \mathbb{C}^3$. The following  was proven in \cite{RSCHAFKEDSCHMIDT}:
\begin{theorem}\label{eta}
Let $a\in\mathbb{C}\setminus\{0,1\}$ be fixed. Then there exists a unique function $\eta=\eta(\cdot;a)$ holomorphic with respect to $(z,\mu,\alpha,\lambda)\in\{z\in\mathbb{C}:
|z|<\min(1,|a|)\}\times \mathbb{C}^7,$ such that, for each $(\mu,\alpha,\lambda),\eta(\cdot,\mu,\alpha,\lambda;a)$ is a solution of (\ref{SCHASCHMI}) satisfying $\eta(0,\mu,\alpha,\lambda;a)=\frac{1}{\Gamma(1-\mu_0)}$. The function $\eta$ can be expanded in a power series
\begin{equation}
\eta(z,\mu,\alpha,\lambda;a)=\sum_{k=0}^{\infty}\frac{T_k(\mu,\alpha,\lambda;a)}{\Gamma(k+1-\mu_0)\Gamma(k+1)}z^k,
\label{powerseriesGHE}
\end{equation}
where the (unique) coefficients $T_k$ are holomorphic in $(\mu,\alpha,\lambda)$. In particular $T_0(\mu,\alpha,\lambda;a)=1$.
\end{theorem}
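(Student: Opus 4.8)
The plan is to treat $z=0$ as a regular singular point of (\ref{SCHASCHMI}) with indicial exponents $0$ and $\mu_0$, to construct the Frobenius solution belonging to the exponent $0$, and then to arrange the normalisation so that holomorphy in the parameters survives even at the resonant values $\mu_0\in\mathbb{Z}_{\geq 0}$, where the two exponents differ by an integer. First I would clear denominators by multiplying (\ref{SCHASCHMI}) through by $z(z-1)(z-a)$, obtaining $P(z)y''+Q(z)y'+R(z)y=0$ with $P,Q,R$ polynomial in $z$ of degrees $3,3,2$ whose coefficients are polynomials in $(\mu,\alpha,\lambda,a)$; recall that $\beta_0,\beta_1,\beta_2$ are themselves polynomial in those parameters through the defining relation stated just before the theorem. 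Substituting the ansatz $y=\sum_{k\geq 0}c_k z^k$ and collecting powers of $z$ yields, from the coefficient of $z^{k-1}$, a finite-term recurrence whose top term is $k(k-\mu_0)\,c_k$, the indicial polynomial $F(k)=k(k-\mu_0)$ multiplying $c_k$, and whose lower-order contributions $\sum_j p_j(k;\mu,\alpha,\lambda,a)\,c_{k-j}$ have $p_j$ polynomial in $k$ and the parameters. Because of the entire-function term $\alpha$ and the two extra finite singularities, $c_k$ is linked to several predecessors rather than to just one; this is the four-term recurrence discussed later in the section.

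Second, to absorb the vanishing of $F(k)$ at $k=\mu_0$, I would substitute $c_k=T_k/[\Gamma(k+1-\mu_0)\Gamma(k+1)]$, which is precisely the normalisation appearing in (\ref{powerseriesGHE}). Using $\Gamma(k+1-\mu_0)=(k-\mu_0)\Gamma(k-\mu_0)$ and $\Gamma(k+1)=k\Gamma(k)$, the factor $k(k-\mu_0)$ cancels, so the leading term becomes $T_k/[\Gamma(k-\mu_0)\Gamma(k)]$. The surviving Gamma ratios on the right, namely $\Gamma(k+1-\mu_0)/\Gamma(k-j+1-\mu_0)$ and $\Gamma(k+1)/\Gamma(k-j+1)$, are finite products and hence polynomials in $k$ and $\mu_0$. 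Consequently the recurrence for the $T_k$ has the coefficient of $T_k$ equal to a nonzero constant and all remaining coefficients polynomial in $(\mu,\alpha,\lambda)$, valid for every parameter value. The normalisation $\eta(0,\mu,\alpha,\lambda;a)=1/\Gamma(1-\mu_0)$ forces $T_0=1$, and then by induction on $k$ each $T_k$ is a polynomial, hence an entire holomorphic function, of $(\mu,\alpha,\lambda)$, and is uniquely determined. This delivers existence and uniqueness of the formal solution simultaneously.

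Third, I would establish convergence and joint holomorphy. Since $z=1$ and $z=a$ are the only other finite singular points and $\infty$ is irregular, a majorant estimate on the recurrence gives a bound $|T_k|\leq C\,M^k$ with $C$ and $M$ depending continuously on the parameters, so the series (\ref{powerseriesGHE}) converges for $|z|<\min(1,|a|)$, as claimed. Local uniform convergence on $\{\,|z|<\min(1,|a|)\,\}\times K$ for compact parameter sets $K$, together with the holomorphy of each summand, then yields joint holomorphy of $\eta$ in $(z,\mu,\alpha,\lambda)$ by the Weierstrass theorem on locally uniformly convergent series of holomorphic functions in several variables.

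The main obstacle is the resonant locus $\mu_0\in\mathbb{Z}_{\geq 0}$: there the naive Frobenius coefficient $c_{\mu_0}$ is undefined and the companion solution generically acquires a logarithm, so one must verify that the $\Gamma$-normalisation genuinely extends the $T_k$ holomorphically across these values rather than merely relocating the singularity. The point to check with care is that, after the substitution, no residual $1/(k-\mu_0)$ survives in the recurrence for $T_k$ — equivalently, that the pole of $1/\Gamma(k+1-\mu_0)$ at the offending integer is matched exactly by the indicial factor — so that the induction producing polynomial $T_k$ proceeds uniformly for all $(\mu,\alpha,\lambda)$ without exceptional values.
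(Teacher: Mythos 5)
Your overall strategy is the right one, and it is worth noting that the paper itself offers no proof of this statement: it is quoted from Sch\"afke and Schmidt \cite{RSCHAFKEDSCHMIDT}, and the only computation the paper performs is the substitution of (\ref{powerseriesGHE}) into (\ref{SCHASCHMI}) to extract the four-term recurrence for the $T_k$. Your first two steps reproduce exactly that structure: after clearing denominators the coefficient of $z^{k}$ gives $a(k+1)(k+1-\mu_0)c_{k+1}=(\cdots)c_k+(\cdots)c_{k-1}+(\cdots)c_{k-2}$, and the substitution $c_k=T_k/[\Gamma(k+1-\mu_0)\Gamma(k+1)]$ cancels the indicial factor on the left while turning the surviving Gamma ratios on the right into the polynomial factors $(k-1)(k-1-\mu_0)$ and $(k-1)(k-2)(k-1-\mu_0)(k-2-\mu_0)$ that are visible in the paper's $\varphi_2,\varphi_3$. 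Since $a\neq0$, induction from $T_0=1$ gives unique polynomial, hence entire, coefficients $T_k$, uniformly across the resonant locus; your "point to check with care" is exactly the right one and the cancellation does dispose of it.

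The step that fails is the convergence estimate. A bound of the form $|T_k|\le C\,M^k$ cannot be established and is incompatible with your own conclusion: it would make $\sum_k T_k z^k/[\Gamma(k+1-\mu_0)\Gamma(k+1)]$ an entire function of $z$, whereas $\eta$ generically has singularities at $z=1$ and $z=a$, which is precisely why the radius of convergence is only $\min(1,|a|)$. The recurrence $aT_{k+1}=O(k^2)T_k+O(k^4)T_{k-1}+O(k^6)T_{k-2}$ in fact forces $T_k$ to grow like $\Gamma(k+1-\mu_0)\Gamma(k+1)\rho^{-k}$ with $\rho=\min(1,|a|)$. The correct statement is a geometric bound on $c_k=T_k/[\Gamma(k+1-\mu_0)\Gamma(k+1)]$ itself, namely $\limsup_k|c_k|^{1/k}\le 1/\min(1,|a|)$, obtained either from the classical theorem that a Frobenius series at a regular singular point converges at least up to the nearest other finite singularity (here $1$ and $a$; the irregular point at $\infty$ is irrelevant for the local radius), or by a majorant argument applied to the recurrence for the $c_k$ rather than the $T_k$, carried out locally uniformly in $(\mu,\alpha,\lambda)$ so that the Weierstrass theorem then yields the joint holomorphy. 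With that repair the argument goes through.
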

Substituting (\ref{powerseriesGHE}) into (\ref{SCHASCHMI}) we obtain recursion relations for the coefficients $T_k$:
\begin{align}
&\frac{\beta_0T_0}{\Gamma(1-\mu_0)\Gamma(1)}+\frac{T_1(1-\mu_0)a}{\Gamma(2-\mu_0)\Gamma(2)}=0,\Leftrightarrow
T_1=-\frac{\beta_0}{a}T_0,\\
&T_2=\left[\frac{2-\mu_0-\mu_2}{a}+(2-\mu_0-\mu_1)-\alpha-\frac{\beta_0}{a}\right]T_1-\frac{\beta_1(1-\mu_0)}{a}T_0,\\
&T_3=\left[2(3-\mu_0-\mu_1)+\frac{2}{a}(3-\mu_0-\mu_2)-2\alpha-\frac{\beta_0}{a}\right]T_2\nonumber \\
&+\left[-\frac{2}{a}(2-\mu_0)(3-\mu_0-\mu_1-\mu_2)+2\frac{(2-\mu_0)}{a}\alpha(a+1)-\frac{2(2-\mu_0)\beta_1}{a}\right]T_1\nonumber \\
&-\frac{\beta_2 2(2-\mu_0)(1-\mu_0)T_0}{a}\\
&T_4=\left[3(4-\mu_0-\mu_1)+\frac{3}{a}(4-\mu_0-\mu_2)-3\alpha-\frac{\beta_0}{a}\right]T_3\nonumber \\
&+3(3-\mu_0)\left[\frac{2}{a}(-4+\mu_0+\mu_1+\mu_2)+2\alpha\frac{a+1}{a}-\frac{\beta_1}{a}\right]T_2\nonumber \\
&+2\;3(2-\mu_0)(3-\mu_0)\left(-\frac{1}{a}\right)[\alpha+\beta_2]T_1,\\
&\cdots \nonumber
\end{align}
which are summarised in the four-term recurrence relation for the $T_k$
\begin{equation}
T_k=\varphi_1(k-1)T_{k-1}-\varphi_2(k-2)T_{k-2}+\varphi_3(k-3)T_{k-3}, \;\;\;(k\in\mathbb{N}),
\end{equation}
where $T_{-1}=T_{-2}=0$ and
\begin{align}
\varphi_1(\xi)&=\xi(\xi+1-\mu_0-\mu_1)+\frac{1}{a}\xi(\xi+1-\mu_0-\mu_2)-\alpha\xi-\frac{1}{a}\beta_0,\\
\varphi_2(\xi)&=(\xi+1)(\xi+1-\mu_0)\left(\frac{1}{a}\xi(\xi+2-\mu_0-\mu_1-\mu_2)-\left(1+\frac{1}{a}\right)\alpha\xi+
\frac{1}{a}\beta_1\right),\\
\varphi_3(\xi)&=(\xi+1)(\xi+2)(\xi+1-\mu_0)(\xi+2-\mu_0)\left(-\frac{1}{a}\right)(\alpha\xi+\beta_2).
\end{align}

The automorphism group of the generalised Heun equation has been investigated in \cite{RSCHAFKEDSCHMIDT} by studying the index transformations:
\begin{equation}
y(z)=z^{\sigma_0}(z-1)^{\sigma_1}(z-a)^{\sigma_a}\tilde{y}(z),
\end{equation}
with $\sigma_j\in\{0,\mu_j\},j=(0,1,2) $
and linear transformations of the independent variable
\begin{equation}
\tilde{z}=\varepsilon z+\delta,\;\;\;(\varepsilon,\delta\in \mathbb{C},\varepsilon \not =0 )
\label{lineartransf}
\end{equation}
which map the simple singularities $(0,1,a)$ into the simple singularities $(0,1,\tilde{a})$ and keep the irregular singularity $\infty$ fixed. For a summary of the results see Table 1, appendix A.
Using these results for the automorphism group one can define for each $j=0,1,2$ a set of two Floquet solutions $y_{j1},y_{j2}$ at $z_j$ in terms of the $\eta$ function of theorem \ref{eta} by
\begin{align}
y_{01}(z,\mu,\alpha,\lambda)&:=\eta(z,\mu_0,\mu_1,\mu_2,\alpha,\lambda;a),\\
y_{02}(z,\mu,\alpha,\lambda)&:=z^{\mu_0}\eta(z,-\mu_0,\mu_1,\mu_2,\alpha,\lambda;a),
\end{align}
for $|z|<\min(1,|a|),$
\begin{align}
y_{11}(z,\mu,\alpha,\lambda)&:=\eta(1-z,\mu_1,\mu_0,\mu_2,-\alpha,-\lambda;1-a),\\
y_{12}(z,\mu,\alpha,\lambda)&:=(1-z)^{\mu_1}\eta(1-z,-\mu_1,\mu_0,\mu_2,-\alpha,-\lambda;1-a),
\end{align}
for $|z-1|<\min(1,|a-1|),$
\begin{align}
y_{21}(z,\mu,\alpha,\lambda)&:=\eta\left(1-\frac{z}{a},\mu_1,\mu_2,\mu_0,-a\alpha,-a\lambda;1-\frac{1}{a}\right),\\
y_{22}(z,\mu,\alpha,\lambda)&:=\left(1-\frac{z}{a}\right)^{\mu_2}\eta\left(1-\frac{z}{a},\mu_1,-\mu_2,\mu_0,-a\alpha,-a\lambda;1-\frac{1}{a}\right),
\end{align}
for $|z-a|<\min(|a|,|1-a|)$.
The local solution in the vicinity of $1$ has the expansion:
\begin{align}
y_{11}(z,\mu,\alpha,\lambda)&:=\eta(1-z,\mu_1,\mu_0,\mu_2,-\alpha,-\lambda;1-a),\nonumber \\
&=\sum_{k=0}^{\infty}\frac{T_k(\mu_1,\mu_0,\mu_2,-\alpha,-\lambda;1-a)}{k!\Gamma(1-\mu_1+k)}(1-z)^k,\nonumber \\&=\frac{1}{\Gamma(1-\mu_1)}+\frac{-\tilde{\beta}_0/\tilde{a}}{\Gamma(2-\mu_1)}T_0(1-z)\nonumber \\
&+\Biggl\{\left[2-\mu_1-\mu_0+\frac{2-\mu_1-\mu_2}{\tilde{a}}+\alpha-\frac{\tilde{\beta}_0}{\tilde{a}}\right]T_1\nonumber \\
&-\frac{\tilde{\beta}_1(1-\mu_1)}{\tilde{a}}T_0\Biggr\}\frac{(1-z)^2}{2!\Gamma(3-\mu_1)}+\cdots,
\end{align}
with $T_0(\mu_1,\mu_0,\mu_2,-\alpha,-\lambda;1-a)=1$.

\section{Asymptotic solutions at infinity $r\rightarrow \infty$}\label{AsymptoticsGHEr}
We can also obtain the far horizon limit of our closed form analytic radial solutions as follows. The radial GHE (\ref{GHERADIALKN}) is a differential equation with an irregular singularity at infinity. Following the discovery of Thom\'{e} that such a differential equation can be satisfied in the neighbourhood of an irregular singularity by a series of the form \cite{OLVER}
\begin{equation}
Y=e^{\lambda \zeta}\zeta^{\mu}\sum_{s=0}^{\infty}\frac{a_s}{\zeta^s}
\label{asymptotic}
\end{equation}
we are able to determine the exponential parameters $\lambda,\mu$. Let us start by first inserting the formal solution at infinity eqn(\ref{asymptotic}), into the generic form of the GHE (\ref{SCHASCHMI}). We obtain:
\begin{align}
\lambda_1^{\infty}&=0,\;\mu_1^{\infty}=-\frac{\beta_2}{\alpha},\\
\lambda_2^{\infty}&=-\alpha,\;\mu_2^{\infty}=\mu_0+\mu_1+\mu_2-3+\frac{\beta_2}{\alpha}.
\label{ekthetika}
\end{align}
Applying this method of asymptotic analysis to the radial GHE equation (\ref{GHERADIALKN}) yields the results:
\[GH_{\infty} \equiv \eta^{\infty}(z,\mu,\alpha,\lambda;a)\sim\left\{\begin{array}{c}
z^{-\frac{\beta_2}{\alpha}}=z^{-(\boldsymbol{\mu_1}+\boldsymbol{\mu_2})}z^{-\left(\frac{C_1+C_2+C_3}{\alpha}\right)}\\
e^{-\alpha z}z^{-(\boldsymbol{\mu_1}+\boldsymbol{\mu_2})}z^{\left(\frac{C_1+C_2+C_3}{\alpha}\right)}\end{array}\right.
\]
and in terms of the original variables:
\[R(r)\sim\left\{\begin{array}{c}
e^{\pm i\sqrt{\omega^2 -\mu^2}(r_{+}-r_{-})z}\left(\frac{r-r_{-}}{r_{+}-r_{-}}\right)^{-\frac{C_1+C_2+C_3}{\pm 2i \sqrt{\omega^2 -\mu^2}(r_{+}-r_{-)}}}\\
e^{-(\pm i\sqrt{\omega^2 -\mu^2}(r_{+}-r_{-})z)}\left(\frac{r-r_{-}}{r_{+}-r_{-}}\right)^{\frac{C_1+C_2+C_3}{\pm 2i \sqrt{\omega^2 -\mu^2}(r_{+}-r_{-)}}}\end{array}\right.\]
\subsection{The connection problem for a regular and the irregular singular point at $\infty$ for the radial generalised Heun equation}\label{CONNEIRRREGGLOBAL}

We start this section by writing the formal solutions at $\infty$ eqns (\ref{asymptotic}),(\ref{ekthetika}), of the GHE , in the form
\begin{align}
\tilde{y}_1(z)&=z^{\mu^{\infty}_1}\left(1+\sum_{s=1}^{\infty}a_1(s)z^{-s}\right), \\
\tilde{y}_2(z)&=e^{-\alpha z}z^{\mu^{\infty }_2}\left(1+\sum_{s=1}^{\infty}a_2(s)z^{-s}\right).
\label{krikossyndesi}
\end{align}

More exactly then in general we know for second order equations such as (\ref{SCHASCHMI}) that there exist uniquely determined solutions $y_1(z)$ and $y_2(z)$ defined in \cite{OLVER} (\textsection{7.2})
\begin{align}
S_1&=\Biggl\{z\Bigg||z|>\max(1,|a|),\arg(-\alpha)-\frac{3}{2}\pi<\arg z<\arg(-\alpha)+\frac{3}{2}\pi\Biggr\},\\
S_2&=\Biggl\{z\Bigg||z|>\max(1,|a|),\arg\alpha-\frac{3}{2}\pi<\arg z<\arg\alpha+\frac{3}{2}\pi\Biggr\},
\end{align}
where $\arg(\pm\alpha)$ are chosen in $[-\pi,\pi[$, such that
\begin{equation}
y_j(z)\sim \tilde{y}_j(z)\;\;\;(S_j\ni z\rightarrow \infty,j=1,2).
\end{equation}

From \cite{RSCHAFKEDSCHMIDT} we know that (\ref{SCHASCHMI}) and (\ref{radialKNbef}) have one solution $y(z)$ that is holomorphic near $1$ with $y(z)=1$ if $\mu_1\not\in\mathbb{N}$ (in the form (\ref{SCHASCHMI})), which can be analytically continued to a neighbourhood in $[1,\infty[.$
The corresponding connection problem  \cite{RegIrregSIAMJ} is to decompose the solution $y(z)$ in the form:
\begin{equation}
y(z)=\gamma_1 y_1(z)+\gamma_2 y_2(z),\;\;\;(z\in]1,\infty[,z\;\; \text{sufficiently large}).
\label{problemconnection}
\end{equation}

Following \cite{RegIrregSIAMJ} we shall  transform the connection problem between $1$ and $\infty$ by defining $z=1/(1-t)$.
Indeed, we shall apply the combined transformation
\begin{equation}
z^{-\mu^{\infty}_ 1}y(z)=v(t),\;\;\;z=\frac{1}{1-t}.
\end{equation}
Starting from (\ref{SCHASCHMI}), one proves that the new dependent variable $v(t)$ satisfies the differential equation:
\begin{align}
v^{\prime\prime}(t)&+\left[\frac{\alpha}{(t-1)^2}+\frac{\mu^{\infty }_2-\mu^{\infty}_ 1+2}{t-1}+
\frac{1-\mu_1}{t}+\frac{1-\mu_2}{t-\tilde{a}}\right]v^{\prime}(t)\nonumber \\
&+\frac{\tilde{\beta}_0+\tilde{\beta}_1 t+\tilde{\beta}_2 t^2}{at(t-\tilde{a})(t-1)^2}v(t)=0,
\label{transfGHE}
\end{align}
with
\begin{align}
\tilde{\beta}_0&=(1-a)(1-\mu_1)\mu^{\infty}_ 1+\beta_0+\beta_1+\beta_2, \\
\tilde{\beta}_1&=-\beta_0+\beta_2 a+\mu^{\infty}_1[a(1-\mu_1)+(1-\mu_2)+(1-a)(\mu^{\infty}_ 1-\mu_0)],\\
\tilde{\beta}_2&=a\mu^{\infty}_1(\mu^{\infty}_1-\mu_0),
\end{align}
and $\tilde{a}=1-1/a$. The singular points $z=0,1,a,\infty$ are mapped to $t=\infty,0,1-1/a,1$ respectively.
$y(z)$ is transformed into the solution $v_0(t)$ of (\ref{transfGHE}) holomorphic near $1$ with $v_0(1)=1,$ which is written as
\begin{equation}
v_0(t)=\sum_{s=0}^{\infty}d_s t^s,\;\;\;d_0=1.
\label{connectioninfinity}
\end{equation}
Inserting (\ref{connectioninfinity}) into (\ref{transfGHE}) and equating parts with equal powers of $t$ we obtain recursion relations for the coefficients $d_k$:
\begin{align}
&a\tilde{a}(1-\mu_1)d_1=\tilde{\beta}_0d_0,\\
&2\tilde{a}ad_2(2-\mu_1)=a\Biggl[-\alpha\left(1-\frac{1}{a}\right)+\left(1-\frac{1}{a}\right)[\mu^{\infty} _2-\mu^{\infty}_1+2]+\left(3-\frac{2}{a}\right)(1-\mu_1)\nonumber \\
&+(1-\mu_2)\Biggr]d_1+\tilde{\beta}_0d_1+\tilde{\beta}_1d_0,\\
&3\left(1-\frac{1}{a}\right)\left(3-\mu_1\right)a d_3=2a\left(3-\frac{2}{a}\right)\left(2-\mu_1\right)d_2 \nonumber \\
&+2d_2 a\left\{\left(1-\frac{1}{a}\right)[\mu^{\infty}_2-\mu^{\infty} _1+2-\alpha]+1-\mu_2\right\}+\tilde{\beta}_0d_2\nonumber \\
&+d_1a\Biggl\{-\left(2-\frac{1}{a}\right)(\mu^{\infty}_2-\mu^{\infty} _1+2)+\alpha-\left(3-\frac{1}{a}\right)(1-\mu_1)-2(1-\mu_2)\Biggr\}+\tilde{\beta}_1d_1\nonumber \\
&+\tilde{\beta}_2d_0\\
&4\left(1-\frac{1}{a}\right)\left(4-\mu_1\right)d_4=3\left[
\left(3-\frac{2}{a}\right)(3-\mu_1)+\left(1-\frac{1}{a}\right)[\mu^{\infty} _2-\mu^{\infty}_1+2-\alpha]+(1-\mu_2)\right]d_3\nonumber \\
&+\frac{\tilde{\beta}_0}{a}d_3\nonumber \\
&+2d_2\left[\left(-3+\frac{1}{a}\right)(2-\mu_1)+\left(\frac{1}{a}-2\right)(\mu^{\infty }_2-\mu^{\infty}_1+2)+\alpha-2(1-\mu_2)\right]\nonumber \\
&+\frac{\tilde{\beta}_1}{a}d_2\nonumber \\
&+(1-\mu^{\infty}_1)(\mu_0-\mu^{\infty}_1+1)d_1,\\
&\cdots\nonumber
\end{align}
which can be summarised in the four-term recursion relation for the coefficients $d_s$:
\begin{align}
&\left(1-\frac{1}{a}\right)(s+1)(s+1-\mu_1)d_{s+1}\nonumber \\
&=\varphi_0(s)d_s+\varphi_1(s-1)d_{s-1}+\varphi_2(s-2)d_{s-2},\;\;\;(s\in\mathbb{N}),\\
&d_0=1,\;\;\;d_{-1}=d_{-2}=0,
\label{vier}
\end{align}
where
\begin{align}
\varphi_0(s)&=\left[\left(3-\frac{2}{a}\right)(s-\mu_1)+\left(1-\frac{1}{a}\right)(\mu^{\infty} _2-\mu^{\infty}_1+2-\alpha)+1-\mu_2\right]s+\tilde{\beta}_0/a,\\
\varphi_1(s)&=\left[\left(\frac{1}{a}-3\right)(s-\mu_1)+\alpha+\left(\frac{1}{a}-2\right)
(\mu^{\infty}_2-\mu^{\infty}_1+2)-2(1-\mu_2)\right]s+\tilde{\beta}_1/a,\\
\varphi_2(s)&=(s+\mu_0-\mu^{\infty}_1)(s-\mu^{\infty}_1).
\end{align}
Then $y_1(z)$ and $y_2(z)$ pass into local solutions $v_1^{+}(t)$ and $v_2^{+}(t)$ of (\ref{transfGHE}) at the irregular singular point 1:
\begin{align}
v_1^{+}(t)&\sim 1+\sum_{\nu=1}^{\infty}a_1(\nu)(1-t)^{\nu},\;\;\left|\arg(1-t)+\arg(-\alpha)\right|<\frac{3}{2}\pi,\\
v_2^{+}(t)&\sim e^{-\frac{\alpha}{1-t}}(1-t)^{\mu_{\infty 1}-\mu_{\infty 2}}
\left(1+\sum_{\nu=1}^{\infty}a_2(\nu)(1-t)^{\nu}\right)\nonumber \\
&\left|\arg(1-t)+\arg\alpha\right|<\frac{3}{2}\pi,\;\;t\rightarrow 1
\end{align}

Thus, the connection relation (\ref{problemconnection}) corresponds to:
\begin{equation}
v_0(t)=\gamma_1v_1^{+}(z)+\gamma_2 v_2^{+}(z)\;\;\;(z\in]0,1[).
\end{equation}
The following theorem, which was proven in \cite{RegIrregSIAMJ}, supplies a limit formula for the connection coefficient $\gamma_2$:
\begin{theorem}
Let $\gamma_2$ be the connection coefficient for the connection problem (\ref{krikossyndesi})-(\ref{problemconnection}) . Assume the coefficient $\mu_1$ in (\ref{SCHASCHMI}) is not a positive integer e.g. that $\mu_1\not\in\{1,2,3,\cdots\}$, $\alpha\not\in[0,\infty[$ and $\Re a<\frac{1}{2}$. Let the sequence $d_k$ be determined by the four-term recursion formula (\ref{vier}). Then
\begin{equation}
\gamma_2=2\sqrt{\pi}e^{\alpha/2}(-\alpha)^{(\mu_{\infty 2}-\mu_{\infty 1}-1)/2}\lim_{k\rightarrow\infty}
exp(-2\sqrt{-\alpha}\sqrt{k})k^{3/4+(\mu_{\infty 2}-\mu_{\infty 1})/2}d_k.
\end{equation}
Moreover the above sequence with limit $\gamma_2$ has an asymptotic expansion, involving powers of $1/\sqrt{k}$.
\end{theorem}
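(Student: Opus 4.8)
The plan is to extract $\gamma_2$ from the large-$k$ asymptotics of the Taylor coefficients $d_k$ of the regular solution $v_0(t)=\sum_{k\ge0}d_k t^k$ of (\ref{transfGHE}), exploiting that $\gamma_2$ measures the content of the \emph{dominant} local solution $v_2^{+}$ in the decomposition $v_0=\gamma_1 v_1^{+}+\gamma_2 v_2^{+}$ near the irregular point $t=1$. First I would verify that $t=1$ is the singularity of $v_0$ nearest the origin: the only competing finite singularity of (\ref{transfGHE}) is the regular point $t=\tilde a=1-1/a$, and $|\tilde a|>1$ is equivalent to $\Re a<\tfrac12$ since $|1-1/a|^2-1=(1-2\Re a)/|a|^2$. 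Hence $\sum d_k t^k$ has radius of convergence exactly $1$ and the growth of $d_k$ is dictated by the behaviour of $v_0$ at $t=1$. Because $v_2^{+}$ carries the essential singularity $e^{-\alpha/(1-t)}$ whereas $v_1^{+}\sim1+\sum_\nu a_1(\nu)(1-t)^{\nu}$ is recessive there (the hypotheses $\alpha\notin[0,\infty[$ and $\mu_1\notin\mathbb N$ isolate a single dominant exponential direction and prevent $v_1^{+}$ from degenerating), one expects $d_k\sim\gamma_2\,[t^k]\!\left(e^{-\alpha/(1-t)}(1-t)^{\,\mu_{\infty2}-\mu_{\infty1}}\right)$, and inverting this relation yields the limit formula.

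The core computation is the coefficient asymptotics of this model singularity. I would write $d_k=\frac{1}{2\pi i}\oint_{|t|=r}v_0(t)\,t^{-k-1}\,{\rm d}t$ with $r<1$ and push the contour toward $t=1$, where the integrand is governed by $\exp\!\big(-\tfrac{\alpha}{1-t}-(k+1)\log t\big)(1-t)^{\mu_{\infty2}-\mu_{\infty1}}$. Setting $u=1-t$, the phase $-\alpha/u+ku$ has a saddle at $u_\ast=\sqrt{-\alpha}/\sqrt k$, at which it equals $2\sqrt{-\alpha}\sqrt k$, while the quadratic term of $\log(1-u)$ contributes the $O(1)$ factor $e^{\alpha/2}$; a steepest-descent evaluation then produces the factor $\exp(2\sqrt{-\alpha}\sqrt k)$, the half-integer power $k^{-3/4-(\mu_{\infty2}-\mu_{\infty1})/2}$, and the constant, with the whole expansion proceeding in powers of $1/\sqrt k$. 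Equivalently and more transparently, I would recognise the model as a Laguerre generating function: since $e^{-\alpha/(1-t)}(1-t)^{\tau}=e^{-\alpha}\sum_k L_k^{(-\tau-1)}(\alpha)\,t^k$, the coefficients are Laguerre polynomials and Perron's asymptotic formula for $L_k^{(\nu)}(\alpha)$ with fixed $\alpha\notin[0,\infty[$ supplies both the precise constant $2\sqrt\pi\,e^{\alpha/2}(-\alpha)^{(\mu_{\infty2}-\mu_{\infty1}-1)/2}$ after inversion and the full $1/\sqrt k$ expansion asserted in the theorem.

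An alternative route, better matched to the role of the four-term recursion (\ref{vier}), is a direct Birkhoff-type analysis of that recursion. Dividing by the leading $s^2$ coefficients, the constant-coefficient limit has characteristic polynomial with a double root $\lambda=1$ and a simple root $\lambda_3=a/(a-1)=1/\tilde a$; the condition $\Re a<\tfrac12$ makes $|\lambda_3|<1$, so this root is subdominant. The coalescence of the two unit roots is exactly a confluence/turning-point phenomenon for difference equations: the next-order (in $1/s$) terms of (\ref{vier}), which is where $\alpha$ enters, resolve the degeneracy and force two formal solutions $d_s^{\pm}\sim\exp(\pm2\sqrt{-\alpha}\sqrt s)\,s^{p}\big(1+c_1/\sqrt s+\cdots\big)$ with half-integer powers. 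The recessive branch decays sub-exponentially while the dominant one grows like $\exp(+2\sqrt{-\alpha}\sqrt s)$; the sequence fixed by $d_0=1$ is a combination whose dominant component is proportional to $\gamma_2$, which recovers the limit formula and explains the $1/\sqrt k$ structure.

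The hard part is not the formal asymptotics but the rigorous justification that the true sequence $d_k$ is genuinely asymptotic to the dominant formal solution with exactly the stated constant, i.e. that the recessive branch and the subdominant $\lambda_3$-contribution (equivalently, the $\gamma_1 v_1^{+}$ part and the non-leading terms of $v_2^{+}$) do not corrupt the limit. This demands uniform error control: in the analytic route, a careful contour deformation with Watson-type estimates on the residual arc; in the difference-equation route, a summability/fixed-point argument showing that the formal $1/\sqrt s$ series is asymptotic to a genuine solution and pinning down the connection constant. This is precisely the apparatus developed in \cite{RegIrregSIAMJ}, which I would invoke at this step; the hypotheses $\mu_1\notin\mathbb N$, $\alpha\notin[0,\infty[$ and $\Re a<\tfrac12$ are exactly what guarantee a single dominant exponential scale and the clean separation of growth rates needed for these estimates.
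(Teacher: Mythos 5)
The first thing to say is that the paper does not prove this theorem at all: the sentence immediately preceding it states that it ``was proven in \cite{RegIrregSIAMJ}'', and the paper simply transcribes Sch\"afke's result. So there is no in-paper proof to compare against; what you have written is a reconstruction of the argument of the cited reference. As such, it captures the correct mechanism. Your reduction of $\Re a<\frac{1}{2}$ to $|\tilde a|=|1-1/a|>1$ (so that $t=1$ is the unique singularity of $v_0$ on its circle of convergence, the other finite singularities of (\ref{transfGHE}) being $t=0$ and $t=\tilde a$), the saddle point $u_*=\sqrt{-\alpha}/\sqrt{k}$ of $-\alpha/u+ku$ producing the growth $\exp(2\sqrt{-\alpha}\sqrt{k})$ and the half-integer powers of $k$, the Laguerre/Perron route to the constant, and the Birkhoff-type reading of the recursion (\ref{vier}) with a double characteristic root at $1$ and a subdominant root $1/\tilde a$, are all correct and are precisely the ingredients that make the theorem true; the roles you assign to $\alpha\not\in[0,\infty[$ (a genuinely growing, non-oscillatory dominant branch) and $\mu_1\not\in\mathbb{N}$ (existence of the normalized holomorphic solution at $z=1$) are also right.

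Two caveats keep this from being a proof. First, it is not self-contained: the step you yourself identify as the hard part --- showing that the recessive branch and the $\tilde a$-contribution do not contaminate the limit, with uniform error control --- is delegated to \cite{RegIrregSIAMJ}, which is the very source of the theorem, so the argument is circular at exactly the point where rigour is needed. Second, the constant requires more care than ``Perron's formula supplies it after inversion.'' The paper's local solution $v_2^{+}$ carries $(1-t)^{\mu_{\infty 1}-\mu_{\infty 2}}$, whereas your model singularity uses $(1-t)^{\mu_{\infty 2}-\mu_{\infty 1}}$; your choice is the one consistent with the power $k^{3/4+(\mu_{\infty 2}-\mu_{\infty 1})/2}$ in the displayed limit formula, so the paper's own conventions are not internally consistent here, and this must be resolved before the constant can be trusted. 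Moreover, if one actually pushes Perron's asymptotic $L_n^{(\nu)}(x)\sim\frac{1}{2\sqrt{\pi}}e^{x/2}(-x)^{-\nu/2-1/4}n^{\nu/2-1/4}e^{2\sqrt{-nx}}$ through the inversion, the resulting power of $(-\alpha)$ does not literally reproduce the printed $(-\alpha)^{(\mu_{\infty 2}-\mu_{\infty 1}-1)/2}$; whether this is a normalization of $v_2^{+}$, a transcription slip in the paper, or an error in your identification of the model is exactly the bookkeeping a complete proof would have to pin down.
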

As it is explained in \cite{RegIrregSIAMJ} all connection coefficients between $1$ and $\infty$ can be computed once a formula for $\gamma_2$ is known. The necessity of the conditions on the GHE parameters in the above theorem is also discussed in \cite{RegIrregSIAMJ}.

\section{The near event horizon limit $r\rightarrow r_{+}$.}\label{Gargantua}

From theorem \ref{eta} and using $\zeta=\frac{r-r_{+}}{r_{-}-r_{+}}$ as the independent variable we know the expansion of the local solution near the event horizon limit where $r\rightarrow r_{+}\Leftrightarrow \zeta\rightarrow 0$:
\begin{align}
\eta(\zeta,\mu,\alpha,\lambda;a)&=\sum_{k=0}^{\infty}\frac{T_k(\mu,\alpha,\lambda;a)}{\Gamma(k+1-\mu_0)k!}\zeta^k\nonumber \\
&=\frac{1}{\Gamma(1-\mu_0)}+\frac{-\beta_0/a}{\Gamma(2-\mu_0)}T_0\zeta\nonumber \\
&+\Biggl\{\left[\frac{2-\mu_0-\mu_2}{a}+(2-\mu_0-\mu_1)-\alpha-\frac{\beta_0}{a}\right]T_1\nonumber \\
&-\frac{\beta_1(1-\mu_0)}{a}T_0\Biggr\}\frac{\zeta^2}{\Gamma(3-\mu_0)2!}+\cdots
\end{align}
From the index transformation:
\begin{equation}
R(\zeta)=e^{\nu\left(\frac{r-r_{+}}{r_{-}-r_{+}}\right)}\zeta^{\boldsymbol{\mu_1}^{\prime}}(\zeta-1)^{\boldsymbol{\mu_2}^{\prime}}\eta(\mu^{\prime},\alpha,\lambda^{\prime};\zeta),
\label{RadialSpinordecay}
\end{equation}
where $\nu=\pm i\sqrt{\omega^2-\mu^2}(r_{+}-r_{-}),\;\alpha=2\nu$, we conclude that in the near event horizon limit in terms of the original variables:
\begin{equation}
R(r)\sim \left(\frac{r-r_{+}}{r_{-}-r_{+}}\right)^{\boldsymbol{\mu_1}^{\prime}}\frac{1}{\Gamma(1-\mu_0)}.
\label{eventhorizonorio}
\end{equation}

\section{Asymptotic behaviour using the Regge-Wheeler coordinate}
The asymptotic forms of the radial equation (\ref{raddira}) may also be seen if equation (\ref{raddira}) is transformed by writing
\begin{equation}
R=(\Delta^{KN})^{1/4}(r^2+a^2)^{-1/2}(\lambda+i\mu r)^{1/2}\mathcal{R},
\end{equation}
and if one uses a Regge-Wheeler-like radial variable $r*$ \cite{RegWheeTortoise}, defined by
\begin{equation}
\frac{{\rm d}r^*}{{\rm d}r}=\frac{(r^2+a^2)}{\Delta^{KN}}.
\end{equation}
The radial equation is then:
\begin{align}
\frac{{\rm d}^2 \mathcal{R}}{{\rm d}r^{*2}}&+\Biggl[\frac{(K^2+iK(r-M)+\Delta^{KN}(+\frac{-\mu K}{\lambda+i\mu r}-2i\omega r-ieq-\mu^2 r^2-\lambda^2))}{(r^2+a^2)^2}\nonumber\\
&-\mathcal{G}^2-\frac{{\rm d}\mathcal{G}}{{\rm d}r^*}+\frac{i\mu\Delta^{KN}}{2(\lambda+i\mu r)(r^2+a^2)^2}[r-M-\frac{3}{2}\frac{i\mu\Delta^{KN}}{\lambda+i\mu r}]\Biggr]\mathcal{R}=0,
\label{ReggeWheRDIRAC}
\end{align}
with
\begin{equation}
\mathcal{G}:=-\frac{1}{2}\frac{(r-M)}{r^2+a^2}+\frac{r\Delta^{KN}}{(r^2+a^2)^2}.
\end{equation}
\subsubsection{The near event horizon limit $r\rightarrow r_{+}$ }
Asymptotically, in the near horizon limit $r\rightarrow r_{+}\;(r^*\rightarrow -\infty)$ equation (\ref{ReggeWheRDIRAC}), takes the form
\begin{equation}
\frac{{\rm d}^2\mathcal{R}}{{\rm d}r^{*2}}+\left[\left(\omega-m\omega_{+}+\frac{eqr_{+}}{r_{+}^2+a^2}\right)+\frac{i}{2}\frac{r_{+}-M}{r_+^2+a^2}\right]^2\mathcal{R}\sim 0,
\end{equation}
where $\omega_+:=\frac{a}{r_+^2+a^2}$ and $r_+$ is the event horizon of the KN black hole $r_+=M+\sqrt{M^2-a^2-e^2}$.
The solution for a purely ingoing wave at the event horizon of the KN black hole is
\begin{equation}
\mathcal{R}\sim e^{-ir^*\left(\omega-m\omega_{+}+\frac{eqr_+}{r_+^2+a^2}\right)}
\end{equation}
\subsection{Asymptotic solutions at spatial infinity $r\rightarrow \infty$}
At spatial infinity $r\rightarrow \infty \;(r^* \rightarrow \infty)$ the asymptotic form of (\ref{ReggeWheRDIRAC}) is:
\begin{equation}
\frac{{\rm d}^2 \mathcal{R}}{{\rm d} r^{*2}}+\left(\omega^2-\mu^2+\frac{2eq\omega+2\mu^2 M}{r}\right)\mathcal{R}\sim 0
\end{equation}

\section{Determining the separation constant $\lambda$ of the massive Dirac equation in the Kerr-Newman spacetime}
\label{statheradiaxlambda}

In \textsection{\ref{anguGHE}} we computed the exact  eigenfunctions of the angular components of the Dirac equation in the KN curved background in terms of the generalised Heun functions.
We now discuss how the separation constant $\lambda$ (i.e the angular eigenvalues)\footnote{Some notable works on the determination of the angular eigenvalues are: \cite{HSchmidt},\cite{MonikaDB},\cite{Neznamov},\cite{SRDolan}.} can be determined. For this purpose we will make original use of functional analysis techniques, of properties of special functions, integration of partial differential equations, novel asymptotic analysis and integration of the non-linear Painlev\'{e} transcendents involved in this fascinating problem.
Equations (\ref{cornerKN1})-(\ref{corner2KNbh}) in matrix form read as follows:
\begin{equation}
\left[\begin{array}{cc}
\frac{{\rm d}}{{\rm d}\theta}+\frac{m}{\sin\theta}-\omega a\sin\theta+\frac{\cot\theta}{2} & \lambda -\mu a \cos\theta\\
\lambda+a\mu\cos\theta & -\frac{{\rm d}}{{\rm d}\theta}-\frac{\cot\theta}{2} -\omega a\sin\theta+\frac{m}{\sin\theta}\end{array}\right]\left[\begin{array}{c}
S^{(+)}(\theta) \\
S^{(-)}(\theta)\end{array}\right]
=0.
\end{equation}
Using the matrix $V=\left[\begin{array}{cc} 0&1\\
-1&0\end{array}\right]$ we can write eqns (\ref{cornerKN1})-(\ref{corner2KNbh}) as an eigenvalue equation
\begin{gather}
\frac{1}{2\sin\theta}\Biggl\{2\left[\begin{array}{cc} 0&\sin\theta\\
-\sin\theta&0\end{array}\right]\left[\begin{array}{c} {\rm d}S^{(+)}/{{\rm d}\theta}\\
 {\rm d}S^{(-)}/{{\rm d}\theta}\end{array}\right]+\left[\begin{array}{cc} 0&\cos\theta\\
-\cos\theta&0\end{array}\right]\left[\begin{array}{c} S^{(+)}(\theta)\\
 S^{(-)}(\theta)\end{array}\right]
 \nonumber \\
 \left[\begin{array}{cc} -2\nu \cos\theta\sin\theta&-2m+2\xi\sin^2\theta\\
-2m+2\xi\sin^2\theta&2\nu\cos\theta\sin\theta\end{array}\right]\left[\begin{array}{c} S^{(+)}(\theta)\\
 S^{(-)}(\theta)\end{array}\right]\Biggr\}=\lambda\left[\begin{array}{c} S^{(+)}(\theta)\\
 S^{(-)}(\theta)\end{array}\right],
 \label{HamiltonAngleKN1}
 \end{gather}
 where we define $\xi:=a\omega,\;\nu:=a\mu$.

 For fixed values of $\xi$ and $\nu$, the differential operator $A$ generated by the left hand side of (\ref{HamiltonAngleKN1}) is a self-adjoint operator acting on the Hilbert $L^2((0,\pi),2\sin\theta)^2$ of square integrable vector functions with respect to the weight function $2\sin\theta$.

 The operator $A=A(\xi,\nu)$ as well as its eigenvalues $\lambda_j=\lambda_j(m,\xi,\nu)$ depend holomorphically on $\xi,\nu$.  The partial derivatives of the differential operator $A$ are given by:
 \begin{equation}
 \frac{\partial A}{\partial \nu}=\left[\begin{array}{cc} -\cos\theta&0\\
0&\cos\theta\end{array}\right],\;\;\;\frac{\partial A}{\partial \xi}=\left[\begin{array}{cc} 0&\sin\theta\\
\sin\theta&0\end{array}\right].
\end{equation}
Using analytic perturbation theory, in particular Theorem 3.6, VII, \textsection 3 in \cite{KatoLPT}, we obtain for the eigensystem (\ref{HamiltonAngleKN1}) the following estimates:
\begin{equation}
\Bigl|\frac{\partial \lambda_j}{\partial \nu}\Bigr|\leq \left\Vert \frac{\partial A}{\partial \nu}\right\Vert=\vert\cos\theta\vert\leq 1
\;\;\text{and}\;\; \Bigl|\frac{\partial \lambda_j}{\partial \xi}\Bigr|\leq  \left\Vert \frac{\partial A}{\partial \xi}\right\Vert=\sin\theta\leq 1
\label{estimatesgonia}
\end{equation}
In (\ref{estimatesgonia}) the function $\left\Vert\cdot\right\Vert$ denotes the \textit{operator} or (\textit{spectral}) norm of a matrix:
\begin{definition}
Let $B(X,Y)$ denote the space of continuous linear operators $X\rightarrow Y$, where $X$ and $Y$ are normed spaces. Then $B(X,Y)$ is a vector space with a norm defined by \cite{Conway}: \footnote{The norm is well defined in the sense that if $\mathcal{T}$ is an operator, then $\left\Vert \mathcal{T} x\right\Vert / \left\Vert x\right\Vert\leq b$ for all non-zero $x\in X$, and the supremum $\left\Vert \mathcal{T}\right\Vert$ of such upper bounds $b$ exists. In fact, a linear map belongs to $B(X,Y)$ if, and only if, $\left\Vert \mathcal{T}\right\Vert<\infty$, in which case $\left\Vert \mathcal{T} x \right\Vert\leq \left\Vert \mathcal{T}\right\Vert \left\Vert x\right\Vert$.  }
\begin{equation}
\left\Vert A\right\Vert:=\sup_{x\not =0}\frac{\left\Vert Ax \right\Vert_Y}{\left\Vert x\right\Vert_X}.
\end{equation}
\end{definition}
It can be shown that $B(X,Y)$ is complete when $Y$ is complete.
For a matrix $A$ the operator norm is related to the spectral radius:
\begin{theorem}
\begin{equation}
\left\Vert A\right\Vert=\rho\sqrt{(A^*A)}
\end{equation}
\end{theorem}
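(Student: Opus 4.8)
The plan is to establish the standard characterisation of the spectral (operator) norm through the spectral theorem applied to the Hermitian matrix $A^{*}A$, where the relevant vector norm is the Euclidean one, so that $\Vert x\Vert^{2}=\langle x,x\rangle$. First I would record the two structural facts that drive the argument: the matrix $A^{*}A$ is Hermitian, since $(A^{*}A)^{*}=A^{*}A$, and it is positive semidefinite, since $\langle A^{*}A\,x,x\rangle=\langle Ax,Ax\rangle=\Vert Ax\Vert^{2}\geq 0$ for all $x$. By the spectral theorem for Hermitian matrices there is an orthonormal basis $\{u_{1},\dots,u_{n}\}$ of eigenvectors of $A^{*}A$, with eigenvalues $\lambda_{1},\dots,\lambda_{n}$ that are real and nonnegative by positive semidefiniteness. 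In particular the spectral radius coincides with the largest eigenvalue, $\rho(A^{*}A)=\lambda_{\max}:=\max_{j}\lambda_{j}$.

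The core step is to rewrite the Rayleigh quotient defining the norm. Expanding an arbitrary $x=\sum_{j}c_{j}u_{j}\neq 0$ in the eigenbasis and using orthonormality together with the reality of the $\lambda_{j}$, I would compute
\begin{equation}
\Vert Ax\Vert^{2}=\langle Ax,Ax\rangle=\langle A^{*}A\,x,x\rangle=\sum_{j=1}^{n}\lambda_{j}|c_{j}|^{2}\leq\lambda_{\max}\sum_{j=1}^{n}|c_{j}|^{2}=\lambda_{\max}\Vert x\Vert^{2}.
\end{equation}
Dividing by $\Vert x\Vert^{2}$ and passing to the supremum over $x\neq 0$ gives $\Vert A\Vert^{2}\leq\lambda_{\max}$. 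To see that this bound is sharp I would test the quotient on an eigenvector $u_{j_{0}}$ realising $\lambda_{j_{0}}=\lambda_{\max}$: then $\Vert Au_{j_{0}}\Vert^{2}=\langle A^{*}A\,u_{j_{0}},u_{j_{0}}\rangle=\lambda_{\max}$ while $\Vert u_{j_{0}}\Vert=1$, so the supremum equals $\lambda_{\max}$ exactly. Hence $\Vert A\Vert=\sqrt{\lambda_{\max}}=\sqrt{\rho(A^{*}A)}$, which is the same as $\rho\bigl(\sqrt{A^{*}A}\bigr)$, since the eigenvalues of the positive semidefinite square root $\sqrt{A^{*}A}$ are precisely the numbers $\sqrt{\lambda_{j}}$ and its spectral radius is therefore $\sqrt{\lambda_{\max}}$.

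The one point requiring genuine care, and the step I expect to be the main obstacle, is the simultaneous invocation of the spectral theorem and the guarantee that the supremum is actually attained as a maximum. Because $\{u_{j}\}$ is an orthonormal basis of a finite-dimensional space, the unit sphere is compact and the continuous map $x\mapsto\Vert Ax\Vert$ attains its maximum there, so no subtlety about an unreached supremum arises. This is exactly where finite dimensionality enters, and it is where the argument would need modification for a general continuous operator between infinite-dimensional normed spaces (for which $\Vert A\Vert$ is defined in the preceding Definition). Collecting the upper bound from the displayed inequality with its attainment on the dominant eigenvector then yields the claimed identity $\Vert A\Vert=\rho\bigl(\sqrt{A^{*}A}\bigr)$.
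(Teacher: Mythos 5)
Your proof is correct and complete: you verify that $A^{*}A$ is Hermitian and positive semidefinite, apply the spectral theorem, bound the Rayleigh quotient by the largest eigenvalue, and show the bound is attained on the dominant eigenvector, which is the standard argument for the identity $\left\Vert A\right\Vert=\sqrt{\rho(A^{*}A)}=\rho\bigl(\sqrt{A^{*}A}\bigr)$. The paper itself states this theorem without any proof (it is invoked only to justify the explicit formula for the operator norm of a real $2\times2$ matrix in the subsequent corollary), so there is no authorial argument to compare against; your write-up supplies exactly the missing justification, and your remark that finite dimensionality is what guarantees attainment of the supremum is a sensible clarification given that the preceding definition in the paper is phrased for general normed spaces.
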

Thus for a real $2\times2$ matrix $A$ we compute:
\begin{corollary}
\begin{equation}
\left\Vert\left[\begin{array}{cc} a&b\\
c&d\end{array}\right] \right\Vert=\sqrt{\frac{a^2+b^2+c^2+d^2+\sqrt{((a-d)^2+(b+c)^2)((a+d)^2+(b-c)^2)}}{2}}.
\end{equation}
\end{corollary}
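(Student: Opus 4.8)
The plan is to invoke the theorem immediately preceding the corollary, which identifies the operator norm with the spectral radius of $\sqrt{A^{*}A}$; since $A$ is real we have $A^{*}=A^{T}$, so that $\left\Vert A\right\Vert=\sqrt{\rho(A^{T}A)}$, the square root of the largest eigenvalue of the symmetric positive semidefinite matrix $A^{T}A$. First I would form
\begin{equation}
A^{T}A=\left[\begin{array}{cc} a^2+c^2 & ab+cd\\ ab+cd & b^2+d^2\end{array}\right]
\end{equation}
and read off the characteristic polynomial of this $2\times2$ symmetric matrix.

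For a symmetric matrix $\left[\begin{array}{cc} p & q\\ q & r\end{array}\right]$ the eigenvalues are $\tfrac{1}{2}\bigl((p+r)\pm\sqrt{(p-r)^2+4q^2}\bigr)$. Specialising to $p=a^2+c^2$, $r=b^2+d^2$, $q=ab+cd$, the larger eigenvalue is
\begin{equation}
\lambda_{\max}=\frac{a^2+b^2+c^2+d^2+\sqrt{(a^2+c^2-b^2-d^2)^2+4(ab+cd)^2}}{2},
\end{equation}
and taking its square root already produces the operator norm in one explicit form. All that remains is to bring the discriminant under the inner radical into the factored shape appearing in the statement.

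The only nontrivial step—the main obstacle, though still elementary algebra—is therefore to verify the polynomial identity
\begin{equation}
(a^2+c^2-b^2-d^2)^2+4(ab+cd)^2=\bigl((a-d)^2+(b+c)^2\bigr)\bigl((a+d)^2+(b-c)^2\bigr).
\end{equation}
I would prove this by setting $S:=a^2+b^2+c^2+d^2$ and noting that the right-hand side equals $(S+2bc-2ad)(S-2bc+2ad)=S^2-4(ad-bc)^2$, while on the left $(a^2+c^2-b^2-d^2)^2=S^2-4(a^2+c^2)(b^2+d^2)$ together with the identity $(a^2+c^2)(b^2+d^2)-(ab+cd)^2=(ad-bc)^2$ shows the left-hand side is likewise $S^2-4(ad-bc)^2$. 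Substituting this common value back into $\lambda_{\max}$ and extracting the square root yields precisely the claimed expression for $\left\Vert A\right\Vert$, completing the proof.
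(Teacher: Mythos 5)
Your proposal is correct and follows exactly the route the paper intends: the paper gives no written proof, merely deducing the corollary from the preceding theorem $\left\Vert A\right\Vert=\rho\sqrt{(A^*A)}$ with the phrase ``we compute,'' and your computation of the larger eigenvalue of $A^{T}A$ together with the algebraic identity $(a^2+c^2-b^2-d^2)^2+4(ab+cd)^2=S^2-4(ad-bc)^2$ is the calculation being elided. The verification of that identity via Lagrange's identity is sound, so the argument is complete.
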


 Alternatively we can express the angular equations in the following manner:
 \begin{equation}
 \left[\begin{array}{c} {\rm d}\mathcal{S}^{(+)}(\theta)/{\rm d}\theta\\
 -{\rm d}\mathcal{S}^{(-)}(\theta)/{\rm d}\theta\end{array}\right]+\left[\begin{array}{cc} -\nu \cos\theta&-\frac{m}{\sin\theta}+\xi\sin\theta\\
-\frac{m}{\sin\theta}+\xi\sin\theta&\nu\cos\theta\end{array}\right]\left[\begin{array}{c} \mathcal{S}^{(+)}(\theta)\\
 \mathcal{S}^{(-)}(\theta)\end{array}\right]
=\lambda\left[\begin{array}{c} \mathcal{S}^{(+)}(\theta)\\
 \mathcal{S}^{(-)}(\theta)\end{array}\right]
 \label{traseigenl}
 \end{equation}
 where $\mathcal{S}(\theta):=\sqrt{\sin\theta}\left[\begin{array}{c} S^{(+)}(\theta)\\
 S^{(-)}(\theta)\end{array}\right]$.
 We will first determine the eigenvalues of (\ref{traseigenl}) for $a=0$ (non-rotating black hole). For this we will need the following lemmas concerning Jacobi polynomials.
 \begin{lemma}\label{JacOne}
 \begin{equation}
 \frac{1+x}{2}P_{\nu-1}^{(\alpha+1,\beta+1)}(x)=\frac{\nu}{2\nu+1+\alpha+\beta}P_{\nu}^{(\alpha+1,\beta)}(x)+
 \frac{\beta+\nu}{2\nu+1+\alpha+\beta}P_{\nu-1}^{(\alpha+1,\beta)}(x)
 \end{equation}
 \begin{proof}
 We will prove it using the relation of Jacobi polynomials to hypergeometric functions. Indeed we have:
 \begin{equation}
 P_{\nu}^{(\alpha,\beta)}(\cos\theta)=\frac{(\alpha+1)_{\nu}}{\nu!}F(\alpha+\beta+1+\nu,-\nu,\alpha+1,\sin^2\frac{\theta}{2})
 \label{defJacGauss}
 \end{equation}
 Now the Gau$\ss$ hypergeometric function $F(a,b,c,x)$ obeys the recurrence  relation:
 \begin{equation}
 (b-a)(1-x)F=(c-a)F(a-)-(c-b)F(b-),
 \label{contiguous1}
 \end{equation}
 where we use the notation $F\equiv F(a,b,c,x),\;\;F(a\pm)=F(a\pm 1,b,c,x)$ and likewise for $F(b\pm),F(c\pm)$. Using (\ref{defJacGauss}),(\ref{contiguous1}), the lemma is proved.
\end{proof}
 \end{lemma}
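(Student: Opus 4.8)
The plan is to follow the hint and carry the whole computation through the Gauss hypergeometric representation (\ref{defJacGauss}), so that the identity collapses to a single application of the contiguous relation (\ref{contiguous1}). First I would put $x=\cos\theta$ and abbreviate $u:=\sin^2(\theta/2)=(1-x)/2$, so that the prefactor on the left becomes $\frac{1+x}{2}=1-u$; this is exactly the factor $(1-x)$ standing on the left of (\ref{contiguous1}) once its hypergeometric argument is renamed $u$.

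Next I would choose the hypergeometric parameters so that the three Jacobi polynomials appearing in the lemma are realised as $F$, $F(a-)$ and $F(b-)$ of one and the same $F$. Taking
\[
a=1-\nu,\qquad b=\alpha+\beta+\nu+2,\qquad c=\alpha+2,
\]
the representation (\ref{defJacGauss}) (together with the symmetry of $F$ in its first two arguments) gives
$F=F(a,b,c,u)=\frac{(\nu-1)!}{(\alpha+2)_{\nu-1}}P_{\nu-1}^{(\alpha+1,\beta+1)}(x)$,
while lowering $a$ raises the degree and clears one upper index,
$F(a-)=\frac{\nu!}{(\alpha+2)_{\nu}}P_{\nu}^{(\alpha+1,\beta)}(x)$,
and lowering $b$ keeps the degree but shifts $\beta+1\to\beta$,
$F(b-)=\frac{(\nu-1)!}{(\alpha+2)_{\nu-1}}P_{\nu-1}^{(\alpha+1,\beta)}(x)$.
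Matching these three correspondences is the crux of the argument: one has to read off that it is the \emph{lowering of the negative parameter} $-(\nu-1)$ (not of $\beta$) that produces the degree-$\nu$ polynomial, while lowering $b$ is what removes the second unit from the upper index.

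With these identifications the coefficients in (\ref{contiguous1}) are immediate, $b-a=2\nu+1+\alpha+\beta$, $c-a=\alpha+\nu+1$ and $c-b=-(\beta+\nu)$, so (\ref{contiguous1}) reads
\[
(2\nu+1+\alpha+\beta)(1-u)F=(\alpha+\nu+1)F(a-)+(\beta+\nu)F(b-).
\]
Finally I would substitute the three correspondences, use the Pochhammer splitting $(\alpha+2)_{\nu}=(\alpha+2)_{\nu-1}(\alpha+\nu+1)$ and $\nu!=\nu\,(\nu-1)!$ to reduce the coefficient of the $P_{\nu}^{(\alpha+1,\beta)}$ term to $\nu\,\frac{(\nu-1)!}{(\alpha+2)_{\nu-1}}$, cancel the common factor $\frac{(\nu-1)!}{(\alpha+2)_{\nu-1}}$ throughout, and divide by $2\nu+1+\alpha+\beta$. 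Since $1-u=\frac{1+x}{2}$, this yields precisely the stated identity. The only genuinely delicate point is the parameter bookkeeping of the preceding paragraph; once the three $F$-values are correctly matched, what remains is routine Pochhammer algebra.
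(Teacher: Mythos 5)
Your proposal is correct and follows exactly the route the paper intends: it realises the three Jacobi polynomials as $F$, $F(a-)$ and $F(b-)$ of a single hypergeometric function via (\ref{defJacGauss}) and then applies the contiguous relation (\ref{contiguous1}); your parameter identifications ($a=1-\nu$, $b=\alpha+\beta+\nu+2$, $c=\alpha+2$), the resulting coefficients $b-a=2\nu+1+\alpha+\beta$, $c-a=\alpha+\nu+1$, $c-b=-(\beta+\nu)$, and the Pochhammer cancellation all check out. In effect you have supplied the details the paper's one-line proof omits, so the two arguments coincide.
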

 \begin{lemma}\label{Jac2}
 \begin{equation}
 \frac{\nu(\alpha+\beta+1+\nu)}{2\nu+1+\alpha+\beta}P_{\nu}^{(\alpha+1,\beta)}(x)=-\frac{\nu(\alpha+\nu+1)}{2\nu+1+\alpha+\beta}P_{\nu-1}^{(\alpha+1,\beta)}(x)+
\nu P_{\nu}^{(\alpha+1,\beta-1)}(x)
\end{equation}
\end{lemma}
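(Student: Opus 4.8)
The plan is to prove Lemma~\ref{Jac2} in exactly the spirit of Lemma~\ref{JacOne}: translate every Jacobi polynomial into a Gauss hypergeometric function through the representation (\ref{defJacGauss}) and then collapse the resulting identity onto a single contiguous relation. The crucial observation is that the three polynomials $P_\nu^{(\alpha+1,\beta)}$, $P_{\nu-1}^{(\alpha+1,\beta)}$ and $P_\nu^{(\alpha+1,\beta-1)}$ all carry the \emph{same} lower parameter $c=\alpha+2$ and the same argument $w:=\sin^2(\theta/2)$ once (\ref{defJacGauss}) is applied with $\alpha\mapsto\alpha+1$; only the top parameters and the degree change.

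First I would record the three substitutions explicitly,
\begin{align}
P_\nu^{(\alpha+1,\beta)}(\cos\theta)&=\frac{(\alpha+2)_\nu}{\nu!}\,F(\alpha+\beta+2+\nu,-\nu,\alpha+2,w),\nonumber\\
P_{\nu-1}^{(\alpha+1,\beta)}(\cos\theta)&=\frac{(\alpha+2)_{\nu-1}}{(\nu-1)!}\,F(\alpha+\beta+1+\nu,-\nu+1,\alpha+2,w),\nonumber\\
P_\nu^{(\alpha+1,\beta-1)}(\cos\theta)&=\frac{(\alpha+2)_\nu}{\nu!}\,F(\alpha+\beta+1+\nu,-\nu,\alpha+2,w),\nonumber
\end{align}
and then eliminate the Pochhammer prefactors using $(\alpha+2)_\nu=(\alpha+2)_{\nu-1}(\alpha+\nu+1)$ and $\nu!=\nu\,(\nu-1)!$, so that $\tfrac{(\alpha+2)_{\nu-1}}{(\nu-1)!}=\tfrac{\nu}{\alpha+\nu+1}\,\tfrac{(\alpha+2)_\nu}{\nu!}$. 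Dividing the whole asserted identity by the common factor $\tfrac{(\alpha+2)_\nu}{\nu!}$ and by $\nu$, I expect it to reduce, after setting $a:=\alpha+\beta+1+\nu$ and $b:=-\nu$, to the single Gauss contiguous relation
\begin{equation}
(a-b)\,F(a,b,c,w)=a\,F(a+1,b,c,w)-b\,F(a,b+1,c,w),\nonumber
\end{equation}
since $2\nu+1+\alpha+\beta=a-b$, $\alpha+\beta+1+\nu=a$ and $\nu=-b$; note that the factor $\tfrac{\nu}{\alpha+\nu+1}$ picked up by $P_{\nu-1}^{(\alpha+1,\beta)}$ cancels exactly against the coefficient $\alpha+\nu+1$ on the right-hand side, which is what aligns the two top parameters into the $F(a,b+1,c)$ term.

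This last relation is one of Gauss's contiguous relations and can be verified in one line from the defining series: writing $F=\sum_n \tfrac{(a)_n(b)_n}{(c)_n n!}w^n$ and using $a\,(a+1)_n=(a)_n(a+n)$ together with $b\,(b+1)_n=(b)_n(b+n)$, the right-hand side equals $\sum_n \tfrac{(a)_n(b)_n}{(c)_n n!}\bigl[(a+n)-(b+n)\bigr]w^n=(a-b)F$. Equivalently, one may simply invoke the standard three-term contiguous relation for Jacobi polynomials, $(2\nu+\alpha+\beta+1)P_\nu^{(\alpha+1,\beta-1)}=(\nu+\alpha+\beta+1)P_\nu^{(\alpha+1,\beta)}+(\nu+\alpha+1)P_{\nu-1}^{(\alpha+1,\beta)}$, into which the claim rearranges verbatim upon multiplying through by $\nu/(2\nu+\alpha+\beta+1)$. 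The only genuine work is the bookkeeping of the Pochhammer prefactors in the second step, and the main (mild) obstacle is precisely verifying that the degree shift $\nu\to\nu-1$ produces the factor $\nu/(\alpha+\nu+1)$ needed for that cancellation; once this is in hand the lemma is immediate.
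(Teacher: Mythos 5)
Your proposal is correct and follows exactly the paper's route: the paper proves Lemma~\ref{Jac2} by applying the representation (\ref{defJacGauss}) and the Gauss contiguous relation $(a-b)F=aF(a+)-bF(b+)$, which is precisely the identity you isolate with $a=\alpha+\beta+1+\nu$, $b=-\nu$, $c=\alpha+2$. Your version merely supplies the Pochhammer bookkeeping (the factor $\nu/(\alpha+\nu+1)$ cancelling against $\alpha+\nu+1$) that the paper's one-line proof leaves implicit, and this bookkeeping is carried out correctly.
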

\begin{proof}
Using the recurrence relation for the Gau$\ss$ hypergeometric function:
\begin{equation}
(a-b)F=aF(a+)-bF(b+),
\end{equation}
the lemma is proved.
\end{proof}
\begin{lemma}\label{Jac3}
\begin{equation}
-P_{\nu-1}^{(\alpha+1,\beta)}(x)-P_{\nu}^{(\alpha,\beta)}(x)=-P_{\nu}^{(\alpha+1,\beta-1)}(x)
\end{equation}
\end{lemma}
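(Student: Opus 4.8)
The plan is to follow verbatim the strategy of Lemmas \ref{JacOne} and \ref{Jac2}: transcribe the three Jacobi polynomials in the claim into the Gauss hypergeometric representation (\ref{defJacGauss}), reduce the whole statement to a single contiguous relation for $F(a,b,c,x)$, and then verify that one relation. Writing the lemma in the equivalent (sign-reversed) form $P_\nu^{(\alpha+1,\beta-1)}(x)=P_{\nu-1}^{(\alpha+1,\beta)}(x)+P_\nu^{(\alpha,\beta)}(x)$ makes the bookkeeping cleanest.

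First I would set $t:=\sin^2\frac{\theta}{2}=\frac{1-x}{2}$ and apply (\ref{defJacGauss}) to each term. The key structural observation is that all three hypergeometric functions share the \emph{same} first parameter $a:=\alpha+\beta+\nu+1$: for $P_\nu^{(\alpha+1,\beta-1)}$ one has $(\alpha+1)+(\beta-1)+1+\nu=a$, and for $P_{\nu-1}^{(\alpha+1,\beta)}$ one has $(\alpha+1)+\beta+1+(\nu-1)=a$ as well. Setting $b:=-\nu$ and $c:=\alpha+1$, the three polynomials become, up to the prefactors $(\alpha+2)_\nu/\nu!$, $(\alpha+2)_{\nu-1}/(\nu-1)!$ and $(\alpha+1)_\nu/\nu!$ respectively, the functions $F(a,b,c+1,t)$, $F(a,b+1,c+1,t)$ and $F(a,b,c,t)$.

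Next I would clear the Pochhammer prefactors. Using $(\alpha+2)_\nu=(\alpha+\nu+1)(\alpha+2)_{\nu-1}$, $(\alpha+1)_\nu=(\alpha+1)(\alpha+2)_{\nu-1}$ and $(\nu-1)!=\nu!/\nu$, together with $c-b=\alpha+\nu+1$, $-b=\nu$ and $c=\alpha+1$, the claim collapses to the single Gauss relation
\[
(c-b)\,F(a,b,c+1,t)+b\,F(a,b+1,c+1,t)-c\,F(a,b,c,t)=0,
\]
which links $F$, its upper-parameter shift $F(b+)$ and the lower-parameter shift $F(c+)$; it is a contiguous relation of the same type as (\ref{contiguous1}) and the one used in Lemma \ref{Jac2}.

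Finally I would verify this relation, which is the only genuine content. Comparing the coefficient of $t^k$ on both sides, writing all three terms over the common denominator $(c+1)_k$ by means of $c\,(c+1)_k/(c)_k=c+k$ and $b\,(b+1)_k=(b)_k(b+k)$, the numerator reduces to $(b)_k\big[(c-b)+(b+k)-(c+k)\big]=0$, so the relation holds term by term; substituting $b=-\nu$, $c=\alpha+1$ and reinstating the prefactors then proves the lemma. The main (indeed essentially the only) obstacle is the bookkeeping: selecting the right relation among the many Gauss contiguous relations and tracking the prefactors so that the coefficients $\alpha+\nu+1$, $\nu$, $\alpha+1$ line up exactly as $c-b$, $-b$, $c$. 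Once that alignment is in place the verification is a one-line series computation.
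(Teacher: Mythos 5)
Your proposal is correct and follows essentially the same route as the paper: the contiguous relation you arrive at, $(c-b)F(a,b,c+1,t)+bF(a,b+1,c+1,t)-cF(a,b,c,t)=0$, is exactly the recurrence $b[F(b+)-F]=(c-1)[F(c-)-F]$ quoted in the paper's proof (read with base parameters $b'=-\nu$, $c'=\alpha+2$), and your Pochhammer bookkeeping matches the paper's implicit reduction via (\ref{defJacGauss}). The only difference is that you additionally verify the contiguous relation term by term, whereas the paper simply cites it as known.
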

\begin{proof}
Using (\ref{defJacGauss}) and the following recurrence relation for the Gau$\ss$ hypergeometric function:
\begin{equation}
b[F(b+)-F]=(c-1)[F(c-)-F]
\end{equation}
the lemma is proved.
\end{proof}

The result follows directly from Lemmas \ref{Jac2} and \ref{Jac3}:
\begin{proposition}\label{JacGustavJacob}
\begin{equation}
\frac{\nu+\alpha+\beta+1}{2\nu+1+\alpha+\beta}[\nu P_{\nu}^{(\alpha+1,\beta)}(x)+(\beta+\nu)P_{\nu-1}^{(\alpha+1,\beta)}(x)]=
(\nu+\beta)P_{\nu}^{(\alpha+1,\beta-1)}(x)-\beta P_{\nu}^{(\alpha,\beta)}(x).
\end{equation}
\end{proposition}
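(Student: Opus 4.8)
The plan is to establish Proposition \ref{JacGustavJacob} by pure algebra, feeding Lemma \ref{Jac2} into the left-hand side and then clearing the residual $P_{\nu-1}^{(\alpha+1,\beta)}$ term with Lemma \ref{Jac3}. No fresh appeal to the contiguous relations of the Gau\ss{} hypergeometric function is required, since those have already been packaged into Lemmas \ref{Jac2} and \ref{Jac3}; the proposition is therefore a formal consequence of combining them. Throughout I abbreviate $D:=2\nu+1+\alpha+\beta$ and assume $D\neq0$, the only nondegeneracy needed.

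First I would split the left-hand side as $\frac{\nu(\nu+\alpha+\beta+1)}{D}P_{\nu}^{(\alpha+1,\beta)}+\frac{(\nu+\alpha+\beta+1)(\beta+\nu)}{D}P_{\nu-1}^{(\alpha+1,\beta)}$, noting that the first summand is exactly the left-hand side of Lemma \ref{Jac2}. Substituting that lemma replaces it by $-\frac{\nu(\alpha+\nu+1)}{D}P_{\nu-1}^{(\alpha+1,\beta)}+\nu P_{\nu}^{(\alpha+1,\beta-1)}$. Collecting the two contributions proportional to $P_{\nu-1}^{(\alpha+1,\beta)}$, the whole left-hand side becomes $\nu P_{\nu}^{(\alpha+1,\beta-1)}$ plus $P_{\nu-1}^{(\alpha+1,\beta)}$ multiplied by the coefficient $\frac{-\nu(\alpha+\nu+1)+(\nu+\alpha+\beta+1)(\beta+\nu)}{D}$.

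The single genuine computation — and the step where a sign or bookkeeping slip is most likely — is checking that this coefficient collapses to exactly $\beta$. Expanding the numerator, the terms $\nu^2$, $\alpha\nu$ and $\nu$ cancel and the remainder factors as $\beta(2\nu+1+\alpha+\beta)=\beta D$, so the coefficient is indeed $\beta$ and the left-hand side reduces to $\beta P_{\nu-1}^{(\alpha+1,\beta)}+\nu P_{\nu}^{(\alpha+1,\beta-1)}$. Finally I would invoke Lemma \ref{Jac3}, rewritten as $P_{\nu-1}^{(\alpha+1,\beta)}=P_{\nu}^{(\alpha+1,\beta-1)}-P_{\nu}^{(\alpha,\beta)}$, in the first term; this yields $\beta P_{\nu}^{(\alpha+1,\beta-1)}-\beta P_{\nu}^{(\alpha,\beta)}+\nu P_{\nu}^{(\alpha+1,\beta-1)}=(\nu+\beta)P_{\nu}^{(\alpha+1,\beta-1)}-\beta P_{\nu}^{(\alpha,\beta)}$, which is precisely the right-hand side, completing the proof. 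Since both lemmas hold as identities in the parameters, the result persists for all admissible $(\nu,\alpha,\beta)$ by continuity, so no separate treatment of boundary parameter values is needed.
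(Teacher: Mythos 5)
Your proof is correct and follows exactly the route the paper intends: the paper simply asserts that Proposition \ref{JacGustavJacob} ``follows directly from Lemmas \ref{Jac2} and \ref{Jac3}'', and your argument supplies precisely that derivation, with the key algebraic check that the residual coefficient of $P_{\nu-1}^{(\alpha+1,\beta)}$ collapses to $\beta(2\nu+1+\alpha+\beta)/(2\nu+1+\alpha+\beta)=\beta$ carried out correctly.
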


\begin{proposition}
For $a=0$ the eigenvalues of the separation constant $\lambda$ in Eq.(\ref{HamiltonAngleKN1}) are given by the formula:
\begin{equation}
\lambda_{\nu}=\pm[\nu+\frac{1}{2}+m]
\end{equation}
\end{proposition}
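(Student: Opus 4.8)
The plan is to specialise the angular system to the non-rotating case and then exhibit its eigenfunctions explicitly as Jacobi polynomials, reading off the eigenvalue from the contiguous relations established in Lemmas \ref{JacOne}--\ref{Jac3} and Proposition \ref{JacGustavJacob}.

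First I would set $a=0$ in the reformulated angular system (\ref{traseigenl}). In the notation of (\ref{traseigenl}) the diagonal entries are $\mp a\mu\cos\theta$ and the off-diagonal pieces carry a term $a\omega\sin\theta$; both vanish at $a=0$, so the system collapses to
\[
\frac{\mathrm{d}\mathcal{S}^{(+)}}{\mathrm{d}\theta}-\frac{m}{\sin\theta}\mathcal{S}^{(-)}=\lambda\mathcal{S}^{(+)},\qquad
-\frac{\mathrm{d}\mathcal{S}^{(-)}}{\mathrm{d}\theta}-\frac{m}{\sin\theta}\mathcal{S}^{(+)}=\lambda\mathcal{S}^{(-)},
\]
which is the pure spin-$\tfrac{1}{2}$ angular Dirac operator on the sphere. (Here and below $\nu$ denotes the Jacobi polynomial degree appearing in the statement, \emph{not} the quantity $a\mu$.) Passing to $x=\cos\theta$, so that $\mathrm{d}/\mathrm{d}\theta=-\sqrt{1-x^2}\,\mathrm{d}/\mathrm{d}x$, I would examine the indicial exponents at the regular singular points $x=\pm1$ (i.e. $\theta=0,\pi$): the coupling $m/\sin\theta$ forces the local behaviour $(1\mp x)^{\pm m/2}$, and square-integrability with respect to the weight $2\sin\theta$ inherited from (\ref{HamiltonAngleKN1}) selects the branch that renders the solution normalisable.

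Next I would insert the Jacobi-polynomial ansatz
\[
\mathcal{S}^{(\pm)}(x)=(1-x)^{p_\pm}(1+x)^{q_\pm}\,P_{\nu}^{(\alpha_\pm,\beta_\pm)}(x),
\]
with the prefactor exponents fixed by the indicial analysis and the Jacobi indices chosen so that $\alpha+\beta=2m$. This is the decisive choice, because $\nu+\tfrac{1}{2}(\alpha+\beta+1)=\nu+m+\tfrac{1}{2}$ is precisely the quantity entering the claimed eigenvalue. Substituting into the reduced system, the derivative together with the $1/\sqrt{1-x^2}$ factors produces Jacobi polynomials with shifted indices $(\alpha\pm1,\beta)$, $(\alpha,\beta\pm1)$ and of degrees $\nu,\nu-1$. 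I would then apply Lemma \ref{JacOne} together with Proposition \ref{JacGustavJacob} (the composite of Lemmas \ref{Jac2} and \ref{Jac3}) to collapse every such term back into a single Jacobi family; matching the coefficient of the surviving polynomial then yields one algebraic consistency condition, which is exactly $\lambda=\pm\left[\nu+\tfrac{1}{2}+m\right]$. The two signs reflect the invariance of the reduced system under $\lambda\mapsto-\lambda$ combined with the reflection $\mathcal{S}^{(+)}\leftrightarrow\mathcal{S}^{(-)}$.

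The main obstacle I anticipate is the bookkeeping of the prefactor exponents and of the paired Jacobi indices $(\alpha_\pm,\beta_\pm)$: the term $m/\sin\theta$ acts with opposite effective sign on $\mathcal{S}^{(+)}$ and $\mathcal{S}^{(-)}$, so their indices are shifted relative to one another, and they must be aligned so that the ladder (contiguous) relations close exactly onto a common family. A secondary subtlety is the dependence on $\mathrm{sgn}(m)$ through the absolute values $|m\pm\tfrac{1}{2}|$ that appear in the indicial exponents; these must be tracked to verify that the normalisable branch is the one yielding the polynomial (rather than divergent) solution. It is this normalisability requirement that quantises $\lambda$ and discards the spurious second local solution at each of $x=\pm1$, thereby pinning the spectrum down to $\lambda_\nu=\pm\left[\nu+\tfrac{1}{2}+m\right]$.
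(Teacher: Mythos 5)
Your overall strategy is viable, and it is genuinely different in emphasis from the paper's. The paper does not attack the first-order system with a Jacobi ansatz plus contiguous relations; instead it first applies the explicit substitution $\mathcal{S}=\sin^{m+1/2}\theta\,\bigl(\sqrt{\tan(\theta/2)}\,u(\cos\theta),\ \sqrt{\cot(\theta/2)}\,v(\cos\theta)\bigr)$, which turns the $a=0$ system into the polynomial-coefficient pair $(1-x)u'=(m+\tfrac12)u+\lambda v$, $(1+x)v'=-(m+\tfrac12)v-\lambda u$ (eqns.\ (\ref{uJac1})--(\ref{vJac2})); it then eliminates $u$ to obtain the single second-order equation $(1-x^2)v''+[1-2(m+1)x]v'+[\lambda^2-(m+\tfrac12)^2]v=0$, recognises this as the Jacobi equation with $\alpha=m-\tfrac12$, $\beta=m+\tfrac12$, and reads the eigenvalue off $\nu(\nu+2m+1)=\lambda^2-(m+\tfrac12)^2$, i.e.\ $\lambda=\pm(\nu+m+\tfrac12)$. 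The contiguous relations of Lemmas \ref{JacOne}--\ref{Jac3} and Proposition \ref{JacGustavJacob} enter only afterwards, to identify the partner component $u=\pm P_{\nu}^{(\alpha+1,\beta-1)}$ --- not to derive the eigenvalue. Your route keeps the first-order system and extracts $\lambda$ from the closure of the ladder relations; that buys both components of the eigenspinor in one pass, at the price of the index bookkeeping you yourself anticipate, whereas the decoupling to the Jacobi ODE produces the quantisation as a single quadratic in $\lambda$ with essentially no bookkeeping.

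One concrete caution: the reduced system you write down is the literal $a=0$ specialisation of the displayed form of (\ref{traseigenl}), in which the $\theta$-derivative acts on the same component that $\lambda$ multiplies while $m/\sin\theta$ couples the two components. That pairing is inconsistent with the original equations (\ref{cornerKN1})--(\ref{corner2KNbh}) (the superscripts in the derivative vector of (\ref{traseigenl}) are interchanged): the correct $a=0$ system is $\frac{\mathrm{d}\mathcal{S}^{(-)}}{\mathrm{d}\theta}-\frac{m}{\sin\theta}\mathcal{S}^{(-)}=\lambda\mathcal{S}^{(+)}$ and $-\frac{\mathrm{d}\mathcal{S}^{(+)}}{\mathrm{d}\theta}-\frac{m}{\sin\theta}\mathcal{S}^{(+)}=\lambda\mathcal{S}^{(-)}$, with the derivative and $m/\sin\theta$ acting on the \emph{same} component, off-diagonally with respect to $\lambda$. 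This matters for your plan: running the indicial analysis at $\theta\to 0$ on the system as you wrote it gives $s\,a_{+}-m\,a_{-}=0$ and $s\,a_{-}+m\,a_{+}=0$, hence $s^{2}=-m^{2}$ and imaginary exponents, so no real prefactors $(1\mp x)^{p_{\pm}}$ exist and the Jacobi ansatz cannot close. With the corrected pairing the exponents are real, the prefactors reproduce the paper's $\sin^{m+1/2}\theta\sqrt{\tan(\theta/2)}$ and $\sin^{m+1/2}\theta\sqrt{\cot(\theta/2)}$, and your choice $\alpha_{\pm}+\beta_{\pm}=2m$ (namely the index pairs $(m+\tfrac12,m-\tfrac12)$ and $(m-\tfrac12,m+\tfrac12)$) does close under the contiguous relations, yielding $\lambda=\pm(\nu+m+\tfrac12)$ as claimed.
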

Applying the transformation:
\begin{equation}
\mathcal{S}(\theta)=\sin^{m+1/2}\theta\left[\begin{array}{c}\sqrt{\tan\frac{\theta}{2}}u(\cos\theta)\\
\sqrt{\cot\frac{\theta}{2}}v(\cos\theta)\end{array}\right],
\end{equation}
yields the equations:
\begin{align}
(1-x)\frac{{\rm d}u(x)}{{\rm d}x}&=(m+\frac{1}{2})u(x)+\lambda v(x)\label{uJac1},\\
(1+x)\frac{{\rm d}v(x)}{{\rm d}x}&=-(m+\frac{1}{2})v(x)-\lambda u(x)\label{vJac2}
\end{align}
It can be shown that $\lambda=0$ is not an eigenvalue because it does not satisfy the normalization conditions:
\begin{equation}
\int_{-1}^1(1-x)^{m+1/2}(1+x)^{m-1/2}u^2(x){\rm d}x<\infty,\;\;\int_{-1}^{+1}(1+x)^{m+1/2}(1-x)^{m-1/2}v^2(x){\rm d}x<\infty.
\end{equation}
Thus, we can write (\ref{vJac2}) as follows:
\begin{equation}
u(x)=\frac{-(m+1/2)v(x)-(1+x)v^{\prime}(x)}{\lambda},
\label{endiamensoJ}
\end{equation}
and substituting for $u(x)$ in (\ref{uJac1}) yields
\begin{equation}
(1-x^2)v^{\prime\prime}(x)+v(x)\left[\lambda^2-(m+\frac{1}{2})^2\right]+v^{\prime}(x)[1-2(m+1)x]=0.
\end{equation}
If we compare the last equation with the equation for the Jacobi polynomials:
\begin{equation}
(1-x^2)y^{\prime\prime}(x)+[\beta-\alpha-(\alpha+\beta+2)x]y^{\prime}(x)+\nu(\nu+\alpha+\beta+1)y(x)=0,
\end{equation}
we are lead to the identification:
$\beta=m+\frac{1}{2},\;\alpha=m-\frac{1}{2}$ and the order of the Jacobi polynomial is determined by the equation:
\begin{equation}
\nu^2+\nu(2m+1)=\lambda^2-\left(m+\frac{1}{2}\right)^2,
\end{equation}
which yields:
\begin{equation}
\nu=\lambda-(m+1/2),
\end{equation}
or equivalently $\lambda_{\nu}=\pm\left[\nu+\frac{1}{2}+m\right]$. Defining $v(x)=-P_{(\lambda-m-1/2)}^{(m-\frac{1}{2},m+\frac{1}{2})}(x)$ and using the derivative of the Jacobi polynomial:
\begin{equation}
(P_{\nu}^{(\alpha,\beta)}(x))^{\prime}=\frac{1}{2}(\nu+\alpha+\beta+1)P_{\nu-1}^{(\alpha+1,\beta+1)}(x),
\end{equation}
in equation (\ref{endiamensoJ}) we obtain:
\begin{equation}
\lambda_{\nu}u(x)=\beta P_{\nu}^{(\alpha,\beta)}(x)+\frac{1+x}{2}(\nu+\alpha+\beta+1)P_{\nu-1}^{(\alpha+1,\beta+1)}(x).
\end{equation}
Then using Lemmas \ref{JacOne}-\ref{Jac3} and Proposition \ref{JacGustavJacob} we obtain the result:
\begin{proposition}
\begin{align}
\lambda_{\nu}u(x)&=(\nu+\beta)P_{\nu}^{(\alpha+1,\beta-1)}(x)=|\lambda_{\nu}|P_{\nu}^{(\alpha+1,\beta-1)}(x)\nonumber \\
&\Leftrightarrow u(x)=\pm P_{\nu}^{(\alpha+1,\beta-1)}(x),\;\;\;x\in(-1,1).
\end{align}
\end{proposition}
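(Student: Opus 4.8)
The plan is to start from the identity established immediately above the statement, namely
\begin{equation}
\lambda_{\nu}u(x)=\beta P_{\nu}^{(\alpha,\beta)}(x)+\frac{1+x}{2}(\nu+\alpha+\beta+1)P_{\nu-1}^{(\alpha+1,\beta+1)}(x),
\end{equation}
and to collapse its right-hand side into a single Jacobi polynomial using the three lemmas and Proposition \ref{JacGustavJacob}. First I would apply Lemma \ref{JacOne} to the product $\frac{1+x}{2}P_{\nu-1}^{(\alpha+1,\beta+1)}(x)$, which lowers the upper index $\beta+1\to\beta$ at the cost of introducing two polynomials of type $(\alpha+1,\beta)$; multiplying through by the factor $\nu+\alpha+\beta+1$ then rewrites the second term as
\begin{equation}
\frac{\nu+\alpha+\beta+1}{2\nu+1+\alpha+\beta}\bigl[\nu P_{\nu}^{(\alpha+1,\beta)}(x)+(\beta+\nu)P_{\nu-1}^{(\alpha+1,\beta)}(x)\bigr].
\end{equation}

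The key observation is that this expression is exactly the left-hand side appearing in Proposition \ref{JacGustavJacob}. I would therefore invoke that proposition to replace it by $(\nu+\beta)P_{\nu}^{(\alpha+1,\beta-1)}(x)-\beta P_{\nu}^{(\alpha,\beta)}(x)$, after which the two $\beta P_{\nu}^{(\alpha,\beta)}(x)$ contributions cancel identically and we are left with
\begin{equation}
\lambda_{\nu}u(x)=(\nu+\beta)P_{\nu}^{(\alpha+1,\beta-1)}(x).
\end{equation}
To keep the derivation self-contained I would record that Proposition \ref{JacGustavJacob} is itself assembled from Lemma \ref{Jac2}, which trades $P_{\nu}^{(\alpha+1,\beta)}$ for $P_{\nu}^{(\alpha+1,\beta-1)}$ and $P_{\nu-1}^{(\alpha+1,\beta)}$, and from Lemma \ref{Jac3}, which expresses $P_{\nu}^{(\alpha+1,\beta-1)}$ through $P_{\nu}^{(\alpha,\beta)}$ and $P_{\nu-1}^{(\alpha+1,\beta)}$; the coefficient of the residual $P_{\nu-1}^{(\alpha+1,\beta)}$ term collapses to exactly $\beta(2\nu+1+\alpha+\beta)$ over the common denominator, which is precisely the combination forced to vanish by Lemma \ref{Jac3}.

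Finally I would recall the eigenvalue relation $\lambda_{\nu}=\pm(\nu+\tfrac12+m)=\pm(\nu+\beta)$ together with $\beta=m+\tfrac12$, so that $\nu+\beta=\nu+m+\tfrac12>0$ and hence $|\lambda_{\nu}|=\nu+\beta$. This identifies $(\nu+\beta)P_{\nu}^{(\alpha+1,\beta-1)}(x)=|\lambda_{\nu}|P_{\nu}^{(\alpha+1,\beta-1)}(x)$, and dividing by the nonzero eigenvalue $\lambda_{\nu}$ while noting that $|\lambda_{\nu}|/\lambda_{\nu}=\pm 1$ according to the chosen sign branch yields $u(x)=\pm P_{\nu}^{(\alpha+1,\beta-1)}(x)$ on $(-1,1)$, as claimed. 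The computation presents no genuine analytic obstacle, since all the substance is packaged into the contiguous relations for the Gauss hypergeometric function underlying Lemmas \ref{JacOne}--\ref{Jac3}; the only point demanding care is the bookkeeping of the rational prefactors $(\nu+\alpha+\beta+1)/(2\nu+1+\alpha+\beta)$ so that the unwanted $P_{\nu}^{(\alpha,\beta)}(x)$ and $P_{\nu-1}^{(\alpha+1,\beta)}(x)$ terms cancel exactly rather than merely simplify, and this is the step I would verify most carefully.
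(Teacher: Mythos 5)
Your proposal is correct and takes essentially the same route as the paper: the paper's own proof is just the one-line remark that the result follows from Lemmas \ref{JacOne}--\ref{Jac3} and Proposition \ref{JacGustavJacob}, and your computation (apply Lemma \ref{JacOne} to the $\frac{1+x}{2}P_{\nu-1}^{(\alpha+1,\beta+1)}$ term, invoke Proposition \ref{JacGustavJacob}, cancel the two $\beta P_{\nu}^{(\alpha,\beta)}$ contributions, then use $\lambda_{\nu}=\pm(\nu+\beta)$ with $\nu+\beta>0$) is exactly the intended expansion of that remark.
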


As we shall see in \textsection \ref{pdeeigenl} the eigenvalues of the angular equation in KN spacetime for different generic values of the parameters  $\nu,\xi$ obey a partial differential equation whose solution using the method of Charpit
is reduced to the solution of the nonlinear ODE of Painlev\'{e} ${\rm P_{III}}$.
The theory of solutions of the Painlev\'{e} ${\rm P_{III}}$ constitutes a very active field of research in mathematical analysis \cite{Clarkson}-\cite{Milne}. In \textsection{\ref{RicattiPIIIRational}} we obtain new exact solutions of the angular Painlev\'{e} ${\rm P_{III}}$  in terms of Bessels functions.

In  \textsection \ref{PIIIasymptotic} we present a new asymptotic analysis for the Painlev\'{e} ${\rm P_{III}}$ which describes the leading order behaviour of eigenvalues of the angular equation in KN spacetime for generic values of the physical parameters $\nu$ and $\xi$. In particular we shall derive in the large limit of the independent variable a closed form solution for the eigenvalues of the angular equation in terms of Jacobian elliptic functions \footnote{In \cite{solitons}, \textsection 7.5, the authors  discuss the asymptotics of solutions for the first two Painlev\'{e} transcendents ${\rm P_{I}},{\rm P_{II}}$. As is mentioned there,  Boutroux initiated such a research programme by studying the asymptotics of solutions of the first Painlev\'{e} differential equation ${\rm P_{I}}:\;\frac{{\rm d}^2 y}{{\rm d}x^2}=6y^2+x,$ in two memoirs and he obtained for the first Painlev\'{e} transcendent $y(x)\sim x^{1/2} \wp\left(\frac{4}{5}x^{5/4};-2,g_3\right), \text{as} |x|\rightarrow\; \infty$ where $\wp(x;g_2,g_3)$ denotes the Weierstra$\ss$ elliptic function \cite{Boutroux}.}.

A very important property of non-linear differential equations such as  $\rm{P_{III}}$ is that they can be considered in the framework of the inverse problem and can be represented as compatibility conditions for a system of auxiliary linear problems.
A particular linear representation of the nonlinear ODE ${\rm P_{III}}$ by a Lax pair was provided in \cite{Jimbo1} and is discussed in Appendix \ref{RiemannHilbertRep} (see eqns.(\ref{Lax1})-(\ref{Lax2})). In the same appendix we briefly discuss how this coupled linear system can be solved by the Inverse Monodromy Tranform and obtain rational solutions for the Painlev\'{e} ${\rm P_{III}}$.

\subsection{Reducing the partial differential equation satisfied by the eigenvalues to  ${\rm P_{III}}$ }\label{pdeeigenl}

In appendix \ref{pdePIIICharpit}, it is proved that the equation  the eigenvalues obey for different parameters $\nu,\xi$ is:
\begin{equation}
(\nu+2\lambda \xi)\frac{\partial \lambda}{\partial \nu}+(\xi+2\lambda \nu)\frac{\partial \lambda}{\partial \xi}+
2m\nu-2\xi\nu=0
\label{eigenvalueeqn}
\end{equation}
Introducing as in \cite{MonikaDB} new coordinates:
\begin{equation}
\nu(t,v)=\frac{t}{2}\left(v+\frac{\epsilon}{v}\right),\;\;\xi(t,v)=\frac{t}{2}\left(v-\frac{\epsilon}{v}\right),\;\;\epsilon\in
\left\{-1,+1\right\},
\end{equation}
and also $w(t,v)=\lambda(\nu,\xi)$ we compute:
\begin{align}
\frac{\partial w}{\partial t}&=\frac{\partial \nu}{\partial t}\frac{\partial \lambda}{\partial \nu}+\frac{\partial \xi}{\partial t}\frac{\partial \lambda}{\partial \xi} \nonumber \\
&=\frac{1}{2}\left(v+\frac{\epsilon}{v}\right)\frac{\partial \lambda}{\partial \nu}+\frac{1}{2}\left(v-\frac{\epsilon}{v}\right)\frac{\partial \lambda}{\partial \xi}\nonumber \\
&=\frac{1}{t}\left(\nu\frac{\partial \lambda}{\partial \nu}+\xi\frac{\partial \lambda}{\partial \xi}\right)
\end{align}
and
\begin{align}
\frac{\partial w}{\partial v}&=\frac{\partial \nu}{\partial v}\frac{\partial \lambda}{\partial \nu}+
\frac{\partial \xi}{\partial v}\frac{\partial \lambda}{\partial \xi} \nonumber \\
&=\frac{t}{2}\left(1-\frac{\epsilon}{v^2}\right)\frac{\partial \lambda}{\partial \nu}+\frac{t}{2}\left(1+\frac{\epsilon}{v^2}\right)\frac{\partial \lambda}{\partial \xi}\nonumber \\
&=\frac{1}{v}\left(\xi\frac{\partial \lambda}{\partial \nu}+\nu\frac{\partial \lambda}{\partial \xi}\right)
\end{align}
In the new coordinates equation (\ref{eigenvalueeqn}) becomes:
\begin{equation}
\frac{\partial w}{\partial t}+\frac{2vw}{t}\frac{\partial w}{\partial v}+m\left(v+\frac{\epsilon}{v}\right)-\frac{t}{2}\left(v^2-\frac{\epsilon^2}{v^2}\right)=0.
\label{partialcharpit}
\end{equation}
In order to solve equation (\ref{partialcharpit}) we will use the method of Charpit \footnote{Paul Charpit read his paper: \textit{M\'{e}moir sur l'int\'{e}grattion des \'{e}quations aux diff\'{e}rences partielles} to the Acad\'{e}mie des Sciences on Wednesday, June 30,1784. However, it appears the paper was never printed \cite{GrattamGuiness}.}. For a pde in the form:
\begin{equation}
F(x,y,u,p,q)=0,\;\;p=\frac{\partial u}{\partial x},\;q=\frac{\partial u}{\partial y},
\label{pdeCharachter}
\end{equation}
where $u(x,y)$ is a general solution of (\ref{pdeCharachter}), we have the following system of differentials:
\begin{equation}
\frac{{\rm d}x}{F_p}=\frac{{\rm d}y}{F_q}=\frac{{\rm d}u}{pF_p+qF_q}=\frac{{\rm d}p}{-F_x-pF_u}=\frac{{\rm d}q}{-f_y-qF_u}
\end{equation}
Applying this method to (\ref{partialcharpit}) we obtain:
\begin{align}
\dot{t}&=\frac{\partial F}{\partial(\partial w/\partial t)}=1,\\
\dot{v}&=\frac{\partial F}{\partial(\frac{\partial w}{\partial v})}=\frac{2vw}{t}\label{eigenasymptot}\\
\dot{w}&=\frac{\partial w}{\partial t}1+\frac{2vw}{t}\frac{\partial w}{\partial v}=-m\left(v+\frac{\epsilon}{v}\right)
+\frac{t}{2}\left(v^2-\frac{1}{v^2}\right).
\end{align}
The procedure yields:
\begin{equation}
\frac{\ddot{v}t}{2v}-\frac{\dot{v}^2t}{2v^2}+\frac{\dot{v}}{2v}=-m\left(v+\frac{\epsilon}{v}\right)
+\frac{t}{2}\left(v^2-\frac{1}{v^2}\right)
\end{equation}
Equivalently:
\begin{equation}
\ddot{v}-\frac{\dot{v}^2}{v}+\frac{\dot{v}}{t}+\frac{2m}{t}(v^2+\epsilon)-\left(v^3-\frac{1}{v}\right)=0.
\label{Painlevetria}
\end{equation}
This is Painlev\'{e}'s third equation, $\rm{P_{III}}$ (see eqn.(\ref{PainleveDREI}) in Appendix \ref{RiemannHilbertRep}) with parameters $\alpha=-2m,\beta=-2m\epsilon,\gamma=1,\delta=-1$.

\subsection{Rational solutions of the Painlev\'{e} ${\rm P_{III}}$ and one parameter solutions in terms of Bessels functions}\label{RicattiPIIIRational}

It is known from the work of Lukashevich \cite{Lukashevich} that the Painlev\'{e} ${\rm P_{III}}$ can be reduced to the Ricatti equation:
\begin{equation}
\frac{{\rm d}w}{{\rm d}z}=P_2(z) w^2+P_1(z) w+P_0(z),
\label{ricattiPIII}
\end{equation}
by imposing certain conditions on its parameters.
Indeed, differentiation of (\ref{ricattiPIII}) yields:
\begin{equation}
\frac{{\rm d}^2w}{{\rm d}z^2}=2P_2^2 w^3+(P_2^{\prime}+3P_1P_2)w^2+w(2P_2P_0+P_1^{\prime}+P_1^2)+P_1P_0+P_0^{\prime}
\label{ricattidouble}
\end{equation}
Substituting (\ref{ricattiPIII}) and (\ref{ricattidouble}) into the Painlev\'{e} ${\rm P_{III}}$ (\ref{PainleveDREI}) and comparing coefficients of equal powers of $w$ we obtain the following restrictions on the parameters of ${\rm P_{III}}$:
\begin{align}
P_2^2&=\gamma,\;\;P_1(z)=\frac{\alpha-P_2}{P_2}\frac{1}{z},\\
\beta&=\frac{P_0}{P_2}(-\alpha+2P_2),\;\;\delta+P_0^2=0,
\label{sinthikesRic}
\end{align}
so that it reduces to a Ricatti equation.
Now using the transformation $w=-\frac{u^{\prime}}{P_2 u}$ we obtain:
\begin{equation}
u_{zz}-P_1 u_z+P_0P_2u=0\Leftrightarrow u_{zz}+\left(1+\frac{\alpha}{\sqrt{\gamma}}\right)\frac{u_z}{z}+\sqrt{\gamma}\sqrt{-\delta}u=0.
\label{besselHumboldt}
\end{equation}
Applying the transformations:
\begin{equation}
x=\gamma^{1/4}(-\delta)^{1/4}z,\;u=z^{-\varrho}u_1(x),\;\varrho=\alpha/(2\gamma^{1/2}),
\end{equation}
yields the Bessel equation:
\begin{equation}
u_1^{\prime\prime}+\frac{1}{x}u_1^{\prime}+\left(1-\frac{\varrho^2}{x^2}\right)u_1=0.
\end{equation}
Thus the general solution of (\ref{besselHumboldt}) is written as follows:
\begin{equation}
u=Az^{-\varrho}J_{\varrho}(x)+Bz^{-\varrho}Y_{\varrho}(x),
\label{AstroBessel}
\end{equation}
where $J_{\varrho}(x)$ and $Y_{\varrho}(x)$ are Bessel functions. Thus $\rm{P_{III}}$ possesses solutions that are expressed in terms of Bessel functions. As is pointed out in \cite{Milne} the freedom in the choice of sign of the square roots of $\gamma$ and $-\delta$ means that the one-parameter family condition may be regarded as four individual conditions. Indeed, the one-parameter family condition for $\sqrt{\gamma}=\sqrt{-\delta}=1,$ taking into account the third constraint in (\ref{sinthikesRic}) yields $\beta+\alpha+2=0$. This solution is referred to as $y_0^{[1]}(x;\alpha_0,\beta_0,1,-1)$ in \cite{Milne}. For the choice $\gamma^{1/2}=(-\delta)^{1/2}=-1$, i.e. $2-\alpha-\beta=0$, the associated solution is referred as $y_0^{[4]}(x;\alpha_0,\beta_0,1,-1)$ in \cite{Milne}.

These cases are directly applicable to the $\rm{P_{III}}$ equation (\ref{Painlevetria}) in a KN  background spacetime.  If we choose the values $m=\frac{1}{2},\epsilon=+1$ which give:\; $\alpha=\beta=-1$ we get the first solution while if we choose $m=-\frac{1}{2},\epsilon=+1$ i.e. $\alpha=\beta=1$ we get the fourth. Explicitly, in this case the solution of (\ref{Painlevetria}) is expressed in terms of Bessel functions:
\begin{equation}
v=\frac{u_z}{\sqrt{\gamma}u},
\label{recursiveBessel}
\end{equation}
where $u(z)$ satisfies (\ref{AstroBessel}).
In computing (\ref{recursiveBessel}) it is convenient to use the recursive relations for the derivatives of Bessel functions:
\begin{equation}
zJ_{n}^{\prime}=nJ_n(z)-z J_{n+1}(z).
\end{equation}

\subsection{Asymptotic behaviour of the Painlev\'{e} ${\rm P_{III}}$ in terms of Jacobi's elliptic function ${\rm sn}(x)$}\label{PIIIasymptotic}
In the limit $|t|\rightarrow \infty$ eqn.(\ref{Painlevetria}) becomes:
\begin{equation}
v_{tt}=\frac{v_t^2}{v}-\frac{1}{v}+v^3.
\end{equation}

We now prove the following proposition:

\begin{proposition}\label{ellipticfunctionsn1}
A first integral of the differential equation $v_{tt}=\frac{v_t^2}{v}+\frac{\delta}{v}+\gamma v^3$ is:
\begin{equation}
v_t^2=\gamma v^4-\delta+2\mathcal{E}_{\rm{III}}v^2,
\end{equation}
where $\mathcal{E}_{\rm{III}}$ is a constant of integration.
\end{proposition}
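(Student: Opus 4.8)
The plan is to exploit the fact that the limiting equation $v_{tt}=\frac{v_t^2}{v}+\frac{\delta}{v}+\gamma v^3$ is \emph{autonomous}: the independent variable $t$ does not appear explicitly, so it admits reduction of order and hence an energy-type first integral. Concretely, I would set $p:=v_t$ and regard $p$ as a function of $v$, so that $v_{tt}=p\,\frac{{\rm d}p}{{\rm d}v}$. The equation then becomes the first-order ODE $p\,\frac{{\rm d}p}{{\rm d}v}=\frac{p^2}{v}+\frac{\delta}{v}+\gamma v^3$.

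The key step is the substitution $w:=p^2=v_t^2$, which linearises the problem. Since $\frac{{\rm d}w}{{\rm d}v}=2p\,\frac{{\rm d}p}{{\rm d}v}$, the equation turns into the linear first-order ODE $\frac{{\rm d}w}{{\rm d}v}-\frac{2}{v}\,w=\frac{2\delta}{v}+2\gamma v^3$. I would solve this with the integrating factor $v^{-2}$, giving $\frac{{\rm d}}{{\rm d}v}\left(v^{-2}w\right)=2\delta v^{-3}+2\gamma v$. Integrating both sides yields $v^{-2}w=-\delta v^{-2}+\gamma v^2+C$, and upon multiplying through by $v^2$ and renaming the integration constant $C=:2\mathcal{E}_{\rm{III}}$ one recovers exactly $v_t^2=\gamma v^4-\delta+2\mathcal{E}_{\rm{III}}v^2$.

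As an independent check (and the route I would write out if a purely verificational proof is preferred), I would differentiate the claimed relation with respect to $t$ holding $\mathcal{E}_{\rm{III}}$ constant, obtaining $2v_t v_{tt}=4\gamma v^3 v_t+4\mathcal{E}_{\rm{III}}v v_t$; dividing by $2v_t$ gives $v_{tt}=2\gamma v^3+2\mathcal{E}_{\rm{III}}v$. Substituting $2\mathcal{E}_{\rm{III}}=(v_t^2-\gamma v^4+\delta)/v^2$ from the first integral itself then collapses the right-hand side to $\frac{v_t^2}{v}+\frac{\delta}{v}+\gamma v^3$, reproducing the ODE. Equivalently, multiplying the ODE by the integrating factor $2v_t/v^2$ assembles every term into the single total derivative $\frac{{\rm d}}{{\rm d}t}\left(\frac{v_t^2}{v^2}+\frac{\delta}{v^2}-\gamma v^2\right)=0$, which is the same conserved quantity.

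There is no serious obstacle here; the statement is essentially a reduction-of-order computation. The only points requiring minor care are the division by $v_t$ in the verificational argument (valid away from the critical points $v_t=0$, with the resulting identity extended to them by continuity) and the division by $v$, which is harmless since $v$ cannot vanish identically for a solution of $\rm{P_{III}}$ in this regime. The mild subtlety worth flagging is conceptual rather than technical: the existence of such a first integral is guaranteed precisely by the loss of explicit $t$-dependence in the $|t|\rightarrow\infty$ limit, and $\mathcal{E}_{\rm{III}}$ plays the role of a conserved \emph{energy} that will parametrise the Jacobian-elliptic solutions obtained subsequently.
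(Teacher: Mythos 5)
Your proof is correct, and both of your routes (the constructive one and the verificational one) land on exactly the conserved quantity the paper uses, but your derivation is mechanically different from the paper's. The paper observes that multiplying the ODE by $v$ gives $v\,v_{tt}-v_t^2=\delta+\gamma v^4$, recognises the left-hand side as $v^2\frac{{\rm d}}{{\rm d}t}\bigl(\frac{v_t}{v}\bigr)$, and then, after dividing by $v^3$, separates variables in the pair $\bigl(\frac{v_t}{v},v\bigr)$ to integrate $\frac{v_t}{v}\,{\rm d}\bigl(\frac{v_t}{v}\bigr)={\rm d}\bigl[\frac{\gamma v^2}{2}-\frac{\delta}{2v^2}\bigr]$ directly; the integration constant is $\mathcal{E}_{\rm III}$. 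You instead perform the standard autonomous reduction $v_{tt}=p\,\frac{{\rm d}p}{{\rm d}v}$ and linearise via $w=p^2$, solving $\frac{{\rm d}w}{{\rm d}v}-\frac{2}{v}w=\frac{2\delta}{v}+2\gamma v^3$ with the integrating factor $v^{-2}$. The two methods are equivalent in substance — your integrating factor $v^{-2}$ is precisely what turns $w=v_t^2$ into the paper's variable $(v_t/v)^2$ — but yours has the advantage of being fully algorithmic (no need to spot the $v^2\frac{{\rm d}}{{\rm d}t}(v_t/v)$ combination) and of making transparent why exactly one arbitrary constant appears, while the paper's is slightly shorter once the right substitution is guessed. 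Your closing remarks on the divisions by $v$ and $v_t$ and on the autonomy of the limiting equation being the source of the first integral are sound and are points the paper passes over in silence.
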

\begin{proof}
We have that:
\begin{align}
&v^2\frac{{\rm d}}{{\rm d}t}\left(\frac{v_t}{v}\right)=\delta+\gamma v^4 \Leftrightarrow\nonumber \\
&v^{-3}v^2\frac{{\rm d}}{{\rm d}t}\left(\frac{v_t}{v}\right)=\gamma v+\frac{\delta}{v^3}\Leftrightarrow\nonumber \\
&v^{-1}\frac{{\rm d}}{{\rm d}t}\left(\frac{v_t}{v}\right)=\frac{{\rm d}}{{\rm d}v}\left[\gamma\frac{v^2}{2}-\frac{1}{2}\frac{\delta}{v^2}\right]\Leftrightarrow\nonumber \\
&\frac{v_t}{v}{\rm d}\left(\frac{v_t}{v}\right)={\rm d}[\gamma\frac{v^2}{2}-\frac{\delta}{2v^2}]\Leftrightarrow\nonumber \\
&\int\frac{v_t}{v}{\rm d}\left(\frac{v_t}{v}\right)=\int{\rm d}\left[\gamma\frac{v^2}{2}-\frac{\delta}{2v^2}\right]\Leftrightarrow\nonumber \\
&\frac{\left(\frac{v_t}{v}\right)^2}{2}=\frac{\gamma v^2}{2}-\frac{\delta}{2v^2}+\mathcal{E}_{\rm{III}}\Leftrightarrow\nonumber \\
&v_t^2=\gamma v^4-\delta+2\mathcal{E}_{\rm{III}}v^2.
\label{firstintegralsn}
\end{align}
\end{proof}

Equation (\ref{firstintegralsn}) can be integrated as follows:
\begin{equation}
\frac{{\rm d}v}{\sqrt{\gamma v^4+2\mathcal{E}_{\rm{III}}v^2-\delta}}={\rm d}t\Rightarrow\int\frac{{\rm d}v}{\sqrt{\gamma v^4+2\mathcal{E}_{\rm{III}}v^2-\delta}}=\int{\rm d}t
\label{elleiptikoInteJAC}
\end{equation}
This is an elliptic integral equation and its inversion will involve an elliptic function for discrete roots of the quartic equation: $\gamma v^4+2\mathcal{E}_{\rm{III}}v^2-\delta=0$.
If all the roots of the quartic are real and distinct and arranged in an ascending order of magnitude $v_1>v_2>v_3>v_4$ then we prove the following result:
\begin{proposition}
Equation (\ref{elleiptikoInteJAC}) can be solved in terms of the Jacobi sinus amplitudinus function ${\rm sn}(x,k)$
\begin{equation}
v=\frac{v_1-v_2 \frac{v_4-v_1}{v_4-v_2}{\rm sn}^2\left(\frac{\gamma \sqrt{(v_3-v_1)(v_4-v_2)}}{2}t\right)}
{1-\frac{v_4-v_1}{v_4-v_2}{\rm sn}^2\left(\frac{\gamma \sqrt{(v_3-v_1)(v_4-v_2)}}{2}t\right)}
\end{equation}
\end{proposition}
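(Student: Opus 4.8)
The plan is to reduce the elliptic integral (\ref{elleiptikoInteJAC}) to Legendre normal form by a single Möbius substitution that carries the four real branch points $v_1>v_2>v_3>v_4$ of the quartic onto the four branch points $\{0,1,1/k^2,\infty\}$ of the map $u\mapsto {\rm sn}^2(u,k)$, and then to invert that substitution algebraically. First I would factor
\[
\gamma v^4+2\mathcal{E}_{\rm III}v^2-\delta=\gamma(v-v_1)(v-v_2)(v-v_3)(v-v_4),
\]
so that (\ref{elleiptikoInteJAC}) reads $\int{\rm d}v\,\bigl[\gamma(v-v_1)(v-v_2)(v-v_3)(v-v_4)\bigr]^{-1/2}=\int{\rm d}t$. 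Then I would introduce the new variable $u$ through
\[
{\rm sn}^2(u,k)=\frac{(v_4-v_2)(v-v_1)}{(v_4-v_1)(v-v_2)},
\]
which is exactly the inverse of the relation asserted in the statement: solving the displayed Möbius formula for ${\rm sn}^2$ returns precisely this expression. This substitution sends $v=v_1\mapsto{\rm sn}^2=0$ and $v=v_2\mapsto{\rm sn}^2=\infty$ (the simple zero and the pole), while $v=v_4\mapsto{\rm sn}^2=1$.

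Forcing the remaining root $v_3$ onto the last branch point ${\rm sn}^2=1/k^2$ then fixes the modulus as the cross-ratio
\[
k^2=\frac{(v_4-v_1)(v_3-v_2)}{(v_4-v_2)(v_3-v_1)},
\]
and one checks directly that $0<k^2<1$ under the ordering $v_1>v_2>v_3>v_4$. The core computation is the verification that the substitution trivializes the differential. Writing $s={\rm sn}^2(u,k)$ and $B=(v_4-v_2)/(v_4-v_1)$, the defining relation inverts to $v=(Bv_1-sv_2)/(B-s)$, whence each factor becomes
\[
v-v_1=\frac{s(v_1-v_2)}{B-s},\quad v-v_2=\frac{B(v_1-v_2)}{B-s},\quad v-v_3=\frac{(v_4-v_2)(v_1-v_3)}{v_4-v_1}\frac{1-k^2s}{B-s},\quad v-v_4=\frac{(v_2-v_4)(1-s)}{B-s}.
\]
Using ${\rm d}s=2\,{\rm sn}\,{\rm cn}\,{\rm dn}\,{\rm d}u$ together with the defining identity $({\rm sn}')^2=(1-{\rm sn}^2)(1-k^2{\rm sn}^2)$, the factor $\sqrt{s(1-s)(1-k^2s)}$ produced in the denominator cancels exactly against the Jacobian ${\rm d}v/{\rm d}u$, leaving after the algebra the clean identity
\[
\frac{{\rm d}v}{\sqrt{\gamma\,(v-v_1)(v-v_2)(v-v_3)(v-v_4)}}=\frac{2}{\sqrt{\gamma\,(v_1-v_3)(v_2-v_4)}}\,{\rm d}u.
\]
Integrating and absorbing the constant into the origin of $t$ gives $u=\tfrac{1}{2}\sqrt{\gamma}\,\sqrt{(v_1-v_3)(v_2-v_4)}\,t=\tfrac12\sqrt{\gamma}\,\sqrt{(v_3-v_1)(v_4-v_2)}\,t$, which is exactly the argument of ${\rm sn}$ appearing in the statement (in the physically relevant case $\gamma=1$).

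Finally I would invert the defining relation for ${\rm sn}^2$, i.e.\ solve $A\,{\rm sn}^2=(v-v_1)/(v-v_2)$ with $A=(v_4-v_1)/(v_4-v_2)$ for $v$, which reproduces the displayed Möbius expression $v=\bigl[v_1-v_2 A\,{\rm sn}^2\bigr]/\bigl[1-A\,{\rm sn}^2\bigr]$ and completes the proof. I expect the main obstacle to be the bookkeeping in the differential-matching step: one must track the four root differences $v-v_i$ rewritten through the substitution, keep the signs correct under the prescribed ordering so that every square root is real and positive on the relevant branch, and confirm that the constant emerging from the cancellation is precisely $\tfrac12\sqrt{\gamma}\,\sqrt{(v_3-v_1)(v_4-v_2)}$ rather than a permuted variant. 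The conceptual content—that any quartic with four distinct real roots is rationalized by the unique Möbius map normalizing its roots to the branch points of ${\rm sn}^2$, with modulus given by their cross-ratio—is classical; it is this constant-chasing and the verification that $0<k^2<1$ that require the most care.
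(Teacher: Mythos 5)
Your proposal is correct and follows essentially the same route as the paper: the Möbius substitution $z=\frac{(v_4-v_2)(v-v_1)}{(v_4-v_1)(v-v_2)}$ carrying the roots to the branch points of ${\rm sn}^2$, the modulus $k^2$ given by the same cross-ratio, reduction to Legendre normal form (the paper inserts the intermediate step $z=x^2$ before inverting to ${\rm sn}$, which is equivalent to your direct use of $s={\rm sn}^2$), and the same algebraic inversion back to the displayed expression for $v$. No substantive difference.
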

\begin{proof}
For $\mathcal{E}_{\rm{III}}<0$ and $\gamma=-\delta=1$ the roots of the quartic equation are real and they are given by the following expressions:
\begin{align}
v_1&=+\sqrt{\frac{-\mathcal{E}_{\rm{III}}+\sqrt{\mathcal{E}_{\rm{III}}^2+\gamma\delta}}{\gamma}}, \\
v_2&=+\sqrt{\frac{-\mathcal{E}_{\rm{III}}-\sqrt{\mathcal{E}_{\rm{III}}^2+\gamma\delta}}{\gamma}}, \\
v_3&=-\sqrt{\frac{-\mathcal{E}_{\rm{III}}-\sqrt{\mathcal{E}_{\rm{III}}^2+\gamma\delta}}{\gamma}}, \\
v_4&=-\sqrt{\frac{-\mathcal{E}_{\rm{III}}+\sqrt{\mathcal{E}_{\rm{III}}^2+\gamma\delta}}{\gamma}}.
\end{align}
Applying the transformation:
\begin{equation}
z=\left(\frac{v_4-v_2}{v_4-v_1}\right)\left(\frac{v-v_1}{v-v_2}\right)\equiv\frac{1}{\omega^{\prime}}\left(\frac{v-v_1}{v-v_2}\right)
\end{equation}
yields
\begin{align}
&\int_{v_1}^v \frac{{\rm d}v^{\prime}}{\sqrt{(v^{\prime}-v_1)(v^{\prime}-v_2)(v^{\prime}-v_3)(v^{\prime}-v_1)}}\nonumber \\
&=\frac{\sqrt{\omega^{\prime}}}{\sqrt{(v_3-v_1)(v_4-v_1)}}\int_0^z\frac{{\rm d}z}{\sqrt{z(1-z)(1-k^2z}},
\end{align}
where the modulus $k^2$ is given by:
\begin{equation}
k^2=\frac{v_4-v_1}{v_4-v_2}\frac{v_2-v_3}{v_1-v_3}
\end{equation}
Finally, using the transformation $z=x^2$ yields
\begin{equation}
\int\frac{{\rm d}x}{\sqrt{(1-x^2)(1-k^2 x^2)}}=\frac{\sqrt{(v_4-v_2)(v_3-v_1)}t}{2},
\end{equation}
and inverting
\begin{equation}
x={\rm sn}\left(\frac{\sqrt{(v_3-v_1)(v_4-v_2)}}{2}t,k^2\right).
\end{equation}
\end{proof}

The Jacobian  functions are elliptic functions with two periods $4mK,4niK^{\prime},m,n\in \mathbb{Z}$:
\begin{align}
{\rm sn}(u+4mK+4niK^{\prime})&={\rm sn}u,\\
{\rm cn}(u+4mK+4niK^{\prime})&={\rm cn}u,\\
{\rm dn}(u+4mK+4niK^{\prime})&={\rm dn}u,
\end{align}
where
\begin{equation}
K=\int_0^1 \frac{{\rm d}t}{\sqrt{(1-t^2)(1-k^2t^2)}}, K^{\prime}=\int_1^{1/k} \frac{{\rm d}x}{\sqrt{(x^2-1)(1-k^2 x^2)}}=\int_0^{\cos^{-1}k}\frac{{\rm d}\theta}{\sqrt{\cos^2\theta-k^2}}
\end{equation}

Following \cite{Joshi} and the results in Proposition \ref{ellipticfunctionsn1}, we can define an energy-like quantity for the generic $\rm{P_{III}}$:
\begin{equation}
{\rm{E_{III}}}:=\frac{1}{2v^2}(v_t^2-\gamma v^4+\delta).
\end{equation}
Then applying the transformation $v=\alpha\; \mathcal{U}(\beta t)=:\alpha\; \mathcal{U}(\eta)$ to the previous equation yields:
\begin{equation}
v_t^2=\alpha^2\beta^2\mathcal{U}_{\eta}^2=2\alpha^2{\rm E_{III}}\mathcal{U}^2+\alpha^4\gamma \mathcal{U}^4-\delta.
\label{ellipticCarlJac}
\end{equation}

If we compare the differential equation (\ref{ellipticCarlJac}) that the new dependent variable $\mathcal{U}$ obeys with the equation for the derivatives of the Jacobian elliptic functions:
\begin{align}
\frac{\rm d}{{\rm d}u}{\rm sn}^2(u,k)&=2{\rm sn}(u,k)\frac{{\rm d sn}u}{{\rm d}u}=2{\rm sn}(u,k){\rm cn}(u,k){\rm dn}(u,k)\Leftrightarrow\nonumber \\
\left(\frac{{\rm d sn}}{{\rm d}u}\right)^2&=(1-{\rm sn}^2 u)(1-k^2 {\rm sn}^2 u ).
 \end{align}
 we conclude that:
 \begin{equation}
 v\sim \alpha\mathcal{U}(\eta)=\alpha \;{\rm sn}(\beta t),\; \text{as}\;\;|t|\rightarrow \infty,
 \end{equation}
 where $\frac{2{\rm E_{III}}}{\beta^2}=-(k^2+1),\;\frac{\alpha^2\gamma}{\beta^2}=k^2,\;\frac{-\delta}{\alpha^2\beta^2}=1$ and the Jacobi modulus satisfies the equation:
 \begin{equation}
 2{\rm{E_{III}}}k\pm\sqrt{-\gamma\delta}(k^2+1)=0.
 \end{equation}
 For the specific angular $\rm{P_{III}}$ differential equation (\ref{Painlevetria}) $\alpha^2=\beta^2=4m^2$ and the Jacobi modulus is $k^2=1$. In this case $\rm{sn}(\beta t,1)=\rm{tanh}(\beta t)$ and the leading doubly periodic behaviour becomes singly periodic. Also
 in this degenerate case we have that: $v\sim(-2m){\rm{tanh}}(-2m\epsilon t)\;\; \text{as}\;\; |t|\rightarrow\infty$ and $E^2_{III}=1$.

Solving equation (\ref{eigenasymptot}) for $w$ results in the following equation that relates the eigenvalues of the KN angular equation to the large-$t$ limit of the Painlev\'{e} $\rm{P_{III}}$ in terms of Jacobian elliptic functions (in the degenerate limit in terms of  hyperbolic functions) \footnote{A full analysis of all the regions in the complex plane in which the elliptic asymptotic limit of the generic Painlev\'{e} $\rm{P_{III}}$ is valid in the spirit of Boutroux \cite{Boutroux}, for all possible values of the parameter $\rm{E}_{\rm{III}}$ is beyond the scope of the current publication and will be investigated elsewhere. Some properties of the elliptic integrals involved as functions of $\rm{E}_{\rm{III}}$ are discussed in Appendix \ref{AbelPicard}.}:
\begin{equation}
\lambda(\nu,\xi)=\frac{\dot{v}t}{2v}.
\label{importantasymptot}
\end{equation}
\subsubsection{Exact solutions for $a\omega=\pm a\mu$}
 In the special case $a\omega=\pm a\mu$ we find that the angular GHE (\ref{GONGHE}) reduces to
 \begin{align}
\Biggl\{\frac{\mathrm{d}^2}{\mathrm{d}z^2}&+\left[\frac{2\alpha_1+1}{z}+
\frac{2\alpha_2+1}{z-1}+\frac{-1}{z-z_3}\right]\frac{\mathrm{d}}{\mathrm{d}z}\nonumber \\
&+\frac{\beta_0+\beta_1 z}{z(z-1)(z-z_3)}\Biggr\}\bar{S}(z)=0.
\label{GONHEfalse}
\end{align}
Thus the exact angular eigenfunctions for this particular case are solutions of a Heun equation. Moreover, the problem simplifies even further if the singular point $z_3$ is a \textit{false} singular point. In this case, as has been shown in \cite{Kraniotis1}, the exact solution of  Heun's differential equation with a false singular point is given in terms of Gau$\ss$ hypergeometric function-see eqn (255), page 39 in \cite{Kraniotis1}. The concept of false singularity is described in detail in \cite{Kraniotis1}. This theory directly applies to (\ref{GONHEfalse}). We have therefore obtained a novel result in a top-down approach.

Now with regard to the angular eigenvalues for the special case $a\omega=\pm a\mu$,  the function $w(\nu):=\lambda(\nu,-\psi\nu)$ for some fixed $\psi\in\{-1,+1\}$ satisfies eqn. (\ref{eigenvalueeqn}). Substituting in (\ref{eigenvalueeqn}) dividing by $-\psi\nu$ and integrating the resulting equation we obtain:
\begin{align}
w(\nu)^2-\frac{w(\nu)}{\psi}&=\nu^2+2m\frac{\nu}{\psi}+C\nonumber \\
\Leftrightarrow \left(w(\nu)-\frac{\psi}{2}\right)^2&=\nu^2+2m\nu\psi+\left(w(0)-\frac{\psi}{2}\right)^2
\nonumber \\
\Leftrightarrow w(\nu)\equiv \lambda_j(m;\nu,-\psi\nu)&=\frac{\psi}{2}+{\rm sgn}(j)\sqrt{(\lambda_j(m,0,0)-\frac{\psi}{2})^2+2m\nu\psi+\nu^2},j\in\Bbb{Z}\setminus \{0\},
\label{nixiequaleigenval}
\end{align}
where $\lambda_j(m,0,0)$ are the eigenvalues for the Reissner-Nordstr\"{o}m black hole. Equation (\ref{nixiequaleigenval}) is in agreement with the corresponding results in \cite{SRDolan},\cite{MonikaDB}.

\section{Remarks on the existence of bound states in the KN spacetime}

An interesting issue  is the problem of whether or not bound states exist for the Dirac equation in the presence of a rotating charged black hole. There is a growing literature in the subject \cite{Finster}-\cite{Belgiorno} with various results. Finster \textit{et al} obtained a non-existence theorem regarding normalisable time-periodic solutions for the Dirac equation in the nonextreme Kerr-Newman geometry \cite{Finster}. In their work the authors introduced certain matching conditions for the spinor field across the Cauchy and event horizons, which in turn, enabled them to obtain a \textit{weak} solution of the Dirac equation valid across the horizons. Specifically, by writing the time periodic  wavefunction as a discrete Fourier series,  they exploited the conservation of the Dirac current (flux) to show that, because of the matching conditions, the only way to obtain a normalisable bound state of the Dirac equation is that each term in the Fourier series expansion is identically zero.  Also the authors in \cite{FBelgiornoMKN} computed the essential spectrum of the radial operator in the KN black hole and concluded that it covers the real axis. The same approach was followed in \cite{Belgiorno} with the inclusion of the cosmological term. On the other hand Schmidt investigated the Dirac equation in the extreme Kerr-Newman geometry and derived a set of necessary and sufficient conditions for the existence of bound states in such a background \cite{HSchmidt} \footnote{The set of complicated inequalities and equalities that represent these conditions are presented in appendix \ref{extremeBound}}. However, even in this case  due to the complex way the radial and angular eigenvalues are intertwined no concrete examples of bound states that satisfy these conditions have been found.

Finster \textit{et al} \cite{Finster} interpreted their results as an indication that, in contrast to the classical situation of massive particle orbits, a quantum mechanical Dirac particle must either disappear into the black hole or escape to infinity. Due to the physical significance of these results it is imperative to investigate the problem further via a different method.
The work in \cite{Finster},\cite{FBelgiornoMKN}, lacked the knowledge of the exact analytic solutions  for both the radial and angular components of the Dirac equation in the Kerr-Newman background that we gained and developed in this work.
Thus it is interesting  to apply our fundamental exact mathematical approach in order to explore this important issue further.

\subsection{Investigation of the existence of bound states with $\omega\in\Bbb{R},\;\omega<\mu$, using the four-term recurrence relation for the power-series coefficients of the solution of the radial GHE  in the KN spacetime }\label{boundfermionicstates}

After separation of the variables in the Dirac equation in KN spacetime, $\omega\in \Bbb{R}$  will be an energy eigenvalue of the Dirac equation, if there exists $\lambda\in \Bbb{R}$ and non-trivial solutions:
\begin{equation}
R(r)=\left[\begin{array}{c} R^{(+)}\\
R^{(-)}\end{array}\right],\;\;\;S(\theta)=\left[\begin{array}{c} S^{(+)}\\
S^{(-)}\end{array}\right], \;r>r_{+},\theta\in(0,\pi),
\end{equation}
satisfying the normalisation conditions:
\begin{equation}
\int_{r_+}^{\infty}|R(r)|^2\frac{r^2+a^2}{\Delta^{KN}}{\rm d}r<\infty,\;\;\int_0^{\pi}|S(\theta)|^2\sin\theta\;{\rm d}\theta<\infty.
\end{equation}
From the local solution near the event horizon (\ref{RadialSpinordecay}) and demanding that the radial spinor wavefunctions decay exponentially asympotically away from the KN black hole, we observe that the problem of existence of fermionic bound states reduces to the question of whether or not the radial GHE (\ref{GHERADIALKN}) admits polynomial solutions for $|\omega|<\mu$, since then the local solution near the event horizon would satisfy the radial normalisation condition.
In order to determine if there exist polynomial solutions we will make use of the four-term recursion relation that the coefficients in the power series expansion of the local solution near the event horizon satisfy:
\begin{equation}
T_k=\varphi(k-1)T_{k-1}-\varphi_2(k-2)T_{k-2}+\varphi_3(k-3)T_{k-3}, \;\;T_{-1}=T_{-2}=0,\;(k\in\Bbb{N}),
\end{equation}
where
\begin{align}
\varphi_1(\xi)&=\xi(\xi+1-\mu_1-\mu_0)+\frac{\xi}{\tilde{a}}(\xi+1-\mu_1-\mu_2)+\alpha\xi-\frac{\tilde{\beta}_0}{\tilde{a}},\\
\varphi_2(\xi)&=(\xi+1)(\xi+1-\mu_1)(\frac{\xi}{\tilde{a}}(\xi+2-\mu_1-\mu_0-\mu_2)+
2\alpha\frac{(\tilde{a}+1)}{\tilde{a}}+\frac{\tilde{\beta}_1}{\tilde{a}}),\\
\varphi_3(\xi)&=(\xi+1)(\xi+2)(\xi+1-\mu_1)(\xi+2-\mu_1)(\frac{\alpha\xi}{\tilde{a}}-\frac{\tilde{\beta}_2}{\tilde{a}}).
\label{anadromikoshorizonevent}
\end{align}
We will have a polynomial solution if the local solution near $r_+$ for the radial GHE in KN spacetime terminates at some positive integer $N$:
\begin{equation}
\bar{R}=\sum_{\nu=0}^N\frac{T_{\nu}\zeta^{\nu}}{\nu!\Gamma(1-\mu_1+\nu)}
\end{equation}
We will have a polynomial solution if:
\begin{equation}
T_{k-1}=T_{k-2}=0,\varphi_3(k-3)=0.
\end{equation}
Indeed setting $k-3=N$ in the recursive formula (\ref{anadromikoshorizonevent}) the conditions for a polynomial solution become:
\begin{equation}
T_{N+2}=T_{N+1}=0,\;\;\varphi_3(N)=0
\end{equation}
Bound states will occur for those real values of $\omega$ such that $|\omega|<\mu$ and $\omega$ satisfies simultaneously the above conditions.
The condition $\varphi_3(N)=0$ is satisfied if:
\begin{equation}
N+1-\mu_1=0,\text{or}\; N+2-\mu_1-0,\;\text{or}\;\alpha N-\tilde{\beta}_2=0.
\label{SIGNIFICANTPROOF}
\end{equation}
Interestingly enough, the condition $\varphi_3(N)=0$ cannot be satisfied. For instance,
\begin{align}
&\alpha N-\tilde{\beta}_2=0\Leftrightarrow \nonumber \\ &-2\sqrt{\mu^2-\omega^2}(r_+-r_-)N=C_1+C_2+C_3-2\sqrt{\mu^2-\omega^2}(\boldsymbol{\mu_1}+\boldsymbol{\mu_2})(r_+-r_-) \end{align}
where:
\begin{align}
\boldsymbol{\mu_1}&=\frac{-\left(\frac{M-r_-}{r_--r_+}\right)\pm\sqrt{\left(\frac{M-r_-}{r_--r_+}\right)^2-4[K(r_+)^2+i (r_+-M)K(r_+)]}}{2},\\
\boldsymbol{\mu_2}&=\frac{\left(\frac{M-r_+}{r_--r_+}\right)\pm\sqrt{\left(\frac{M-r_+}{r_--r_+}\right)^2-4[K(r_-)^2+i (r_--M)K(r_-)]}}{2},
\end{align}
and\footnote{Also $C_1+C_2+C_3\in \Bbb{R}.$}:
\begin{equation}
K(r_\pm)=\omega (a^2+r_{\pm}^2)-am+eqr_{\pm}.
\end{equation}
Thus the condition $\varphi_3(N)=0$ demands the equality of a real number with a complex number which is absurd \footnote{For the KN black hole with $r_{+}\not = r_{-}$ no real value of $\omega$ exists that simultaneously leads to the vanishing of both $K(r_+)$ and $K(r_-)$. }.
Likewise, we have checked that none of the conditions in (\ref{SIGNIFICANTPROOF})  can be satisfied if one of the two factors $K(r_+),K(r_-)$ is vanishing.
Suppose that $K(r_+)=0$ or equivalently $\omega=\frac{am-eqr_+}{a^2+r_+^2}$. Then
$1-\mu_1=2\boldsymbol{\mu}_1+\frac{M-r_+}{r_--r_+}=1\pm\left|\frac{M-r_-}{r_--r_+}\right|$.
Then $N+1-\mu_1=0\Leftrightarrow N+1\pm \frac{1}{2}=0$ which is also absurd.

Thus by employing the power series representations of the local exact GHE solutions for the radial components of the Dirac equation in the KN background whose coefficients satisfy a four-term recurrence relation, we have proved the theorem:
\begin{theorem}\label{noboundKN}
In the non-extreme Kerr-Newman geometry there are no fermionic bound states with $\omega^2<\mu^2$, where $\omega$ and $\mu$ are the energy and mass of the fermion respectively.
\end{theorem}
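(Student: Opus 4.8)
The plan is to translate the notion of a fermionic bound state into an analytic truncation condition on the local radial solution, and then to show via the four-term recurrence that this condition is incompatible with $\omega\in\mathbb{R}$, $\omega^2<\mu^2$. First I would fix the meaning of a bound state: a solution satisfying the normalisation $\int_{r_+}^{\infty}|R(r)|^2\frac{r^2+a^2}{\Delta^{KN}}\,{\rm d}r<\infty$. The asymptotic analysis at infinity, eqns.~(\ref{asymptotic})--(\ref{ekthetika}), shows that the two Thom\'{e} solutions of the radial GHE~(\ref{GHERADIALKN}) behave like $e^{\pm i\sqrt{\omega^2-\mu^2}(r_+-r_-)z}$ times algebraic factors. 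For $\omega^2<\mu^2$ the quantity $\sqrt{\omega^2-\mu^2}=i\sqrt{\mu^2-\omega^2}$ is purely imaginary, so one branch decays and the other grows exponentially; normalisability forces the decaying branch. Because the exponential decay must dominate every algebraic factor, the holomorphic power-series part $\eta$ entering the solution~(\ref{RadialSpinordecay}) must terminate, i.e.\ the problem reduces to asking whether the radial GHE admits a polynomial solution for $|\omega|<\mu$.

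Next I would impose the truncation. Writing the local solution near $r_+$ as $\bar R=\sum_{\nu} T_\nu\,\zeta^\nu/(\nu!\,\Gamma(1-\mu_1+\nu))$ with the $T_\nu$ governed by the four-term recurrence~(\ref{anadromikoshorizonevent}), a degree-$N$ polynomial requires $T_{N+1}=T_{N+2}=0$ together with $\varphi_3(N)=0$, so that $T_{N+3}$ and every subsequent coefficient vanish identically. The whole content of the proof is then concentrated in the factorised condition~(\ref{SIGNIFICANTPROOF}): $\varphi_3(N)=0$ can hold only if $N+1-\mu_1=0$, or $N+2-\mu_1=0$, or $\alpha N-\tilde\beta_2=0$.

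The decisive step is to rule out each alternative. I would use that for $\omega^2<\mu^2$ the index parameter $\alpha=\mp 2\sqrt{\mu^2-\omega^2}(r_+-r_-)$ is \emph{real}, and that $C_1+C_2+C_3\in\mathbb{R}$, whereas the exponents $\boldsymbol{\mu_1},\boldsymbol{\mu_2}$ carry the radicals $\sqrt{(\,\cdot\,)^2-4[K(r_\pm)^2+i(r_\pm-M)K(r_\pm)]}$ and are therefore genuinely non-real unless $K(r_\pm)$ vanishes. Substituting into $\alpha N-\tilde\beta_2=0$ yields an identity whose left-hand side is real but whose right-hand side inherits the nonzero imaginary part of $\boldsymbol{\mu_1}+\boldsymbol{\mu_2}$; equating a real and a non-real number is absurd. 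The hard part will be confirming that the imaginary parts cannot conspire to cancel: this requires showing that $\boldsymbol{\mu_1}+\boldsymbol{\mu_2}$ has nonzero imaginary part precisely when at least one of $K(r_+),K(r_-)$ is nonzero, and here the non-extremality hypothesis $r_+\neq r_-$ is essential, since for the non-extreme KN black hole no real $\omega$ can annihilate both $K(r_+)=\omega(a^2+r_+^2)-am+eqr_+$ and $K(r_-)=\omega(a^2+r_-^2)-am+eqr_-$ simultaneously.

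Finally I would dispose of the two boundary cases separately. If, say, $K(r_+)=0$ (equivalently $\omega=(am-eqr_+)/(a^2+r_+^2)$), then the relevant radical degenerates and $1-\mu_1=2\boldsymbol{\mu_1}+\frac{M-r_+}{r_--r_+}=1\pm\bigl|\tfrac{M-r_-}{r_--r_+}\bigr|$, so $N+1-\mu_1=0$ collapses to $N+1\pm\tfrac12=0$, impossible for $N\in\mathbb{N}$; the analogous check settles $K(r_-)=0$ and the two remaining alternatives in~(\ref{SIGNIFICANTPROOF}). Since every route to $\varphi_3(N)=0$ is thereby excluded, no polynomial radial solution exists, hence no normalisable bound state with $\omega^2<\mu^2$ can occur, which is exactly the assertion of the theorem.
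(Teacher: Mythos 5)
Your proposal follows essentially the same route as the paper's own proof: reduce normalisability for $\omega^2<\mu^2$ to the existence of a polynomial (terminating) local GHE solution, impose the truncation conditions $T_{N+1}=T_{N+2}=0$ and $\varphi_3(N)=0$ on the four-term recurrence, and exclude each factor of $\varphi_3(N)=0$ by the real-versus-complex contradiction in $\alpha N-\tilde\beta_2=0$ (using that non-extremality forbids $K(r_+)$ and $K(r_-)$ from vanishing simultaneously), together with the degenerate checks when one of $K(r_\pm)$ vanishes. The argument and its key steps coincide with those in the paper, so no further comparison is needed.
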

Our exact mathematical method that was culminated in the theorem \ref{noboundKN}, corroborates in the most emphatic way the results in \cite{Finster},\cite{FBelgiornoMKN},obtained by different means, for the absence of physical fermionic bound states in the non-extreme KN geometry .

\section{Conclusions}

In this work we have investigated the massive Dirac equation in the KNdS and the KN black hole backgrounds.  First we derived and separated the Dirac equation in the Kerr-Newman-de Sitter (KNdS) black hole background using a generalised Kinnersley null tetrad in the Newman-Penrose formalism. By using appropriate transformations for the independent variable and appropriate index transformations for the dependent variable we proved the novel result that the resulting second order differential radial  equation for a spin $\frac{1}{2}$ massive charged fermion in KNdS background generalises in a highly non-trivial way the ordinary Heun equation.
With the aid of a Regge-Wheeler-like coordinate and a suitable change of dependent variable we transformed the massive radial equation for the KNdS black hole into a Schr\"{o}dinger-like differential equation. Subsequently we investigated the asymptotics of this novel equation and derived its near-event and near-cosmological horizon limits.

Taking the zero cosmological constant limit we investigated the massive Dirac equation in the Kerr-Newman spacetime. In the Kinnersley tetrad, by suitable transformations of the independent and dependent variables we transformed the separated angular and radial parts in the KN spacetime into generalised Heun equations (GHEs). Such GHEs are characterised by the fact that they possess three regular singularities and an irregular singularity at $\infty$. Heun's differential equation is a special case of such a GHE. We then derived local analytic solutions of these GHEs in which the series coefficients obey a four-term recursion relation. Moreover we investigated the radial solutions near the event horizon and far away from the black hole. The global behaviour of the solutions, along the lines of \cite{RegIrregSIAMJ}, was also studied with emphasis on the connection problem for a regular and an irregular singular point and the computation of the corresponding connection coefficients.
We also performed a detailed analysis for the determination of the separation constant $\lambda$ that appears in both radial and angular equations in the Kerr-Newman background. The procedure involves an eigenvalue matrix problem for the KN angular equations. The angular eigenvalues obey a pde which when integrated with Charpit's method leads to the Painlev\'{e} $\rm{P_{III}}$ ordinary nonlinear differential equation. There is a vast interest in the solutions of the $\rm{P_{III}}$ transcendent. For particular values of the parameters we derived closed form solutions for this angular $\rm{P_{III}}$ nonlinear ODE in terms of Bessel functions.
We then performed a novel asymptotic analysis of the specific angular $\rm{P_{III}}$ transcendent in terms of Jacobian elliptic functions. This is analogous to the scattering theory of ordinary quantum mechanics in which the Bessel functions that solve the radial equation have as an asymptotic limit trigonometric functions.
Our closed form solutions of both the radial and angular components of the Dirac equation in the KN curved background are of physical importance for the theory of general relativistic quantum scattering  by rotating charged black holes as well as for the computation of emission rates from the KN black hole.
It would also be very interesting to obtain further closed form solutions of the angular Painlev\'{e} $\rm{P_{III}}$ transcendent,  by working in the framework of the linearisation of $\rm{P_{III}}$ by the Lax pair \cite{Jimbo1}. A background for the procedure is outlined in appendix  \ref{RiemannHilbertRep}. Such an analysis will be a task for the future.
Using the four-term recurrence relation the coefficients in the power series representation of the holomorphic local solutions of the radial GHE in the KN spacetime obey, we proved that bound fermionic states do not exist  with $\omega^2<\mu^2$.

Future research will also be conducted to obtain analytic solutions of the separated radial and angular ODEs (\ref{ALGMEINEDIR1})-(\ref{ALGEMEINEDIR4}) in the KNdS black hole background.
The construction of solutions of eqn.(\ref{MasDiracGHrKNdS}) is a demanding problem and would lead to new functions beyond the  solutions of the GHE in the Kerr-Newman case and the Heun functions. It would also be very interesting to solve the angular eigenvalue problem in the Kerr-Newman-de Sitter background.
 Such solutions would open the way to computing among other things the effect of the cosmological constant on the quasi-normal modes (QNMs) \footnote{See \cite{FermEliza} for some recent work on quasinormal modes of massive fermions in Kerr spacetime.}as well as the analytic computation of emission rates from a KNdS black hole. They would also shed light on the issue of bound states.   We hope to engage in such  exciting research endeavours and report progress in a future publication.

\section*{Acknowledgements} This work was funded in part by a Teaching Position grant under the auspices of the program ESPA 2014-2020, Code 82028. We dedicate this work to the memory of David Bailin. We are grateful to the referees for the careful reading of the manuscript and their very constructive comments and suggestions.

\appendix{}
\section{The connection problem and the Floquet solutions at the simple singularities $0,1,a=z_3$}
By applying the transformation of the dependent variable
\begin{equation}
\zeta=1-z
\end{equation}
the differential equation (\ref{SCHASCHMI}) transforms into:
\begin{align}
y^{\prime\prime}(\zeta)&+\left(\frac{1-\mu_1}{\zeta}+\frac{1-\mu_0}{\zeta-1}+\frac{1-\mu_2}{\zeta-(1-a)}-\alpha\right)
y^{\prime}(\zeta)\nonumber \\
&+\frac{\beta_0+\beta_1(1-\zeta)+\beta_2(1-\zeta)^2}{(1-\zeta)(-\zeta)(1-\zeta-a)}y(\zeta)=0,
\end{align}
thus we have
\begin{align}
&(\mu_0,\mu_1,\mu_2)\rightarrow (\mu_1,\mu_0,\mu_2) \nonumber \\
&\tilde{a}=1-a \nonumber \\
&\tilde{\alpha}=-\alpha \nonumber \\
&\tilde{\lambda}=-\lambda.\nonumber
\end{align}
\begin{table}[tbp] \centering
\begin{tabular}
[c]{|l|l|l|l|l|l|}\hline
& $\zeta$ & $\tilde{\mu}$ & $\tilde{a}$&$\tilde{\alpha}$ &$\tilde{\lambda}$
\\\hline
$1)$& $z$ & $(\mu_0,\mu_1,\mu_2)$ & $a$ &$\alpha$ &$\lambda$ \\

$2)$ &$1-z$ & $(\mu_1,\mu_0,\mu_2)$ & $1-a$ &$-\alpha$ &$-\lambda$\\
$3)$ &$\frac{z}{a}$ & $(\mu_0,\mu_2,\mu_1)$ &$\frac{1}{a}$ & $a\alpha$ & $a\lambda$\\
$4)$&$\frac{1-z}{1-a}$ &$(\mu_2,\mu_0,\mu_1)$&$\frac{1}{1-a}$ &$\alpha(a-1)$ &$\lambda(a-1)$\\
$5)$&$1-\frac{z}{a}$&$(\mu_1,\mu_2,\mu_0)$& $1-\frac{1}{a}$& $-a\alpha$ &$-a\lambda$\\
$6)$& $\frac{a-z}{a-1}$ &$(\mu_2,\mu_1,\mu_0)$ &$\frac{a}{a-1}$&$(1-a)\alpha$ & $(1-a)\lambda$
\\\hline
\end{tabular}
\caption{The six possible substitutions resulting from the linear transformation (\ref{lineartransf}), which map the simple singularities $(0,1,a)$ to $(0,1,\tilde{a})$ and keep the irregular singularity $\infty$ fixed \cite{RSCHAFKEDSCHMIDT}.}
\end{table}%

In \cite{RSCHAFKEDSCHMIDT} using the Wronskian determinant  for two Floquet solutions and methods of complex analysis such as Watson's lemma for loop integrals and the ratio of two Gamma functions \cite{OLVER}, the following connection formula relating $y_{01}$ to $y_{11}$ and $y_{12}$ was proven:
\begin{proposition}
Let $\left|a\right|>1$. Then there exists a unique function $q=q(\cdot;a)$ holomorphic in  $(\mu,\alpha,\lambda)\in\mathbb{C}^7$, such that the connection formula:
\begin{align}
\frac{\sin(\pi \mu_1)}{\pi}\eta(z,\mu,\alpha,\lambda;a)=q&(\mu_0,-\mu_1,\mu_2,\alpha,\lambda;a)
\eta(1-z,\mu_1,\mu_0,\mu_2,-\alpha,-\lambda;1-a) \nonumber \\
&-q(\mu_0,\mu_1,\mu_2,\alpha,\lambda;a)(1-z)^{\mu_1}\nonumber \\
&\cdot \eta(1-z,-\mu_1,\mu_0,\mu_2,-\alpha,-\lambda;1-a)
\end{align}
is valid for $\left|z\right|<1,\left|z-1\right|<\min(1,\left|a-1\right|), \arg(1-z)\in\left]-\pi,\pi\right[$ and $(\mu,\alpha,\lambda)\in\mathbb{C}^7$.
\end{proposition}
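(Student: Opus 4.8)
The plan is to prove the connection formula by combining the existence and holomorphy of the local solution $\eta$ guaranteed by Theorem~\ref{eta} with a Wronskian computation that pins down the two connection coefficients and exhibits them as values of a single holomorphic function $q$. First I would observe that, since $|a|>1$, the open real segment between $0$ and $1$ contains no singular point of (\ref{SCHASCHMI}); hence the solution $\eta(z,\mu,\alpha,\lambda;a)$, which is holomorphic near $0$, continues analytically along this segment into a punctured neighbourhood of $z=1$. There the two Floquet solutions $y_{11}(z)=\eta(1-z,\mu_1,\mu_0,\mu_2,-\alpha,-\lambda;1-a)$ and $y_{12}(z)=(1-z)^{\mu_1}\eta(1-z,-\mu_1,\mu_0,\mu_2,-\alpha,-\lambda;1-a)$ form a fundamental system whenever $\mu_1\notin\mathbb{Z}$, because their leading exponents $0$ and $\mu_1$ at $z=1$ differ by a non-integer. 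I may therefore write $\eta=c_1y_{11}+c_2y_{12}$ with uniquely determined coefficients $c_1,c_2$.

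Next I would express these coefficients as Wronskian ratios, $c_1=W(\eta,y_{12})/W(y_{11},y_{12})$ and $c_2=W(y_{11},\eta)/W(y_{11},y_{12})$. The denominator is obtained in closed form from the leading behaviour at $z=1$: using the normalisations $y_{11}\sim 1/\Gamma(1-\mu_1)$ and $y_{12}\sim(1-z)^{\mu_1}/\Gamma(1+\mu_1)$ together with $\Gamma(\mu_1)\Gamma(1-\mu_1)=\pi/\sin(\pi\mu_1)$, the $z$-independent constant in $W(y_{11},y_{12})$ is proportional to $\sin(\pi\mu_1)/\pi$, the residual $z$-dependence being the elementary factor $\exp(-\int(\text{coefficient of }y'))$ prescribed by the first-order term of (\ref{SCHASCHMI}). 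This is exactly the origin of the prefactor $\sin(\pi\mu_1)/\pi$ multiplying $\eta$ in the statement, and it is what cancels the apparent poles of $c_1,c_2$ at integer $\mu_1$, leaving a combination holomorphic throughout $\mathbb{C}^7$.

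The main obstacle is the evaluation of the numerator Wronskians $W(\eta,y_{12})$ and $W(y_{11},\eta)$, which are genuinely global quantities: each equals a constant $K_j$ times the same elementary Wronskian factor, and the constants encode the connection data. Following \cite{RSCHAFKEDSCHMIDT} I would compute them via a loop-integral representation of $\eta$ combined with Watson's lemma for loop integrals and the asymptotics of a ratio of two Gamma functions \cite{OLVER}; deforming the loop across $z=1$ produces the factor $e^{2\pi i\mu_1}-1$, i.e. again $\sin(\pi\mu_1)$, and isolates the two constants. I would then identify $c_1$ with $q(\mu_0,-\mu_1,\mu_2,\alpha,\lambda;a)$ and $c_2$ with $-q(\mu_0,\mu_1,\mu_2,\alpha,\lambda;a)$, the symmetry $\mu_1\mapsto-\mu_1$ arising because $y_{12}$ is obtained from $y_{11}$ by the index transformation $(1-z)^{\mu_1}$ at $z=1$, which flips the sign of the corresponding exponent.

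Finally, holomorphy and uniqueness of $q$ follow from those of $\eta$ and of the Wronskians in $(\mu,\alpha,\lambda)$, once the $\sin(\pi\mu_1)/\pi$ normalisation has removed the integer-$\mu_1$ singularities. The stated region $|z|<1$, $|z-1|<\min(1,|a-1|)$, $\arg(1-z)\in\,]-\pi,\pi[$ is precisely the common domain on which $\eta$ and both Floquet expansions converge and the chosen branch of $(1-z)^{\mu_1}$ is single-valued; the resulting identity then extends to all $(\mu,\alpha,\lambda)\in\mathbb{C}^7$ by analytic continuation in the parameters.
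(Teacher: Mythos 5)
Your proposal is correct and follows essentially the same route the paper takes: the paper does not reprove this statement but attributes it to \cite{RSCHAFKEDSCHMIDT}, citing precisely the ingredients you use --- the Wronskian determinant of the two Floquet solutions at $z=1$ (whose leading behaviour, via $\Gamma(\mu_1)\Gamma(1-\mu_1)=\pi/\sin(\pi\mu_1)$, produces the prefactor $\sin(\pi\mu_1)/\pi$), together with Watson's lemma for loop integrals and the asymptotics of a ratio of two Gamma functions to evaluate the connection constants. Your identification of the coefficients with $q(\mu_0,-\mu_1,\ldots)$ and $-q(\mu_0,\mu_1,\ldots)$ and the removal of the integer-$\mu_1$ singularities by the sine normalisation are consistent with the stated formula.
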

Thus the connection problem between the sets of the Floquet solutions will actually be solved if the function $q$ can be evaluated. Indeed, the following theorem was proved in \cite{RSCHAFKEDSCHMIDT}
\begin{theorem}
Let $\left|a\right|>1$. Then for $(\mu,\alpha,\lambda)\in \mathbb{C}^7$,
\begin{align*}
q(\mu,\alpha,\lambda;a)=&\lim_{k\rightarrow \infty}\frac{\tau_{k}(\mu,\alpha,\lambda;a)}{\Gamma(k+1-\mu_0)\Gamma(k-\mu_1)}\\
&\left(1+\sum_{l=1}^m\frac{\tau_l(-\mu_1,\mu_0,\mu_2,-\alpha,-\lambda;1-a)}{l!}\prod_{\sigma=1}^l(\sigma+\mu_1-k)^{-1}\right)^{-1},
\end{align*}
the convergence being $\mathcal{O}(k^{-m-1})$ as $k\rightarrow \infty$, where $m$ is an arbitrary nonnegative integer.
\end{theorem}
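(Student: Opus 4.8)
The plan is to extract $q$ from the large-$k$ asymptotics of the Taylor coefficients of $\eta(z,\mu,\alpha,\lambda;a)$ about $z=0$, using the connection formula of the preceding proposition to identify the dominant singularity. Here $\tau_k$ denotes the coefficient family of the expansion (\ref{powerseriesGHE}), so that writing the normalised coefficients as $c_k:=\tau_k/[\Gamma(k+1-\mu_0)\,\Gamma(k+1)]$ one has $c_k=(2\pi i)^{-1}\oint \eta(z)\,z^{-k-1}\,\mathrm{d}z$. Because $|a|>1$, the singularity of $\eta$ nearest the origin is the regular singular point $z=1$, and Darboux's method (equivalently a transfer theorem, justified by Watson's lemma for loop integrals as in \cite{OLVER}) reduces the asymptotics of $c_k$ to the local singular structure of $\eta$ at $z=1$.

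First I would insert the connection formula into a contour wrapping $z=1$. That formula decomposes $\eta$ into the holomorphic Floquet solution $\eta(1-z,\mu_1,\mu_0,\mu_2,-\alpha,-\lambda;1-a)$ — which is analytic at $z=1$ and therefore contributes only exponentially small $\mathcal{O}(|a|^{-k})$ corrections to $c_k$, governed by the next singularity at $z=a$ — and the branch solution $(1-z)^{\mu_1}\eta(1-z,-\mu_1,\mu_0,\mu_2,-\alpha,-\lambda;1-a)$, which carries the algebraic decay. Expanding the branch solution by (\ref{powerseriesGHE}) gives
\[
(1-z)^{\mu_1}\eta(1-z,-\mu_1,\mu_0,\mu_2,-\alpha,-\lambda;1-a)=\sum_{l\ge 0}\frac{\tau_l(-\mu_1,\mu_0,\mu_2,-\alpha,-\lambda;1-a)}{\Gamma(l+1+\mu_1)\,l!}\,(1-z)^{l+\mu_1},
\]
and each monomial contributes the exact coefficient $[z^k](1-z)^{l+\mu_1}=\Gamma(k-l-\mu_1)/[\Gamma(-l-\mu_1)\Gamma(k+1)]$.

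Assembling these pieces, carrying the prefactor $\pi/\sin(\pi\mu_1)$ of the connection formula, and using the reflection identity $\Gamma(-l-\mu_1)\Gamma(1+l+\mu_1)=(-1)^{l+1}\pi/\sin(\pi\mu_1)$ together with the ratio identity $\Gamma(k-l-\mu_1)/\Gamma(k-\mu_1)=(-1)^l\prod_{\sigma=1}^l(\sigma+\mu_1-k)^{-1}$, I expect all the sign, $\pi$ and $\sin$ factors to cancel, leaving the exact relation
\begin{align}
\frac{\tau_k(\mu,\alpha,\lambda;a)}{\Gamma(k+1-\mu_0)\,\Gamma(k-\mu_1)}&=q(\mu,\alpha,\lambda;a)\left[1+\sum_{l=1}^{\infty}\frac{\tau_l(-\mu_1,\mu_0,\mu_2,-\alpha,-\lambda;1-a)}{l!}\prod_{\sigma=1}^l(\sigma+\mu_1-k)^{-1}\right]\nonumber\\
&\quad+\mathcal{O}(|a|^{-k}).
\end{align}
Since $\prod_{\sigma=1}^l(\sigma+\mu_1-k)^{-1}=\mathcal{O}(k^{-l})$, truncating the bracketed series at $l=m$ leaves a tail of order $\mathcal{O}(k^{-m-1})$; inverting the bracket and letting $k\to\infty$ then yields the stated formula for $q$ with the claimed $\mathcal{O}(k^{-m-1})$ convergence, the empty case $m=0$ reproducing $q=\lim_{k\to\infty}\tau_k/[\Gamma(k+1-\mu_0)\Gamma(k-\mu_1)]$.

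The hard part will be the rigorous justification of the transfer step: that the coefficient asymptotics are genuinely controlled by the $z=1$ branch singularity to all algebraic orders, with a uniform $\mathcal{O}(k^{-m-1})$ remainder that is holomorphic in $(\mu,\alpha,\lambda)$. This is precisely where the Wronskian representation of the connection coefficient and the ratio-of-Gamma-functions estimates of \cite{OLVER} are needed, to legitimise the interchange of the $l$-summation with coefficient extraction and to bound the analytic Floquet contribution against $|a|^{-k}$. The hypothesis $\mu_1\notin\mathbb{N}$ (nonresonance at $z=1$) is essential throughout, as it guarantees that the $(1-z)^{\mu_1}$ branch is actually present and that neither $\sin(\pi\mu_1)$ nor $\Gamma(k-\mu_1)$ degenerates.
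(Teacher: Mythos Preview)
The paper does not give its own proof of this theorem; it quotes the result from \cite{RSCHAFKEDSCHMIDT} and only sketches the tools used there, namely the Wronskian representation of the connection coefficients, Watson's lemma for loop integrals, and ratio-of-Gamma asymptotics. Your proposal follows exactly this route: you feed the connection formula into a Cauchy--Darboux coefficient extraction around the nearest singularity $z=1$, isolate the $(1-z)^{\mu_1}$ branch, and simplify with the reflection and ratio identities for $\Gamma$, arriving at the asserted limit with the correct $\mathcal{O}(k^{-m-1})$ remainder. So your approach is correct and coincides with the one the paper attributes to \cite{RSCHAFKEDSCHMIDT}; the only point to tidy is your closing caveat about $\mu_1\notin\mathbb{N}$, since the theorem is stated for all $(\mu,\alpha,\lambda)\in\mathbb{C}^7$---this restriction is lifted a posteriori by the holomorphy of $q$ in $\mu$ (Proposition preceding the theorem), so the generic-$\mu_1$ argument plus analytic continuation in $\mu_1$ suffices.
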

Thus for instance, we obtain the following connection formula:
\begin{align}
\frac{\sin(\pi\mu_0)}{\pi}\xi_0&\eta(1-z,\mu_1,\mu_0,\mu_2,-\alpha,-\lambda;1-a)\nonumber \\
&=-q(-\mu_0,\mu_1,\mu_2,\alpha,\lambda;a)\eta(z,\mu_0,\mu_1,\mu_2,\alpha,\lambda;a) \nonumber \\
&+q(\mu_0,\mu_1,\mu_2,\alpha,\lambda;a)z^{\mu_0}\eta(z,-\mu_0,\mu_1,\mu_2,\alpha,\lambda;a),
\end{align}
relating $y_{11}$ to $y_{01}$ and $y_{02}$.
\section{The Painlev\'{e} $\rm{P_{III}}$ differential equation and its Lax pair isomonodromy representation}\label{RiemannHilbertRep}
The canonical form of the Painlev\'{e} third differential equation is given by:
\begin{equation}
\frac{{\rm d}^2u}{{\rm d}x^2}=\frac{1}{u}\left(\frac{{\rm d}u}{{\rm d}x}\right)^2-\frac{1}{x}\frac{{\rm d}u}{{\rm d}x}
+\frac{\alpha u^2+\beta}{x}+\gamma u^3+\frac{\delta}{u},
\label{PainleveDREI}
\end{equation}
with $\alpha,\beta,\gamma$ and $\delta$ constants.
An important property of $\rm{P_{III}}$ is that it has a number of simple scaling transformations which are denoted by
by $C_1^k,C_2^n$ and $C_3$. They imply that if $u(x;\alpha,\beta,\gamma,\delta)$ is a solution of $\rm{P_{III}}$ with parameters $\alpha,\beta,\gamma$ and $\delta$ then the following are also solutions with the parameter values stated:
\begin{equation}
C_1^k:\;\;\;\tilde{u}(x;\tilde{\alpha},\tilde{\beta},\tilde{\gamma},\tilde{\delta})=ku(x;\alpha,\beta,\gamma,\delta),
\end{equation}
where $\tilde{\alpha}=\alpha/k,\;\tilde{\beta}=k\beta,\;\tilde{\gamma}=\gamma k^{-2}$ and $\tilde{\delta}=\delta k^2$;
\begin{equation}
C_2^n:\;\;\;u(\zeta;\tilde{\alpha},\tilde{\beta},\tilde{\gamma},\tilde{\delta}),
\end{equation}
where $\zeta=x/n, \;\tilde{\alpha}=n\alpha,\;\tilde{\beta}=n\beta\;\tilde{\gamma}=\gamma n^2$ and $\tilde{\delta}=\delta n^2$.

A consequence of the scaling transformations $C_1^k$ and $C_2^n$ is that any version of  $\rm{P_{III}}$ with $\gamma\delta\not =0$ is equivalent within a scaling to:
\begin{equation}
u_{xx}=\frac{(u_x)^2}{u}-\frac{u_x}{x}+\frac{\alpha_s u^2+\beta_s}{x}+u^3-\frac{1}{u},
\end{equation}
an equation that can be regarded as a canonical form of $\rm{P_{III}}$ for $\gamma\delta\not =0$.

A very important property of non-linear differential equations such as  $\rm{P_{III}}$ is that they can be considered in the framework of the inverse problem and can be represented as compatibility conditions for a system of auxiliary linear problems.
Indeed Jimbo \textit{et al} considered the coupled system consisting of the Lax pair \cite{Jimbo1}:
\begin{align}
\frac{\partial Y(x;t)}{\partial x}&=A(x;t)Y(x;t), \\
\frac{\partial Y(x;t)}{\partial t}&=B(x;t)Y(x;t),
\end{align}
where
\begin{equation}
Y_x(x;t)=\Biggl\{\frac{1}{2}\left[\begin{array}{cc}
t & 0\\
0 &-t\end{array}\right]+\frac{1}{x}\left[\begin{array}{cc}
-\theta_{\infty}/2 & u\\
v &\theta_{\infty}/2\end{array}\right]+\frac{1}{x^2}\left[\begin{array}{cc}
z-t/2 & -wz\\
\frac{z-t}{w} &-z+\frac{t}{2}\end{array}\right]\Biggr\}Y(x;t),
\label{Lax1}
\end{equation}
\begin{equation}
Y_t(x;t)=\Biggl\{\frac{1}{2} \left[\begin{array}{cc}
1 & 0\\
0 &-1\end{array}\right]x+\frac{1}{t}\left[\begin{array}{cc}
0 & u\\
v &0\end{array}\right]-\frac{1}{t}\left[\begin{array}{cc}
z-t/2 & -wz\\
\frac{z-t}{w} &-z+\frac{t}{2}\end{array}\right]\frac{1}{x}\Biggr\}Y(x;t)
\label{Lax2}
\end{equation}
Here $x$ and $t$ are taken to be independent complex variables; $u,v,w,z$ are functions of $t$ alone and $\theta_{\infty}$ is a constant. The compatibility condition $Y_{xt}(x;t)=Y_{tx}(x;t)$  is satisfied if and only if:
\begin{align}
\frac{{\rm d}z}{{\rm d}t}&=-2abv-\frac{2yz^2}{t}+\frac{2ytz}{t}+\frac{z}{t}\nonumber \\
&=2wzv+\frac{2u}{w}(z-t)+z,\\
\frac{{\rm d}u}{{\rm d}t}&=\frac{\theta_{\infty}}{t}u+2tab=\frac{\theta_{\infty}}{t}u-2wz\label{ulax}\\
t\frac{{\rm d}\log w}{{\rm d}t}&=\theta_{\infty}-\frac{(\theta_0+\theta_{\infty})t}{z}-2yt\nonumber \\
&=-\theta_{\infty}-2wv+\frac{2u}{w},\\
\frac{{\rm d}v}{{\rm d}t}&=-\frac{\theta_{\infty}}{t}v-\frac{2}{w}(z-t).
\end{align}
If $y=-\frac{u}{zw}$  it then follows that:
\begin{equation}
\frac{{\rm d}y}{{\rm d}t}=\frac{4y^2 z}{t}-\frac{2y^2t}{t}+\frac{(2\theta_{\infty}-1)y}{t}+2
\label{TranscendentDrei}
\end{equation}
and as a result of the compatibility  conditions we conclude that $y(t)$ satisfies the non-linear  Painlev\'{e} $\rm{P_{III}}$ differential equation:
\begin{equation}
\frac{{\rm d}^2y}{{\rm d}t^2}=\frac{1}{y}\left(\frac{{\rm d}y}{{\rm d}t}\right)^2-\frac{1}{t}\frac{{\rm d}y}{{\rm d}t}+
\frac{1}{t}(\alpha y^2+\beta)+\gamma y^3+\frac{\delta}{y},
\end{equation}
with the parameters $\alpha=4\theta_0,\;\beta=4(1-\theta_{\infty}),\;\gamma=4,\;\delta=-4$.
From the equality:
\begin{equation}
-2abv=2\frac{wz}{t}v=\theta_0+\theta_{\infty}-\frac{2yz^2}{t}+\frac{2tyz}{t}-\frac{2\theta_{\infty}z}{t},
\end{equation}
the constant $\theta_0$ is determined as follows:
\begin{equation}
\theta_0=\frac{2wzv}{t}+\theta_{\infty}\left(\frac{2z}{t}-1\right)+\frac{2uz}{zwt}(t-z).
\label{integralofmotion}
\end{equation}
\subsection{The Lax pair for $\theta_0=\theta_{\infty}-1$ and $y(t)=1$}
In the special case that $\theta_0=\theta_{\infty}-1$, the Painlev\'{e} $\rm{P_{III}}$ differential equation that arises as a compatibility condition for the Lax pair (\ref{Lax1})-(\ref{Lax2}) admits the rational (constant) solution $y(t)=1$. If we set $y(t)=1$ in $y=-\frac{u}{zw}$, eqn.(\ref{ulax}) becomes:
\begin{equation}
\frac{{\rm d}u}{{\rm d}t}=\frac{\theta_{\infty}}{t}u+2u,
\end{equation}
which has the solution:
\begin{equation}
u(t)=-\frac{\mathcal{K}}{4}t^{\theta_{\infty}}e^{2t},
\end{equation}
and as a result:
\begin{equation}
wz=\frac{\mathcal{K}}{4}t^{\theta_{\infty}}e^{2t}.
\end{equation}
Equation (\ref{TranscendentDrei}) yields:
\begin{equation}
z=\frac{1-2\theta_{\infty}}{4},
\end{equation}
and as a result:
\begin{equation}
w=\frac{\mathcal{K}t^{\theta_{\infty}}e^{2t}}{1-2\theta_{\infty}}
\end{equation}
If $\theta_0=\theta_{\infty}-1$, then solving (\ref{integralofmotion}) for $v$ yields:
\begin{equation}
v=\frac{1}{4}\mathcal{K}^{-1}t^{-\theta_{\infty}}e^{-2t}\{(2\theta_{\infty}-1)(4t+2\theta_{\infty}+1)\}.
\end{equation}
There are various methods to calculate the necessary monodromy data so that the solution $y(x)=1$ is obtained through the linearisation (\ref{Lax1})-(\ref{Lax2}). One of them is via the construction of an appropriate Riemann-Hilbert problem \cite{solitons}. One can proceed by solving the isomonodromic deformation (\ref{Lax2}) first and then building an additional dependence on $x$ via integration constants to satisfy (\ref{Lax1}) as well.
Applying to (\ref{Lax2}) the transformation:
\begin{equation}
Y=e^{t\sigma_3}t^{\theta_{\infty}\sigma_3/2}t^{-1/2}W,
\end{equation}
where $\sigma_3$ denotes the third Pauli matrix, we find:
\begin{align}
&\frac{1}{\sqrt{t}}\frac{\partial W_1}{\partial t}+\frac{W_1}{\sqrt{t}}+\frac{(\theta_{\infty}-1)}{2t^{3/2}}W_1\nonumber \\
&=\left[\frac{x}{2}-\frac{1}{tx}\frac{(1-2\theta_{\infty})}{4}+\frac{1}{2x}\right]\frac{W_1}{\sqrt{t}}-\frac{\mathcal{K}}{4t^{3/2}}
\left(1-\frac{1}{x}\right)W_2,
\label{edlax1}
\end{align}
and
\begin{align}
&\frac{1}{\sqrt{t}}\frac{\partial W_2}{\partial t}-\frac{W_2}{\sqrt{t}}-\frac{(\theta_{\infty}+1)}{2t^{3/2}}W_2\nonumber \\
&=\frac{\mathcal{K}^{-1}}{4t}(1-2\theta_{\infty})\left[-1-\frac{1}{x}-(4t+2\theta_{\infty})\left(1-\frac{1}{x}\right)\right]
\frac{W_1}{\sqrt{t}}\nonumber \\
&+\left(-\frac{x}{2}+\frac{1}{tx}\frac{(1-2\theta_{\infty})}{4}-\frac{1}{2x}\right)\frac{W_2}{\sqrt{t}}.
\label{edlax2}
\end{align}
Solving (\ref{edlax1}) for $W_2$ and substituting into (\ref{edlax2}) we conclude
that the first-row matrix elements $W_{1j}$ of $W$ satisfy a Whittaker differential equation:
\begin{equation}
\frac{{\rm d}^2 W_1}{{\rm d}\zeta^2}+\left[-\frac{1}{4}+\frac{\kappa}{\zeta}+\frac{1-4m^2}{4\zeta^2}\right]W_1=0,
\label{InverseWhittaker}
\end{equation}
where the parameters of the Whittaker equation are:$\kappa:=\frac{1}{2}(\theta_{\infty}-1),\;m=\frac{1}{4}$ and the independent variable is defined by $\zeta:=-t(2-x-x^{-1})$.
The components $W_{2j}$ are expressed in terms of the matrix elements $W_{1j}$ as follows:
\begin{equation}
W_{2j}=\frac{-4\zeta\left[\frac{\partial W_{1j}}{\partial \zeta}-\frac{1}{2}W_{1j}\right]-4\kappa W_{1j}-x^{-1}(1-2\theta_{\infty})W_{1j}}{\mathcal{K}\left(1-\frac{1}{x}\right)}.
\end{equation}
The Whittaker functions which fall off exponentially as $\zeta\rightarrow \infty$ have the form:
\begin{align}
W_{\kappa,m}(\zeta)&=e^{-\zeta/2}\zeta^{m+\frac{1}{2}}U(m-\kappa+\frac{1}{2},1+2m,\zeta)\nonumber \\ &=\frac{\Gamma(-2m)}{\Gamma(-m-\kappa+\frac{1}{2})}e^{-\zeta/2}\zeta^{m+\frac{1}{2}}M(m-\kappa+\frac{1}{2},1+2m,\zeta)\nonumber \\
&+\frac{\Gamma(2m)}{\Gamma(m-\kappa+\frac{1}{2})}\zeta^{-2m}e^{-\zeta/2}\zeta^{m+\frac{1}{2}}M(-m-\kappa+\frac{1}{2},1-2m,\zeta)\nonumber \\
&=\frac{\Gamma(-2m)}{\Gamma(-m-\kappa+\frac{1}{2})}M_{\kappa,m}(\zeta)+\frac{\Gamma(2m)}{\Gamma(m-\kappa+\frac{1}{2})}M_{\kappa,-m}(\zeta),
\end{align}
where $U(a,c,\xi)$ is the Kummer hypergeometric function of the second kind with asymptotic behaviour:
\begin{align}
U(a,c,\xi)&\sim \xi^{-a}\left[1-\frac{ab}{\xi}+\frac{a(a+1)b(b+1)}{2!\xi^2}-\cdots\right]
\nonumber \\ &=
\frac{1}{\xi^a}\sum_{\nu=0}^{\infty}\frac{(a)_{\nu}(b)_{\nu}}{(1)_{\nu}}
\left(\frac{-1}{\xi}\right)^{\nu}\;\;\xi\rightarrow \infty,
\end{align}
και $b:=1+a-c$.

If we fix a fundamental pair of solutions of (\ref{InverseWhittaker}) that depend on $x$ through the variable $\zeta$ as the first row of the matrix $W$ ($W_{11}:=W_{-\kappa,m}(-\zeta),W_{12}:=W_{\kappa,m}(\zeta)$), then the general solution of the isomonodromic deformation of (\ref{Lax2}) can be written in the form:
\begin{equation}
Y=e^{t\sigma_3}t^{\theta_{\infty}\sigma_3/2}t^{-1/2}W\mathcal{C}(x),
\label{genikilisiLax2}
\end{equation}
where $\mathcal{C}(x)$ cannot depend on $t$ but might depend on $x$. Substituting (\ref{genikilisiLax2}) into (\ref{Lax1}) yields:
\begin{equation}
\mathcal{C}(x)=(x-1)^{-1/2}\mathcal{C},
\label{statherax}
\end{equation}
where $\mathcal{C}$ is a matrix independent of both $t$ and $x$. Thus (\ref{genikilisiLax2}) with $\mathcal{C}$ determined as in (\ref{statherax}) constitutes a simultaneous solution of both equations (\ref{Lax1}),(\ref{Lax2}) in the Lax pair.

\subsubsection{The Picard-Fuchs differential equation that the elliptic integrals satisfy in the asymptotic limit of the generic $\rm{P_{III}}$ with respect to  $\mathcal{E}_{\rm{III}}$ }\label{AbelPicard}

In order to study the dependence of the Abelian elliptic integrals on the energy-like quantity $\rm{E}_{\rm{III}}$, to prove boundedness properties of $\rm{E}_{\rm{III}}$ and to determine further regions in the complex plane in which the elliptic asymptotic limit of $\rm{P_{III}}$ is valid it is convenient to make the change of variables $v=e^u$\cite{Joshi}. Then $\rm{P_{III}}$  is transformed into the differential equation:
\begin{equation}
u_{tt}=e^{2u}-e^{-2u}-\frac{u_t}{t}+\frac{\alpha e^u+\beta \; e^{-u}}{t}
\end{equation}

Taking the limit $|t|\rightarrow\infty,$ gives :
$$u_{tt}\sim e^{2u}-e^{-2u}.$$
Furthermore one can apply the results in \cite{Joshi} (Lemma 2.2), in which the solution of a general $\rm{P_{III}}$ in the transformed variable  was shown to have the asymptotic expansion $u=V+\hat{u}$, where $\hat{u}\ll V$ as $|t|\rightarrow \infty$. The full analysis will be performed in a separate publication \cite{Kraniotis2}. However, we will calculate in this appendix the
$\rm{E}_{\rm{III}}-$   dependence of the Abelian elliptic integrals defined in the elliptic limit of the Painlev\'{e} $\rm{P_{III}}$.

The elliptic functional $\tilde{\omega}:=\oint e^{-V}\sqrt{e^{4V}+2{\mathcal{E}}_{\rm{III}}e^{2V}+ 1}\;{\rm d}V$ satisfies
the differential equation:
\begin{equation}
\tilde{\omega}^{\prime\prime}=-\frac{\tilde{\omega}}{4({\mathcal{E}}_{\rm{III}}^2-1)},
\label{PicardFuchs}
\end{equation}
where $^{\prime}:=\frac{\rm d}{{\rm d}{\mathcal{E}}_{\rm{III}}}$ \cite{Joshi}.
Indeed after integration by parts:
\begin{align}
\tilde{\omega}&=\oint e^{-V}\sqrt{e^{4V}+2{\mathcal{E}}_{\rm{III}}e^{2V}+ 1}\;{\rm d}V\nonumber \\
&=-\oint{\rm d}(e^{-V})\;\sqrt{e^{4V}+2{\mathcal{E}}_{\rm{III}}e^{2V}+ 1}\nonumber \\
&=-e^{-V}\sqrt{e^{4V}+2{\mathcal{E}}_{\rm{III}}e^{2V}+ 1}|_{\gamma}+
\oint e^{-V}\frac{4e^{4V}+4{\mathcal{E}}_{\rm{III}}e^{2V}}{2\sqrt{e^{4V}+2{\mathcal{E}}_{\rm{III}}e^{2V}+ 1}}\;{\rm d}V\nonumber \\
&=\oint e^{-V}\frac{2e^{4V}+2{\mathcal{E}}_{\rm{III}}e^{2V}}{\sqrt{e^{4V}+2{\mathcal{E}}_{\rm{III}}e^{2V}+ 1}}\;{\rm d}V\nonumber \\
&=\oint e^{-V}\frac{2e^{4V}+4{\mathcal{E}}_{\rm{III}}e^{2V}- 2{\mathcal{E}}_{\rm{III}}e^{2V}+2-2}{\sqrt{e^{4V}+2{\mathcal{E}}_{\rm{III}}e^{2V}+ 1}}\;{\rm d}V\nonumber \\
&=2\oint e^{-V}\sqrt{e^{4V}+2{\mathcal{E}}_{\rm{III}}e^{2V}+ 1}\;{\rm d}V-2{\mathcal{E}}_{\rm{III}}\oint \frac{e^V}
{\sqrt{e^{4V}+2{\mathcal{E}}_{\rm{III}}e^{2V}+ 1}}\;{\rm d}V\nonumber \\
&-2\oint \frac{e^{-V}}
{\sqrt{e^{4V}+2{\mathcal{E}}_{\rm{III}}e^{2V}+ 1}}\;{\rm d}V\nonumber\Rightarrow \\
&-\tilde{\omega}=-2{\mathcal{E}}_{\rm{III}}\omega-2\Psi,\label{ELLIntrunE}
\end{align}
where $\Psi:=\oint e^{-V}\frac{1}{\sqrt{e^{4V}+2{\mathcal{E}}_{\rm{III}}e^{2V}+ 1}}\;{\rm d}V$,
$\omega_i=\oint\frac{1}{\sqrt{e^{2V}+2{\mathcal{E}}_{\rm{III}}+e^{-2V}}}{\rm d}V$.
Likewise performing integration by parts on the integral that defines $\Psi$ yields:
\begin{align}
&\oint\frac{ e^{-V}}{\sqrt{e^{4V}+2{\mathcal{E}}_{\rm{III}}e^{2V}+ 1}}\;{\rm d}V=-\oint \frac{{\rm d}e^{-V}}{\sqrt{e^{4V}+2{\mathcal{E}}_{\rm{III}}e^{2V}+ 1}}\nonumber \\
&=-\oint e^{-V}\frac{2e^{4V}+2{\mathcal{E}}_{\rm{III}}e^{2V}}{(e^{4V}+2{\mathcal{E}}_{\rm{III}}e^{2V}+1)^{3/2}}\;{\rm d}V\nonumber \\
&=-2\mathcal{E}_{\rm{III}}\oint \frac{e^V}{(e^{4V}+2{\mathcal{E}}_{\rm{III}}e^{2V}+1)^{3/2}}\;{\rm d}V
-2\oint \frac{e^{3V}}{(e^{4V}+2{\mathcal{E}}_{\rm{III}}e^{2V}+1)^{3/2}}\;{\rm d}V\nonumber \\
&\Leftrightarrow \Psi=2{\mathcal{E}}_{\rm{III}}\Psi^{\prime}+2\tilde{\omega}^{\prime\prime}=2{\mathcal{E}}_{\rm{III}}\Psi^{\prime}+2\omega^{\prime}.
\label{elliptikoOLO}
\end{align}
Taking $\mathcal{E}_{\rm{III}}-$ derivatives of (\ref{ELLIntrunE}) yields:
\begin{equation}
\Psi^{\prime}=\frac{2\tilde{\omega}^{\prime}+2{\mathcal{E}}_{\rm{III}}\tilde{\omega}^{\prime\prime}-\tilde{\omega}^{\prime}}{-2}.
\label{ederiv}
\end{equation}
Using (\ref{ELLIntrunE}),(\ref{elliptikoOLO}),(\ref{ederiv}), we prove  eqn.(\ref{PicardFuchs}).
\subsection{The differential equation for the  eigenvalues $\lambda$ in KN spacetime}\label{pdePIIICharpit}

Following the analysis in \cite{KatoLPT} (Theorem 3.6, VII, \textsection 3),\cite{HSchmidt},\cite{MonikaDB} the partial derivatives of the angular eigenvalues are determined as follows:
\begin{align}
\frac{\partial \lambda}{\partial \nu}&=\left\langle\frac{\partial A}{\partial \nu} \mathcal{S},\mathcal{S}\right\rangle=\int_0^{\pi}\mathcal{S}^*(\theta)\left[\begin{array}{cc} -\cos\theta&0\\
0&\cos\theta\end{array}\right]\left[\begin{array}{c}\mathcal{S}^+(\theta)\\ \mathcal{S}^-(\theta)\end{array}\right]{\rm d}\theta\nonumber \\
&=\int_0^{\pi}(-\cos\theta\mathcal{S}^{+2}(\theta)+\cos\theta\mathcal{S}^{-2}(\theta)){\rm d}\theta=:\int_0^{\pi}\cos\theta V(\theta){\rm d}\theta,
\end{align}

\begin{align}
\frac{\partial \lambda}{\partial \xi}&=\left\langle\frac{\partial A}{\partial \xi} \mathcal{S},\mathcal{S}\right\rangle=\int_0^{\pi}\mathcal{S}^*(\theta)\left[\begin{array}{cc} 0&\sin\theta \\
\sin\theta&0\end{array}\right]\left[\begin{array}{c}\mathcal{S}^+(\theta)\\ \mathcal{S}^-(\theta)\end{array}\right]{\rm d}\theta\nonumber \\
&=\int_0^{\pi}(\sin\theta\mathcal{S}^{+}(\theta)\mathcal{S}^{-}(\theta)+\sin\theta\mathcal{S}^{-}(\theta)\mathcal{S}^{+}(\theta)){\rm d}\theta=:\int_0^{\pi}\sin\theta W(\theta){\rm d}\theta,
\end{align}
where the function $\left\langle\cdot,\cdot\right\rangle$ denotes the inner product in Hilbert space and we require that:
\begin{equation}
\left\langle\mathcal{S},\mathcal{S}\right\rangle=\int_0^{\pi}(\mathcal{S}^{+2}(\theta)+\mathcal{S}^{-2}(\theta)){\rm d}\theta=:\int_0^{\pi}U(\theta){\rm d}\theta=1.
\end{equation}
Integrating by parts:
\begin{align}
\frac{\partial \lambda}{\partial \nu}&=\int_0^{\pi}V(\theta){\rm d}\sin\theta=-\int_0^{\pi}\sin\theta V^{\prime}(\theta){\rm d}\theta\nonumber \\
&=-\int_0^{\pi}\sin\theta\left[\frac{2m}{\sin\theta}U(\theta)-2\xi\sin\theta U(\theta)+2\lambda W(\theta)\right]{\rm d}
\theta\nonumber\\
&=-\left[2m+2\lambda \frac{\partial \lambda}{\partial \xi}-2\xi+\int_0^{\pi}2\xi\cos^2\theta U(\theta){\rm d}\theta\right].
\end{align}
On the other hand from the angular KN equations, $W^{\prime}(\theta)=2\nu\cos\theta U(\theta)-2\lambda V(\theta)$ and integrating by parts:
\begin{align}
\int_0^{\pi}2\nu\cos^2\theta U(\theta){\rm d}\theta&=\int_0^{\pi}\cos\theta W^{\prime}(\theta){\rm d}\theta+\int_0^{\pi}2\lambda \cos\theta V(\theta){\rm d}\theta\nonumber \\
&=\cos\theta W(\theta)\Bigl|_0^{\pi}-\int_0^{\pi}W(\theta)(-\sin\theta){\rm d}\theta+2\lambda\frac{\partial\lambda}{\partial \nu} \nonumber\\
&=+\frac{\partial \lambda}{\partial \xi}+2\lambda \frac{\partial \lambda}{\partial \nu}.
\end{align}
Thus eqn.(\ref{eigenvalueeqn}) is proved.
\subsubsection{Necessary and sufficient conditions for bound states in the extreme Kerr-Newman spacetime}\label{extremeBound}
The following theorem was proven in \cite{HSchmidt}:
\begin{theorem}
A point $\omega\in \Bbb{R}$ is an energy eigenvalue of the Dirac equation in extreme Kerr-Newman geometry (i.e. when $\Delta^{KN}$ has a double root and $ M=\sqrt{a^2+e^2}$) if and only if there exists an eigenvalue $\lambda$ of the angular KN equation such that:
\begin{align}
\omega&=-\frac{-ma+eq M}{a^2+M^2},\;\;\mu^2-\omega^2>0,\;\;\lambda^2+M^2\mu^2-(2M\omega+eq)^2>\frac{1}{4},\label{fbound1}\\
&(r_{+}=r_{-}=M=\sqrt{a^2+e^2})\nonumber\\
&\text{and\; either\;} \beta_1-\sigma\lambda=0,\;\;\alpha_1+\varkappa=0\;\; \text{or} \;\; 1+N+\alpha_1+\varkappa=0\label{fbounds3},
\end{align}
where
\begin{align}
\beta_1&:=\frac{\mu(M|\omega|-\sigma(2M\omega+eq))}{\sqrt{\mu^2-\omega^2}},\;\sigma:={\rm{sign}}\omega\\
\alpha_1&:=\frac{\mu^2M-\omega(2M\omega+eq)}{\sqrt{\mu^2-\omega^2}},\;\varkappa:=\sqrt{\lambda^2+M^2\mu^2-(2M\omega+eq)^2},\;N\in\Bbb{Z}_{\geq0}.
\end{align}
\end{theorem}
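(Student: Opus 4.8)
The plan is to re-use the separation scaffolding of \S\ref{statheradiaxlambda} but to exploit that the double root $r_+=r_-=M$ of $\Delta^{KN}$ merges the two regular singular points of the radial equation (\ref{raddira}) into a single rank-one \emph{irregular} singular point at $r=M$, turning the bound-state problem into a confluent (Whittaker/Kummer) spectral problem rather than a GHE one. Throughout, $\lambda$ is supplied by the self-adjoint angular operator of \S\ref{statheradiaxlambda}, so real eigenvalues $\lambda$ exist and the angular factor is automatically square integrable; the entire characterisation is therefore carried by the radial factor.

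First I would set $\Delta^{KN}=(r-M)^2$ in (\ref{raddira}) and read off the two asymptotic regimes. At $r\to\infty$ the formal solutions behave like $e^{\pm i\sqrt{\omega^2-\mu^2}\,r}r^{(\cdots)}$: for $\omega^2\ge\mu^2$ both are oscillatory and fail to be square integrable against the weight $(r^2+a^2)/\Delta^{KN}$, while for $\mu^2-\omega^2>0$ exactly one branch decays exponentially. This yields the necessary condition $\mu^2-\omega^2>0$ and pins the eigenstate to the decaying branch, whose exponent is $\alpha_1$ with the sign fixed by $\sigma=\mathrm{sign}\,\omega$. Near $r=M$, writing $K(r)=(r^2+a^2)\omega-ma+eqr$, the coefficient $K^2/(r-M)^2$ is a genuine double pole and the formal solutions carry the unit-modulus factor $e^{\pm iK(M)/(r-M)}$; combined with the weight $\sim(r-M)^{-2}$ no linear combination lies in $L^2$ near $M$. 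Since $K(M)\in\mathbb{R}$, integrability at the degenerate horizon forces $K(M)=0$, i.e. $\omega=(ma-eqM)/(a^2+M^2)$, which is exactly the first stated condition. With $K(M)=0$ the double pole cancels, $r=M$ becomes a regular singular point with exponents $\tfrac12\pm\varkappa$, $\varkappa=\sqrt{\lambda^2+M^2\mu^2-(2M\omega+eq)^2}$, and $L^2$ integrability against $(r-M)^{-2}$ selects the larger exponent and demands $\Re\varkappa>\tfrac12$, i.e. $\lambda^2+M^2\mu^2-(2M\omega+eq)^2>\tfrac14$.

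Having imposed $K(M)=0$, the radial equation has one regular singular point at $M$, an irregular one at $\infty$, and the apparent singularity at $r_3=i\lambda/\mu$. I would reduce it, by a scaling of $r-M$ and an $F$-homotopic transformation stripping the exponents, to a Whittaker equation of the type (\ref{InverseWhittaker}) whose parameters encode $\alpha_1,\beta_1,\varkappa$ and a Whittaker index $\kappa$. A genuine bound state is the solution that is simultaneously $L^2$ at $M$ (the subdominant Whittaker/Kummer solution there) and decaying at $\infty$; equating the two is an $M$-to-$\infty$ connection problem whose solvability is controlled by the Gamma-function coefficients appearing when $W_{\kappa,m}$ is expanded in terms of $M_{\kappa,\pm m}$. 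Quantisation then arises in two distinct ways: either a connection coefficient vanishes, producing $\beta_1-\sigma\lambda=0$ together with $\alpha_1+\varkappa=0$, or the confluent series terminates into a Kummer polynomial at a pole of $\Gamma$, producing $1+N+\alpha_1+\varkappa=0$ with $N\in\mathbb{Z}_{\ge0}$. The converse follows by reversing the construction: when all conditions hold the selected solution is manifestly $L^2$ at both ends and, paired with the angular eigenfunction, gives a normalisable eigenstate.

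The hard part will be the final step --- setting up the $M$-to-$\infty$ connection problem cleanly and extracting the precise ``either/or'' dichotomy in the stated variables, since the two alternatives are genuinely different mechanisms (a vanishing connection coefficient versus a terminating series) and reconciling the $\sigma$- and $|\omega|$-dependence of $\beta_1$ with the orientation of the decaying branch at infinity demands careful bookkeeping of the Whittaker parameters. I would lean on the confluent-equation machinery already assembled in Appendix \ref{RiemannHilbertRep} to organise that computation.
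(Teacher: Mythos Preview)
The paper does not actually prove this theorem: it is quoted from Schmid \cite{HSchmidt} and stated without proof in Appendix~\ref{extremeBound}, followed only by the short remark that $\alpha_1+\varkappa=0$ forces $\omega(eq+M\omega)>0$. So there is no ``paper's own proof'' to compare against; what you have written is an independent sketch of how one would establish Schmid's result.

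As a sketch, your outline is broadly sound and follows the natural route: the extreme limit collapses the two regular singularities of the radial equation into one, the $L^2$ condition at infinity forces $\mu^2>\omega^2$, integrability at the degenerate horizon against the weight $(r^2+a^2)/(r-M)^2$ forces $K(M)=0$ (giving the stated value of $\omega$), and the residual indicial analysis at $r=M$ produces the exponents $\tfrac12\pm\varkappa$ and the inequality on $\varkappa$. This is essentially the architecture of Schmid's argument. Where your proposal remains incomplete is exactly where you flag it: the derivation of the ``either/or'' quantisation conditions. Your description of these as ``vanishing connection coefficient versus terminating series'' is heuristic; in Schmid's treatment the conditions emerge from a careful Frobenius-type analysis of the first-order radial system at $r=M$ combined with the decay requirement at infinity, and the two alternatives correspond to whether the subdominant solution at $M$ already matches the decaying solution at infinity (the $\alpha_1+\varkappa=0$ case, which additionally requires $\beta_1-\sigma\lambda=0$) or whether a polynomial truncation intervenes (the $1+N+\alpha_1+\varkappa=0$ case). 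Your invocation of Whittaker/Kummer machinery from Appendix~\ref{RiemannHilbertRep} is a reasonable organising principle, but that appendix treats a different linear problem (the Lax pair for $\mathrm{P_{III}}$), so you would need to set up the confluent reduction from scratch rather than borrow it.
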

Now the condition $\alpha_1+\varkappa=0$ can be written as follows:
\begin{equation}
\alpha_1+\varkappa=0\Leftrightarrow M\sqrt{\mu^2-\omega^2}-\frac{\omega(eq+M\omega)}{\sqrt{\mu^2-\omega^2}}+
\sqrt{\lambda^2+M^2\mu^2-(2M\omega+eq)^2}=0.
\end{equation}
Thus we conclude that the following inequality holds:
\begin{equation}
\omega(eq+M\omega)>0.
\label{fbound4}
\end{equation}

\end{document}